\documentclass[11pt,a4paper]{article}
\usepackage[left=1in,top=1in,right=1in,bottom=1in]{geometry}

\usepackage[UKenglish]{babel}
\usepackage[utf8]{inputenc}
\usepackage[T1]{fontenc}
\usepackage[authoryear]{natbib} 
\usepackage{amsmath,amssymb}
\usepackage[thmmarks, amsmath, thref, amsthm]{ntheorem}
\usepackage{mathtools}
\usepackage{enumerate}
\usepackage{subcaption}
\captionsetup{font=small, labelfont={bf}}
\usepackage[hidelinks]{hyperref} 
\usepackage{bm} 
\usepackage{graphicx}
\usepackage[ruled]{algorithm}
\usepackage[noend]{algpseudocode}  

\newtheorem{theorem}{Theorem}

\newtheorem{observation}[theorem]{Observation}
\newtheorem{lemma}[theorem]{Lemma}

\newtheorem{proposition}[theorem]{Proposition}
\newtheorem{example}{Example}

\theoremstyle{definition}
\newtheorem{definition}{Definition}

\newcommand{\bids}{{\mathcal{B}}}
\newcommand{\negbids}{{\mathcal{B}^-}}
\newcommand{\posbids}{{\mathcal{B}^+}}
\newcommand{\bid}{{\bm{b}}}
\newcommand{\mb}{\bm{m}}
\newcommand{\eb}{\bm{e}}
\newcommand{\pb}{\bm{p}}
\newcommand{\qb}{\bm{q}}

\newcommand{\rb}{\bm{r}}
\newcommand{\bs}{\bm{s}}
\newcommand{\tb}{\bm{t}}
\newcommand{\vb}{\bm{v}}
\newcommand{\wb}{\bm{w}}
\newcommand{\xb}{\bm{x}}
\newcommand{\yb}{\bm{y}}
\newcommand{\zb}{\bm{z}}
\newcommand{\Z}{\mathbb{Z}}
\newcommand{\R}{\mathbb{R}}
\newcommand{\conp}{\mbox{{\normalfont coNP}}}
\newcommand{\validbids}{{\sc Valid Bids}}
\newcommand{\threecnf}{{\sc 3-CNF Satisfiability}}
\newcommand{\mnd}{{\sc More Negatives Dominated}}
\newcommand{\true}{\mbox{\sc true}}
\newcommand{\false}{\mbox{\sc false}}

\newcommand{\uep}{u^{2[\varepsilon,i]}}
\newcommand{\uagg}{u^{\{1,2\}}}
\renewcommand{\vec}[1]{\bm{#1}}
\newcommand{\Walrasian}{competitive}
\newcommand{\linkgood}{link good}
\newcommand{\linkgoods}{link goods}
\newcommand{\keylist}{demand cluster}
\newcommand{\keylists}{demand clusters}


\DeclareMathOperator*{\argmin}{arg\,min}
\DeclareMathOperator*{\argmax}{arg\,max}

\hyphenation{Shift-Project-Unshift}
\hyphenation{Non-Marginals}
\hyphenation{Unambiguous-Marginals}
\hyphenation{Find-Params}

\begin{document}

\title{Solving Strong-Substitutes Product-Mix Auctions}
\author{Elizabeth Baldwin\thanks{Dept.\ of Economics, Oxford University, UK,
\href{mailto:elizabeth.baldwin@economics.ox.ac.uk}{elizabeth.baldwin@economics.ox.ac.uk}.}
\and
Paul W. Goldberg\thanks{Dept.\ of Computer Science, Oxford University, UK,
\href{mailto:paul.goldberg@cs.ox.ac.uk}{paul.goldberg@cs.ox.ac.uk}.}
\and
Paul Klemperer\thanks{Dept.\ of Economics, Oxford University, UK,
\href{mailto:paul.klemperer@nuffield.ox.ac.uk}{paul.klemperer@nuffield.ox.ac.uk}.}
\and
Edwin Lock\thanks{Dept.\ of Computer Science, Oxford University, UK,
\href{mailto:edwin.lock@cs.ox.ac.uk}{edwin.lock@cs.ox.ac.uk}.}
}
\date{5th July 2023}

\maketitle

\begin{abstract}
This paper develops algorithms to solve strong-substitutes product-mix auctions: it finds \Walrasian{} equilibrium prices and quantities for agents who use this auction's bidding language to truthfully express their strong-substitutes preferences over an arbitrary number of goods, each of which is available in multiple discrete units. Our use of the bidding language, and the information it provides, contrasts with existing algorithms that rely on access to a valuation or demand oracle.

We compute market-clearing prices using algorithms that apply existing submodular minimisation methods. Allocating the supply among the bidders at these prices then requires solving a novel constrained matching problem. Our algorithm iteratively simplifies the allocation problem, perturbing bids and prices in a way that resolves tie-breaking choices created by bids that can be accepted on more than one good. We provide practical running time bounds on both price-finding and allocation, and illustrate experimentally that our allocation mechanism is practical.
\end{abstract}

\begin{paragraph}{Keywords:}
bidding language, product-mix auction, competitive equilibrium, Walrasian
equilibrium, convex optimisation, strong substitutes, submodular minimisation
\end{paragraph}

\section{Introduction}
\label{sec:intro}
This paper develops algorithms that solve product-mix auctions in which participants can make bids that represent any strong-substitutes preferences for an arbitrary number of distinct goods. (These preferences are also known, in other literatures, as $M^\natural$-concave, as matroidal, and as well-layered maps). It thus allows bidders to express more general preferences than could previously be permitted in these auctions, and finds competitive equilibrium prices and quantities consistent with these, and the auctioneer's preferences.

Importantly, our algorithms for finding equilibrium differ from existing ones in that they directly use the information that the product-mix auction `language' provides. This information is in a very different form from the information  provided by a valuation or demand oracle. This creates additional complexities, as well as simplifications which we can exploit. However, the language is conceptually simple, and easy for bidders to use in a (product-mix) auction.

The \emph{product-mix auction} was developed in 2007-8 for the Bank of England to provide liquidity to financial institutions by auctioning loans to them \citep{Kle2008}. It is now used at least monthly by the Bank, and more often when institutions are more likely to be under stress. (After the 2016 vote for `Brexit', and starting again in March 2019, for instance, the auction was run weekly.)

The original implementation of the auction allowed bidders to each submit a list of bids, where each bid has $n+1$ elements: a price for each of the $n$ goods available, and a total quantity of goods sought by that bid.\footnote{In the Bank of England's auction, the bidders are commercial banks, etc., each good is a loan secured against one of $n$ different specified qualities of collateral (so the prices are interest rates), and the quantity is the amount of the loan (in \pounds).}
The auction sets a uniform price for each good (i.e.~every recipient of any particular good pays the same price per unit for that good), and gives each bid an allocation that maximises that bid's `utility', assuming preferences with quasilinear utilities. That is, each bid is allocated its desired quantity of the good on which its price strictly exceeds the auction's price by most, if there is a unique such good, and is allocated nothing if all its prices are strictly below the auction's corresponding prices. (So if a bid only wants one particular good, it simply sets prices of zero for all the other $(n-1)$ goods.) A bid that creates a tie (i.e.~two or more of its prices exceed the auction's corresponding prices by most, or none of its prices exceed, but at least one equals, the auction's corresponding prices) can be allocated in any way consistent with equilibrium (see below). Figure~\ref{fig:bid-acceptance} illustrates how the good that a bid is allocated depends on the bid's relative position to the price vector. In particular, we note that individual bids $(b_1, \ldots, b_n ; +1)$ that seek exactly one item of a good can be interpreted as a unit-demand bidder with valuation vector $\bid = (b_1, \ldots, b_n)$ and quasilinear utilities. We motivate the bidding language in the following example.

\begin{figure}
	\centering
	\includegraphics{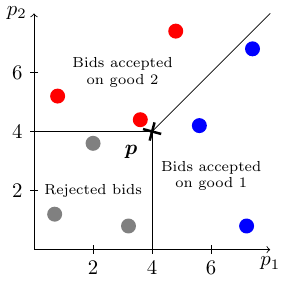}
	\caption{Price vectors divide $\R^n$ into regions that specify which good is allocated to bids lying in these region. This example works in the setting with two goods. The price vector $\pb = (4,4)$ (marked by a cross) divides $\R^2$ into three regions, each of which labelled according to the good that bids in this region are allocated.}
	\label{fig:bid-acceptance}
\end{figure}

\begin{example}[Figure~\ref{fig:Alice}]
\label{example:Alice}
Alice participates in an auction held with two goods, say apples (good 1) and bananas (good 2). For her breakfast, she requires a single item of fruit and is willing to pay up to \pounds 6 for either an apple or a banana. This is expressed by the bid $\bid \coloneqq (6,6;+1)$; the last element in the bid denotes the quantity of goods sought. Suppose the auction's prices are set at $\pb = (p_1, p_2)$. If at least one good is priced lower than the respective bid entry, she will demand the good $i \in \{1,2\}$ that maximises utility $b_i - p_i$. Otherwise, if both apples and bananas are priced higher than her valuation of~\pounds 6, she demands neither.
In addition, as she is fond of bananas, Alice wishes to pick up a banana as a \emph{second} fruit item, but only if the price of a banana does not exceed~\pounds 4. This is expressed by the bid $\bid' \coloneqq (0,4;+1)$.

To illustrate Alice's demand, consider the following two possible auction prices. If the auction prices are set at $\pb = (1,3)$, then Alice demands one apple and one banana: $\bid$ demands an apple, as its utility for an apple or a banana is 5 and 3, respectively, while $\bid'$ demands a banana, as its price is less than~\pounds 4. On other hand, if auction prices are $\pb = (6,5)$, Alice demands only one banana: $\bid$ demands a banana and $\bid'$ demands nothing, as the price of a banana is too high. Alice's two bids $\bid$ and $\bid'$, together with the demand correspondence they induce, are shown in Figure~\ref{fig:Alice}.
\end{example}

The auction's prices are chosen to maximise the sum of the bids' and auctioneer's welfare -- that is, it finds competitive equilibrium prices and quantities.%
\footnote{Specifically, the auction finds the lowest such price vector. Since the bids automatically express ``strong substitutes'' preferences for all bidders (see~{{\citep{BK-OS}}}), there exist equilibrium price vectors, and also a unique one among them at which every good’s price is lowest.}
Moreover, the auctioneer expresses her preferences about which goods to allocate in the form of supply functions which can themselves be equivalently represented as (and thus converted into) a list of bids of the kind described above.\footnote{See Appendix 1E of \citep{Kle2018} for how to convert supply functions into bid lists. (The bids that the auctioneer's supply functions would be converted into would ensure that it \emph{sells more} units on a good when prices are high, analogous to a buyer \emph{buying fewer} units in this case.) Although the auctioneer’s preferences \emph{could} equivalently have been represented as a list\ of bids of the kind made by the bidders, describing them simply as two-dimensional graphs of `supply schedules' was an important feature of the auction design: participants' ability to express their preferences in the ways that are most natural for them is crucial to getting an auction accepted for practical use, to getting bidders to participate, and to the auction working efficiently.

The Bank of England's original program restricted to $n=2$. Since 2014 its program permits much larger $n$ (it is currently being run with $n=3$), and it also allows richer forms of preferences to be expressed by the auctioneer (but not by the bidders, whose preference expression has, by contrast, been restricted in recent auctions).

A variant that allowed for bidders' budget constraints (hence non-quasilinear preferences) was programmed for the Government of Iceland in 2015-16. For additional information about the strong-substitutes product-mix auction and other variants of the product-mix auction that allow for divisible goods and budget constraints, etc., we refer to \url{http://pma.nuff.ox.ac.uk} and \citep{BK-OS,Kle2008,Kle2010,Kle2018}.}

\begin{figure}
	\centering
	\begin{subfigure}[b]{0.49\textwidth}
		\centering
		\includegraphics{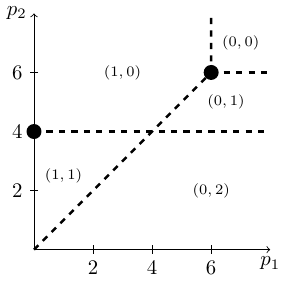}
		\caption{Alice's demand}
		\label{fig:Alice}
	\end{subfigure}
	\begin{subfigure}[b]{0.49\textwidth}
		\centering
		\includegraphics{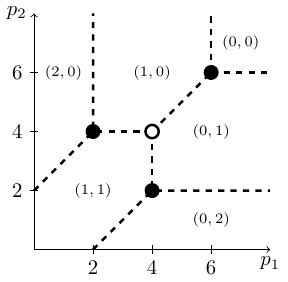}
		\caption{Bob's demand}
		\label{fig:Bob}
	\end{subfigure}
	\includegraphics{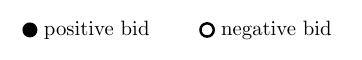}
	\caption{Examples of strong-substitutes demand correspondences on two goods \{1,2\} belonging to two bidders, Alice and Bob. Alice's demand is specified by bid list $\bids = \{ (6,6;+1), (0,4;+1) \}$ and Bob's bid list is $\bids' = \{(2,4;+1), (4,2;+1), (4,4;-1), (6,6;+1) \}$. Positive and negative bids are depicted as solid and hollow circles, respectively. Price space is divided into regions corresponding to demanded bundles $(x_1,x_2)$, with $x_i$ denoting the number of items of good $i$. At $\pb = (4,4)$, Alice demands bundles $\{(1,0), (0,1), (1,1), (0,2) \}$ and Bob demands bundles $\{(1,0), (0,1), (1,1) \}$, the discrete convex hulls of bundles demanded in the regions surrounding $\pb$.}
	\label{fig:ex1}
\end{figure}

\paragraph{Positive and negative bids.}
In versions of the product-mix auction thus far implemented, all bids are for positive quantities of goods. In the case with transferable utilities, the market-clearing prices can then be found by solving straightforward linear programs, and finding an allocation of the auctioneer's supply to the individual bidders is similarly straightforward. However, some strong-substitutes%
\footnote{\label{footnote3}
Strong-substitutes preferences are those that would be still be gross substitutes preferences if we treated every unit of every good as a separate good. Such preferences have many attractive properties; they mean, for example, that if the price of any one good increases, and the demand for it decreases, then the demand for all other goods can increase by at most the amount of that decrease. Strong substitutability is equivalent to $M^\natural$-concavity {\citep{ST15}} in our setting with valuations defined on non-negative and bounded domains.}
preferences (over bundles of non-negative quantities of goods) can only be expressed by using lists of bids for both positive \emph{and negative} quantities,%
\footnote{The ability to express such preferences was not thought necessary in the Bank of England's application, but might be useful in closely related environments (see \citep{Kle2018}).}
and it has been shown that \emph{any} strong-substitutes preferences can be represented using lists of positive and negative bids.%
\footnote{%
\label{footnote:expressivity}%
\Citet{Kle2010} stated this result for the case of multiple units of each of two goods. See \citep{BK-OS} and \citep{LinTran2017} for working papers proving (different) generalisations of this result. (As cited by \citet{LinTran2017}, a prior proof of the result itself was presented at a workshop; see \citep{BGK-slides}.)}
The rules for accepting, or not accepting, negative bids are identical to those for positive bids: if a negative bid is accepted at some prices, it is allocated a negative quantity of the good it demands, (partially) cancelling the allocated quantities of positive bids that are accepted at the same prices. 

To illustrate why negative bids are useful, we consider the setting with two distinct goods $1$ and~$2$ in which an agent has strong-substitutes valuation $u$ and is interested in at most 2 units. In other words, the agent only has a positive value for bundles $(0,1), (1,0), (1,1), (0,2)$ and $(2,0)$, where the 2-dimensional vectors $(x_1, x_2)$ denote bundles containing $x_1$ units of good 1 and $x_2$ units of good 2. It is known \citep{Shioura17} that the strong-substitutes condition is equivalent to M$^\natural$-concavity which, in our setting, is equivalent to discrete concavity together with the following three additional conditions:
\begin{enumerate}[i)]
    \item \label{item:negative-bids-i} $u \left ( (1,1)\right ) \leq u \left ( (0,1) \right ) + u \left ( (1,0) \right )$,
    \item \label{item:negative-bids-ii} $u \left ( (2, 0)\right ) - u \left ( (1, 0)\right ) \leq u \left ( (1,1) \right ) - u \left ( (0,1) \right )$.
    \item \label{item:negative-bids-iii} $u \left ( (0, 2)\right ) - u \left ( (0, 1)\right ) \leq u \left ( (1,1) \right ) - u \left ( (1,0) \right )$.
\end{enumerate}
It is easy to see (by infinitesimally and independently perturbing bundle values) that the generic case has strict inequalities in all three conditions. Intuitively, a strict inequality in the first condition implies that a single unit of good 1 is valued strictly less if the agent also has a unit of 2 than if he does not, reflecting strict substitutability as distinct from complementarity or independence between the two goods. A strict inequality in the second or third condition implies that the marginal utility of a good is reduced more by receiving an additional unit of the same good than by receiving a unit of a different good; `more variety' is valuable.
In contrast, it is easy to check that any demand correspondence generated only by positive bids is non-generic, in that at least one of the three conditions must be an equality. Thus a demand correspondence such as the one illustrated in Example~\ref{example:Bob} and Figure~\ref{fig:Bob}, in which all three conditions are strict inequalities, cannot be represented by positive bids only. We refer to \citep[p.~15]{Kle2008} and \citep[Appendix IC]{Kle2018} for further details and examples of the usefulness of negative bids.

\begin{example}[Figure~\ref{fig:Bob}]
\label{example:Bob}
A combination of positive and negative bids is required to express Bob's demand correspondence depicted in Figure~\ref{fig:Bob}, which divides price space into six demand regions for bundles with at most two units. The valuation function $u$ is given by $u((0,1)) = u((1,0)) = 6$, $u((0,2)) = u((2,0)) = 8$ and $u((1,1)) = 10$. The corresponding list of bids is given by $\bids' = \{(2,4;+1), (4,2;+1), (4,4;-1), (6,6;+1) \}$.
\end{example}

\paragraph{Our contributions.}
This paper addresses the computational challenges of determining uniform
component-wise minimal market-clearing prices (at which total market demands
equal the quantities of each good that are available) and allocating a fixed
collection of the goods at these prices, to bidders whose demands are defined
by lists of positive and negative bids that express strong-substitute
preferences. Our algorithms exploit the fact that the use of the product-mix bidding language allows for the efficient computation of a demanded bundle as well as of the indirect utility derived at any given prices.

Section~\ref{sec:prelims} introduces the product-mix auction's strong-substitutes bidding language in more detail, and develops some of its properties. A first contribution of our paper is to show that it is \conp-complete to determine whether a given list of positive and negative bids constitute a valid demand correspondence. However, when the number of goods, or the number of negative bids, is bounded by a constant, we present a polynomial-time algorithm for checking validity.

In Section~\ref{sec:price-finding-overview}, we consider algorithms for finding component-wise minimal market-clearing prices that have practical running time bounds. We adopt a discrete steepest descent method from the discrete optimisation literature {\citetext{\citealp[Chapter 10 in][]{murota-book}, \citealp{Shioura17}}} that employs submodular minimisation to find the steepest descent directions and discuss two techniques (one novel, and one suggested by \citet{Shioura17}) that reduce the number of iterations required by taking long steps in the steepest descent direction. Full details are provided in Appendix~\ref{apx:price-finding}. We show that the resulting algorithms run in time polynomial in the input size of the bid lists and, since submodular minimisation is rapid in practice \citep{CLSW}, we expect our approach to have a fast running time. Indeed, preliminary experiments (given in the online companion to this paper) suggest our method performs well in practice.

Our main contribution, given in Section~\ref{sec:main}, is an efficient polynomial-time algorithm that allocates the auctioneer's chosen supply among the bidders at given market-clearing prices. The difficulty in developing this lies in handling bids whose utility is maximised on more than one alternative good (or whose utility from its most-preferred good is exactly zero), as tie-breaking choices interact with each other. This gives rise to a novel matching problem. Our algorithm proceeds by iteratively simplifying the allocation problem at hand, allocating unambiguous bids and perturbing bids and prices in a way that resolves a subset of the tie-breaks and yields a simplified allocation problem. Progress is measured in terms of reductions in the number of edges of a multigraph associated with the allocation problems.

\paragraph{Software implementations.}
Our price-finding and allocation algorithms are conceptually simple. They are also straightforward to implement, which has allowed us to develop two practical implementations in Haskell and Python; these can be found at \url{http://pma.nuff.ox.ac.uk} and \url{https://github.com/edwinlock/product-mix}, respectively. Some effort was made to optimise for speed in the Python implementation by exploiting fast matrix operations provided by the NumPy package \citep{Numpy}. Furthermore, in an effort to implement submodular minimisation efficiently, the Fujishige-Wolfe algorithm \citep{Chakrabarty2014} was implemented in combination with a memoization technique to reduce the number of submodular function queries.

In order to evaluate the practical running time of our allocation algorithm, we used our Python implementation to run experiments on auctions with various numbers of goods, bidders and bids. The results of these experiments, given in the online companion to this paper, demonstrate that our allocation algorithm is efficient in practice for realistic auction sizes.

\paragraph{Related work.}
Our work continues a long literature, spanning economics and discrete convex analysis, on `gross' and `strong substitutes' \citep{KC82, MS09} and `$M^\natural$-concavity' \citep{Murota.Shioura1999}. Whereas the discrete convex analysis literature generally allows for multiple units of each good, much focus in economics has been on the case in which there is only one unit of each good; \citet{MS09} showed that `strong substitutes' provide the suitable generalisation of gross substitutes to the multi-unit case, retaining existence of equilibrium while insisting that any two units of the same good should have the same price.

There is now a substantial body of related work on the two algorithmic problems we address in this paper: finding market-clearing prices and computing an equilibrium allocation of supply to the buyers. 

Market-clearing prices are commonly found either by performing an iterative discrete steepest-descent search, or by solving a convex optimisation problem with a cutting plane method.

The method of steepest descent goes back to \citet{Aus06, KC82, MSY2013} and \citet{MSY2016}; see \citep[Chapter~11]{murota-book}, \citep{Murota2016} and \citep{PLsurvey} for recent surveys. However, while the steepest-descent methods in the above references run in pseudo-polynomial time (in the valuation and demand oracle settings), we describe two steepest descent algorithms based on \citep{Shioura17} that take longer steps by exploiting the bid representation of bidder demand, and show that they have polynomial running time in the size of the bid lists. (An algorithm is \textit{polynomial} if its running time is polynomial in the input size, and \textit{pseudo-polynomial} if its running time is required to be polynomial only in numeric values within the input. These differ when numerical values are encoded in binary.)

\Citet{baldwin2022strong} develop a conceptually similar descending-price algorithm using DC (difference of convex functions) programming.%
\footnote{This exploits a result that market-clearing prices minimise the difference between two linear programs (one for the positive bids and one for the negative bids).}
Extensive computational experiments in \citep{baldwin2022strong} suggest that this algorithm is competitive with the steepest descent algorithm we describe, and neither algorithm consistently outperforms the other. However, no polynomial running time bound is known for the algorithm of \citep{baldwin2022strong}, whereas we provide a polynomial bound (in the bid list representation size) for our method. 

\Citet{PLW} take the alternative approach of finding market-clearing prices by solving a convex optimisation problem. Their solution builds on an improved cutting plane method of \citet{LSW} to achieve a polynomial-time algorithm. However, to the best of our knowledge, the cutting plane method has not yet been implemented and is computationally expensive in practice; moreover, it is not guaranteed to find component-wise minimal prices.

The harder algorithmic problem is finding a valid allocation of supply to buyers. Most previous studies have failed to address this problem in the case when multiple units of each good are available. The two existing algorithms for finding multi-unit allocations of which we are aware, in \citep{MT2001} and \citep{PLW}, both work in the valuation oracle model. That is, their algorithms rely on oracle access to the valuation function of each bidder. In order to apply these algorithms in our bidding language setting, we would require the computational overhead of simulating a valuation oracle. Section~\ref{sec:simulating-valuation-oracle} describes a method for this simulation which takes time $O(n^2|\bids|^3 + n |\bids|T(n))$ per valuation query (or $O(n^2|\bids|^2 \log M + n |\bids|T(n))$ for an alternative variant), in which $n$ is the number of goods, $|\bids|$ is the number of bids, $M$ is an upper bound on the bid entries, and $T(n)$ is the time it takes to minimise an $n$-dimensional submodular set function.

The allocation algorithm of \citet{MT2001}, also described in \citep[Chapter 11]{murota-book}, works by reducing the allocation problem to an $M$-convex network flow problem using access to bidders' valuation functions. Following \citep[Chapters 10 and 11]{murota-book} and \citep{Shioura1998} (for a polynomial-time algorithm for minimising $M$-convex functions), this approach requires $O((nm)^6)$ queries to the valuation function, where $m$ is the number of bidders. With the additional overhead from simulating the valuation oracle this leads to a running time of $O(m^6n^8|\bids|^3 + m^6n^7 |\bids|T(n))$.

\Citet{PLW} provide a different algorithm, also in the valuation oracle setting, that solves the allocation problem only if the target bundle is \textit{uniquely} demanded at some prices. Such prices are guaranteed in the single-unit case (i.e.~when there is only one unit of every good) but need not exist in the multi-unit case. (Consider, for instance, the aggregate demand of Alice and Bob in Figure~\ref{fig:ex1}, also shown in Figure~\ref{fig:aggregate} of Section~\ref{sec:price-finding-overview}. The bundle $(1,1)$ is not uniquely demanded at any prices.) However, if the target bundle is uniquely demanded at some prices, \citep{PLW}'s algorithm makes $\tilde{O}(mn+n^3)$ queries to the valuation functions to allocate the target bundle. With the simulation of valuation queries, this leads to running-time complexity $\tilde{O}((mn^2 + n^4) (n |\bids|^3 + |\bids| T(n)))$.

In contrast to the algorithms of \citep{MT2001} and \citep{PLW}, our allocation algorithm directly exploits the specific representation provided by the product-mix auction's bidding language. Theorem~{\ref{thm:main}} establishes that our algorithm takes time $O(mn^3 \alpha(n)|\bids| + m n^2 T(n))$. As the near-constant Ackermann function~$\alpha$ satisfies $\alpha(n) \leq 4$ for any realistic number of goods $n$,%
\footnote{We have $\alpha(n) \leq 4$ for any $n \leq 2^{2^{2^{2^{16}}}}$.} 
we see that our algorithm outperforms those of \citep{MT2001} and \citep{PLW} (when applicable) in the product-mix auction setting. Moreover, our algorithm has been implemented and is efficient in practice.

We note that the valuation oracle setting of \citep{MSY2013, MSY2016} and others provides no straightforward way to compute a demanded bundle or the indirect utility function at given prices. In contrast, our bidding-language setting is somewhat analogous to, but more informative than, the demand oracle setting used in \citep{Aus06}, for example, which presupposes access to an oracle returning a demanded bundle at given prices.

This paper assumes that bidders are able to construct and submit bid lists that represent their strong-substitutes preferences. When the number of goods is large or the preferences are otherwise complex, this may present difficulties for bidders. \citet{Goldberg2022learningb} address this issue by presenting algorithms that generate bid lists on behalf of bidders using only access to a demand or valuation oracle to elicit bidders' preferences.

\section{Preliminaries}
\label{sec:prelims}
We denote $[n] \coloneqq \{1, \ldots, n\}$ and $[n]_0 \coloneqq \{0, \ldots, n\}$. In our
auction model, there are $n$ distinct goods $[n]$; a single copy of a good is an \emph{item}. A \emph{bundle} of goods, typically denoted by $\xb, \yb$ or $\zb$ in this paper, is a vector in $\Z^n$ whose $i$-th entry denotes the number of items of good $i$. The \emph{target bundle} $\tb$ is a bundle the auctioneer wants to allocate amongst
the bidders. Vectors $\pb, \qb \in \mathbb{R}^n$ typically denote vectors of
prices, with a price entry for each of the $n$ goods. We write $\pb \leq \qb$
when the inequality holds component-wise. It is often convenient to regard a
rejected bid as being accepted on a notional \emph{reject good} for which bids
and prices are always zero. Letting the reject good be 0, the set of goods is
then $[n]_0$. In this setting, we identify bundles and prices with the
$n+1$-dimensional vectors obtained by adding a $0$-th entry of value 0.

A \emph{valuation} $u$ is a function that maps (non-negative) bundles to non-negative real numbers. The valuations we consider are concave-extensible and have bounded effective domain.%
\footnote{\label{footnote:convex-extensible}A function is concave-extensible if its concave closure agrees with the function at every integral point. See \citep{ST15} for details, and how this assumption is natural for valuation functions.}
Bidders have \textit{quasilinear utilities}, i.e.~the utility derived from bundle $\xb$ at price $\pb$ by a bidder with valuation $u$ is given by
\begin{equation}\label{eq:utility}
	u(\xb) - \pb \cdot \xb.
\end{equation}
Any valuation $u$ is associated with a demand correspondence $D_u$ that maps prices $\pb$ to the set of bundles that maximise \eqref{eq:utility}. We omit the subscript $u$ when it is clear from context.

For any subset $X \subseteq [n]$, $\eb^X$ denotes the characteristic vector of $X$, i.e.~an $n$-dimensional vector whose $i$-th entry is 1 if $i \in X$ and 0 otherwise. Furthermore, $\eb^i$ denotes the vector whose $i$-th entry is~1 and other entries are~0.

A set function $f : 2^{[n]} \to \Z$ is \emph{submodular} if it satisfies $f(S) + f(T) \geq f(S \cup T) + f(S \cap T)$ for all $S,T \subseteq V$. Submodular function minimisation (SFM) is the task of finding a minimiser of a submodular function. It is well-known that the minimisers of submodular functions form a lattice; that is, if $S$ and $T$ are minimisers of $f$, then so are $S \cup T$ and $S \cap T$. SFM can be solved in polynomial time \citep{Iwata2008,Jiang2021, MCCormick2005}. For SFM that is efficient in practice, we refer to two SFM algorithms from the literature. The subgradient descent approach by~\citet{CLSW} determines a minimiser in time $O(nF^3\gamma \log n)$, while the Fujishige-Wolfe algorithm described by~\citet{Chakrabarty2014} takes time $O(F^2(n^2 \gamma + n^3))$, where~$F$ is an upper bound on the absolute value of $f$ and $\gamma$ denotes the time it takes to query $f$. Experimental results \citep{Chakrabarty2014} indicate that the running time of the Fujishige-Wolfe algorithm depends less on $F$ than suggested by the above bound.
For the price-finding algorithms described in Section~\ref{appendix:price-finding}, we require a subroutine that finds the \emph{inclusion-wise minimal minimiser}. Note that such a subroutine can be obtained by calling any SFM algorithm $n+1$ times. Indeed, let $S$ be a minimiser of $f$ and, for every $v \in S$, let $S_v$ be a submodular minimiser of the function~$f$ restricted to $[n] \setminus \{ v \}$. Then if $S_0$ denotes the minimal minimiser of $f$, we have $v \in S_0$ if and only if $f(S_v) > f(S)$, as the minimisers of a submodular function form a lattice. Hence, we obtain $S_0 \coloneqq \{v \in S \mid f(S_v) > f(S) \}$.

\subsection{Strong-substitutes valuation functions}
\label{sec:ss}
\label{sec:strong-substitutes}
In this paper, bidders have \emph{strong-substitutes} valuations. We review some basic properties of strong-substitutes (SS) valuations. A SS valuation $u$ divides price space into regions corresponding to bundles: any bundle $\xb$ has a region of prices in which $\xb$ is demanded, possibly along with other bundles (see Figure~\ref{fig:ex1}). It is known \citep[Theorem 11.16]{murota-book} that each such region is a convex lattice. When a region for some bundle $\xb$ has full dimensionality, then $\xb$ is uniquely demanded at any prices in the region's interior; we call this interior a \emph{unique demand region} (UDR).
Moreover, the \textit{Locus of Indifference Prices (LIP)} $\mathcal{L}_u$ comprising the set of prices at which demand is not unique consists of the union of $(n-1)$-dimensional linear pieces, or \textit{facets}, that separate the UDRs. In Figure~\ref{fig:ex1}, the LIPs associated with the demand correspondences of Alice and Bob in Examples~\ref{example:Alice} and \ref{example:Bob} are illustrated with dashed lines. We refer to Appendix~\ref{sec:missing-proofs} for further details on the LIP, including how every facet of the LIP can be endowed with a weight to capture the change in demand between any two price vectors.

\begin{definition}\label{def:subs}
A valuation $u$ is \emph{gross substitutes} if, for any prices $\pb' \geq \pb$ and any $\xb\in D_u(\pb)$, there exists $\xb'\in D_u(\pb')$ such that $x'_k\geq x_k$ for all $k$ such that $p_k=p'_k$. A valuation $u$ is \emph{strong substitutes (SS)} if, when we consider every unit of every good to be a separate good, $u$ is gross substitutes.%
\footnote{Formally, we can associate every multi-unit valuation function $u$ on a bounded effective domain with a single-unit valuation function $\hat{u}: \{0,1\}^N \to \R$ as follows. Let $U$ be an upper bound on the entries of the bundles in the domain of~$u$. We define the domain $\hat{u}$ as $N = \{ (i,\beta) \mid i \in [n], \beta \in [U]\}$, and let $\hat{u}(\hat{\xb}) = u(\xb)$, where $\xb$ is given by $x_i = \sum_{\beta=1}^U \hat{x}_{i,\beta}$. Valuation $u$ is strong substitutes if $\hat{u}$ is gross substitutes. See \citep{ST15} for details.}
\end{definition}

Importantly, {\citet[Theorem 4.1]{ST15}} establish that strong-substitutes is also equivalent to $M^\natural$-concavity {\citep{Murota.Shioura1999}} in our setting of valuations with bounded non-negative domains. (Both strong substitutes and $M^\natural$-concavity imply concave-extensibility, see {\citep[Theorem~9]{MS09}} and {\citep[Theorem~6.42]{murota-book} respectively.)}

While GS guarantees that a \Walrasian{} equilibrium exists in single-unit auction markets, the condition is not sufficient for the existence of such an equilibrium in the multi-unit case (\citet{ST15} give an example). SS represents a natural generalisation of single-unit GS that provides a general sufficient condition for the existence of an equilibrium. (See \citep{BK} and \citep{ST15} for discussion of the distinction between GS and~SS.)

\begin{definition}\label{def:iuf}
The {\em indirect utility function} $f_u$ of valuation $u$ maps a price vector
$\pb$ to the utility that a bidder with demand $D_u$ has for receiving her
preferred bundle at a given price vector~$\pb$ in the following way.
\begin{equation}
\label{eq:iuf}
    f_u(\pb) \coloneqq  \max_{\xb} (u(\xb) - \pb \cdot \xb) = u(\xb') - \pb \cdot \xb' \text{ with } \xb' \in D_u(\pb).
\end{equation}

\end{definition}
We note that $f_u(\pb)$ is convex, piecewise-linear and continuous for SS valuations $u$.

\subsection{Representing strong-substitutes valuations with weighted bids}
\label{sec:rep}
\label{sec:bids}
We describe how every strong-substitutes valuation function can be represented by a finite list $\bids$ of positive and negative bids. A \emph{bid} $\bid$ consists of an $(n+1)$-dimensional integral vector $(b_1, \ldots, b_n; b_{n+1})$, where the first $n$ components $b_1, \ldots, b_n$ of the vector denote the bid's `valuation' for the goods $[n]$ and the $(n+1)$-th component $b_{n+1}$ is interpreted as the bid's \textit{weight} (the quantity of goods sought by the bid, which may be positive or negative). For the reader's convenience, we also define the alternative notation $w(\bid) \coloneqq b_{n+1}$ to denote $\bid$'s weight. Moreover, we restrict ourselves to positive and negative unit weights $\pm 1$. This is without loss of generality, as any bid with a weight of $w(\bid) \in \Z$ can be represented by $|w(\bid)|$ unit bids with the same vector and of the same sign. Finally, when working with the notional reject good $0$ as described in Section~\ref{sec:prelims}, we identify the bid $\bid$ with the $(n+2)$-dimensional vector $(0, b_1, \ldots, b_n; b_{n+1})$ obtained by prepending a $0$-th entry of value 0 (and thus indexing from 0 instead of 1). This allows us to define a demand correspondence $D_\bids$ that maps each price vector $\pb$ to a set of bundles demanded at $\pb$, and an indirect utility function $f_\bids$ associated with $\bids$.

A bid $\bid$ \emph{demands good $i \in [n]_0$ at $\pb$} if the surplus $b_i - p_i$ at price $\pb$ is maximal, that is if we have $i \in \argmax_{i \in [n]_0} (b_i - p_i)$; recall that good 0 is the notional `reject' good and $b_0 = p_0 = 0$ by definition. We say that $\argmax_{i \in [n]_0} {(b_i - p_i)}$ is the set of \emph{demanded goods of $\bid$ at $\pb$}. A bid $\bid$ is \emph{marginal} (on the set of its demanded goods) at $\pb$ if $\bid$ demands more than one good at $\pb$. It is marginal \emph{on good $i$} at $\pb$ if it is marginal at $\pb$ and $i$ is in its set of demanded goods. Moreover, we say that price $\pb$ is marginal with respect to a given bid list $\bids$ if there are bids in $\bids$ that are marginal at~$\pb$, and non-marginal otherwise.
We illustrate these definitions using Example~\ref{example:Alice} (Figure~\ref{fig:Alice}). Recall that Alice's bid list $\bids$ consists of two positively-weighted bids $\bid \coloneqq (6,6;1)$ and $\bid' \coloneqq (0,4;1)$. At prices $\pb = (2,4)$, the bid $\bid$ is not marginal as it demands only good $1$ while $\bid'$ is marginal between the reject good and good $2$. At prices $\pb = (6,6)$, the bid $\bid$ is marginal between the reject good, good $1$, and $2$ while $\bid'$ is rejected (it only demands the reject good). Finally, at prices $\pb = (6,2)$, both bids are non-marginal, as they each demand only good 2. More generally, in Figure~{\ref{fig:Alice}}, all prices along the dashed lines are marginal, and all other prices are non-marginal. Note that these marginal prices coincide with the LIP $\mathcal{L}_u$, i.e.~the set of prices at which demand is not unique, of Alice's underlying valuation $u$. 

For any bid list $\bids$, we define the \emph{demand correspondence} $D_{\bids}(\pb)$ at prices $\pb$ as follows, distinguishing between the cases that $\pb$ is marginal and non-marginal. If $\pb$ is non-marginal, every bid $\bid \in \bids$ uniquely demands some good $i(\bid)$. In this case, $D_{\bids}(\pb)$ is a singleton set consisting of the bundle $\xb$ that is obtained by adding up an amount $w(\bid)$ of good $i(\bid)$ for each $\bid \in \bids$, i.e.~$D_{\bids}(\pb) = \{ \xb \}$ with $\xb \coloneqq \sum_{\bid \in \bids} w(\bid)\eb^{i(\bid)}$. In Example~\ref{example:Alice}, the demand correspondence at non-marginal prices $\pb = (6,2)$ is $D_{\bids}(6,2) = \{ (0,2) \}$, as both bids demand good 2. At $\pb = (1,2)$, the demand correspondence is $D_{\bids}(1,2) = \{ (1,1) \}$, as $\bid$ uniquely demands good 2 and $\bid'$ uniquely demands good 1. Figure~\ref{fig:Alice} shows the demand in the different polyhedron interiors corresponding to non-marginal prices.

If $\pb$ is marginal, $D_\bids(\pb)$ consists of the discrete convex hull of the bundles demanded at non-marginal prices arbitrarily close to $\pb$, where the discrete convex hull of a set of bundles $X$ is defined as conv$(X) \cap \Z^n$. In Figure~\ref{fig:Alice}, we see that bundles demanded in the local neighbourhood of $\pb = (0,4)$ are $(1,0)$ and $(1,1)$, which implies $D_\bids(\pb) = \{(1,0), (1,1) \}$. At $\pb = (4,4)$, we have $D_\bids(\pb) = \{ (1,0), (0,1), (1,1), (0,2) \}$.

For any bid list $\bids$, we can define the \emph{indirect utility function}
\begin{equation}
\label{eq:bid-utility-function}
	f_\bids(\pb) = \sum_{\bid \in \bids} w(\bid) \max_{i \in [n]_0} (b_i -
p_i).
\end{equation}

From {\eqref{eq:bid-utility-function}} it is clear that we can compute $f_\bids(\pb)$ in linear time in the number of goods and the size of the bid list. In our setting, we can also efficiently compute a bundle demanded at a given price~$\pb$. This \emph{demand oracle problem} is noted in~\citep{PLsurvey} as an algorithmic primitive needed to implement the Walrasian {t\^{a}tonnement} procedure. If $\pb$ is non-marginal, simply allocate each bid~$\bid$ a positive or negative item of the good $i(\bid)$ it uniquely demands and add up these items to obtain the demanded bundle $\xb = \sum_{\bid \in \bids} w(\bid)\eb^{i(\bid)}$.
However, if $\pb$ is marginal and the bid list contains both positive and negative bids, we emphasise that independently allocating to each bid one of the goods it demands may not result in a bundle that lies in $D_\bids(\pb)$. In Example~\ref{example:Bob} (depicted in Figure~\ref{fig:Bob}), for instance, we see that at prices $\pb = (4,4)$, allocating a negative item of good 1 to bid $(4,4;-1)$, a positive item of good 1 to bid $(4,2;1)$ and a positive item of good 2 each to bids $(2,4;1)$ and $(6,6;1)$ leads to a total bundle of $(0,2) \not \in D_\bids(\pb)$.
Instead, care must be taken to accept bids in a consistent way; one way to do this is to perturb entries of the price vector slightly so as to break ties, then note that the resulting bundle is demanded at the unperturbed prices.

Concurrent work by {\citet{BK-OS}} (see also Footnote~{\ref{footnote:expressivity}}) shows that any SS demand correspondence can be represented as a finite list of positive and negative bids, and this representation is essentially unique (up to redundancies). Conversely, however, not all lists of positive and negative bids induce a valuation function. We call a bid list \emph{valid} if the indirect utility function $f_\bids$ defined in \eqref{eq:bid-utility-function} is convex; Theorem~\ref{thm:validity} gives two further equivalent characterisations of validity.%
\footnote{Equivalent conditions for validity are provided in concurrent work \citep[Proposition 4.14]{BK-OS}, in a more general setting.  There, also, it is shown that condition \eqref{eq:valid-cover} is equivalent to the existence of a strong-substitutes valuation $u$ satisfying $D_\bids=D_u$ (that is, the second part of condition \eqref{eq:valid-SS}).  \citep{BK-OS} does not work with the indirect utility function, but does additionally show that validity is equivalent to $D_\bids$ satisfying the ``law of demand'': an increase in the price of only one good, between prices at which demand is unique, leads to a weak decrease in demand for the good whose price has changed, with equality in this demand if and only if the full demand vector is unchanged.
\citet{baldwin2022strong} additionally show that, when $\bids$ is valid, the demand $D_\bids$ can be given as the Minkowski difference between the demand associated with the set of positive bids in $\bids$ and the demand associated with the set of negative bids in $\bids$. (The work of \citet{LinTran2017}, mentioned in Footnote~\ref{footnote:expressivity}, does not consider characterisations of validity as, in their framework, invalid combinations do not arise.)}  
In particular, it establishes that the indirect utility function and demand correspondence of a valid bid list as defined in Section~{\ref{sec:bids}} coincide with the indirect utility function and demand correspondence for some strong-substitutes valuation as defined in Sections~{\ref{sec:prelims}} and~{\ref{sec:strong-substitutes}}. In Section~\ref{sec:main}, we will also introduce a weaker local notion of validity that holds in an open, convex neighbourhood of a price~$\pb$.

\begin{theorem}
\label{thm:validity}
Let $\bids$ be a list of positive and negative bids and $f_\bids$ be its associated indirect utility function as defined in \eqref{eq:bid-utility-function}. The following conditions are equivalent.
\begin{enumerate}
	\item \label{eq:valid-convexity} $f_{\bids}$ is convex.
	\item \label{eq:valid-cover} For every price vector $\vec{p}$ and pair of distinct goods $i,i' \in [n]_0$ it holds that $\sum_{\bid\in\bids_{ii'}} w(\bid)\geq 0$, where $\bids_{ii'}\subseteq\bids$ is the subset of bids marginal on both $i$ and $i'$ at $\vec{p}$.
    \item \label{eq:valid-SS} There exists a strong-substitutes valuation $u$ satisfying $f_{\bids} = f_u$ and $D_{\bids} = D_u$.
\end{enumerate}
\end{theorem}
The equivalence of \eqref{eq:valid-convexity} and \eqref{eq:valid-cover} is a special case of Proposition \ref{prop:local-validity} in Section \ref{sec:main}, which is proved in Appendix \ref{appendix:local-validity}. It is well-known that \eqref{eq:valid-SS} implies \eqref{eq:valid-convexity}. To show that \eqref{eq:valid-cover} implies \eqref{eq:valid-SS}, we use results of \citet{Mikhalkin} to verify the existence of a suitable valuation $u$, and  check that $u$ indeed has the properties required. We present the full proof of Theorem~\ref{thm:validity} in Appendix~\ref{appendix:validity}.

\subsection{Simulating a valuation oracle}
\label{sec:simulating-valuation-oracle}
A bid list $\bids$ does not directly provide the value $u(\xb)$ that the bidder  has for a given bundle $\xb$. However, the ability to evaluate $u(\xb)$ for arbitrary bundles is a prerequisite for the allocation algorithms in \citep{Murota2003} and \citep{PLW}, as mentioned in the related work section. An efficient way to determine $u(\xb)$ is to compute prices $\pb$ at which the bidder demands~$\xb$ using the price-finding algorithm from Section~{\ref{sec:price-finding-overview}}; then computing the bidder's indirect utility $f_u(\pb) = f_\bids(\pb)$ (cf.~Definitions~\ref{def:iuf} and~\ref{eq:bid-utility-function}) at these prices; and finally recovering the bundle value $u(\xb) = f_{\bids}(\pb) + \pb \cdot \xb$. The first step, finding the prices~$\pb$, takes time $O(n^2|\bids|^3 + n |\bids|T(n))$ with the demand change method (and $O(n^2|\bids|^2 \log M + n |\bids|T(n))$ with the binary search method) for determining the step lengths of the steepest descent method described in Section~\ref{sec:price-finding-overview} and Appendix~\ref{appendix:price-finding}. Here $n$ is the number of goods, $M$ is an upper bound on the bid entries and $T(n)$ is the time it takes to minimise an $n$-dimensional submodular set function. The second and third steps,  computing $f_{\bids}(\pb)$ and solving for $u(\xb)$, take time $O(n|\bids|)$, so the first step dominates the computation time.

\subsection{Deciding validity of bid lists}
\label{subsec:coNP}
In our auction, bidders submit their lists of positive and negative bids to the auctioneer prior to the auction. We show in Theorem~\ref{thm:np} that checking the validity of a given list of bids is computationally hard. The proof, given in Appendix~\ref{sec:testing}, exploits definition \eqref{eq:valid-cover} of validity in Theorem~\ref{thm:validity}. However, in Appendix~\ref{sec:testing} we provide a straightforward algorithm to verify the validity of a given bid list in polynomial time if the number of goods, or the number of negative bids, is bounded by a constant.

\begin{theorem}\label{thm:np}
	Deciding the validity of a given list of positive and
negative bids is \conp-complete.
\end{theorem}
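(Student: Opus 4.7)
The plan is to prove both directions: membership in \conp{} and \conp-hardness. For membership, we use condition~\eqref{eq:valid-cover} of Theorem~\ref{thm:validity}: a short \np{}-certificate for \emph{invalidity} consists of a pair of distinct goods $i \neq i' \in [n]_0$, a subset $S \subseteq \bids$ whose total weight is negative, and a price vector $\pb$ witnessing that every bid in $S$ is marginal on both $i$ and $i'$ at $\pb$ (and these are among its demanded goods). The existence of such a $\pb$ for a given $(S, i, i')$ reduces to feasibility of the linear system $b_i - p_i = b_{i'} - p_{i'} \geq b_j - p_j$ for all $\bid \in S$ and all $j \in [n]_0$, so by standard LP arguments a rational witness with polynomially-many bits exists whenever the polyhedron is non-empty. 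Checking marginality at $\pb$ and the sign of the weight sum is then in polynomial time.

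For hardness, the natural approach is a reduction from \threecnf{} to the complement (\emph{invalidity}). Given a 3-CNF formula $\phi$ on variables $x_1, \dots, x_n$ and clauses $C_1, \ldots, C_m$, the idea is to use a price coordinate per variable to encode its truth value (for instance, $p_k \in \{0,1\}$ corresponding to $x_k$ being true or false) and to design a bid list $\bids$ on $O(n)$ goods together with two distinguished goods $i, i'$ such that the weights of the bids marginal on $\{i, i'\}$ at a price $\pb$ sum to a negative number exactly when $\pb$ encodes a satisfying assignment to $\phi$. One natural gadget uses positive bids to ``switch on'' marginality on $\{i,i'\}$ only at price vectors that lie on the Boolean cube, and negative bids, one per clause, that become marginal on $\{i,i'\}$ precisely when the clause is satisfied; by carefully balancing weights, the aggregate marginal weight on $\{i,i'\}$ is negative iff all clauses are satisfied. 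Some extra ``consistency'' bids (and possibly auxiliary goods used as sinks/sources of marginality) will be needed to restrict attention to prices that genuinely correspond to Boolean assignments, and to ensure that no \emph{other} spurious pair $(j, j')$ ever has a negative marginal sum.

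The main obstacle will be designing the gadget for \conp-hardness: encoding a Boolean assignment by prices is routine, but getting a single pair of goods to witness satisfiability, while avoiding unintended negative marginal sums elsewhere in price space (which would make $\bids$ invalid for trivial reasons unrelated to $\phi$), requires care. The rest of the argument — membership, and the high-level shape of the reduction — is comparatively straightforward once condition~\eqref{eq:valid-cover} of Theorem~\ref{thm:validity} is adopted as the working definition of validity, since it converts the global convexity question into a purely local, polyhedral condition on marginality patterns that interacts well with Boolean-style gadgets and LP certificates.
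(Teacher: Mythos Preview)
Your membership argument is almost right but the certificate is mis-specified: condition~\eqref{eq:valid-cover} of Theorem~\ref{thm:validity} concerns the sum over \emph{all} bids marginal on $\{i,i'\}$ at $\pb$, not over an arbitrary subset $S$. A negative-weight subset of the marginal bids is not a witness of invalidity, since positive marginal bids outside $S$ may compensate. The fix is easy --- the certificate is just $(\pb,i,i')$, and verification computes the full marginal set and its weight sum --- and your LP argument for a polynomial-size rational $\pb$ then goes through. This matches the paper's membership proof.

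For hardness there is a genuine gap. The paper takes a different route, via an intermediate problem \mnd{} (given point lists $P,N$ in $\R^n_+$, decide whether some $\zb$ dominates strictly more points of $N$ than of $P$): first \threecnf{} is reduced to \mnd{} by per-clause gadgets, and then \mnd{} is reduced to invalidity. The second reduction is precisely where your acknowledged ``main obstacle'' gets solved: each $\vb\in N$ becomes a negative bid $(1,\vb)$, and two extra \emph{positive} shadow bids $(0,\vb)$ and $(1,\vb)+\eb^{[n+1]}$ are added. These shadows pair off with the negative bid in every region except those of one distinguished type, so no spurious negative sums can arise elsewhere. Your proposal correctly identifies this neutralisation as the crux but supplies no mechanism for it; moreover, the specific gadget you sketch --- a negative bid that is marginal on $\{i,i'\}$ precisely when its clause is satisfied --- is not obviously realisable, since failure of marginality on $\{i,i'\}$ means some other coordinate's surplus overtakes, making the bid marginal \emph{there} instead and creating exactly the unintended negativity you need to rule out. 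Something comparable to the paper's shadow-bid construction would be needed to complete your direct approach.
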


\begin{theorem}
\label{thm:polynomial-checking}
	Let $\bids$ be a list of positive and negative bids. If the number of goods or the number of negative bids is bounded by a constant, there exists a polynomial-time algorithm for checking the validity of $\bids$.
\end{theorem}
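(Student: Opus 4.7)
The plan is to invoke characterisation~\eqref{eq:valid-cover} of Theorem~\ref{thm:validity}: for every ordered pair of distinct goods $(i,i')\in[n]_0^2$, I would check whether there exists a price vector $\pb$ at which the sum of weights of bids marginal on $\{i,i'\}$ is negative. A bid $\bid$ is marginal on $\{i,i'\}$ at $\pb$ exactly when $p_i-p_{i'}=b_i-b_{i'}$ and $b_i-p_i\geq b_j-p_j$ for every $j\in[n]_0$, so these constraints cut out a polyhedron $P_\bid^{i,i'}\subseteq\R^n$ that is immediately readable from $\bid$. Both algorithms proceed by searching, for each pair $(i,i')$, for a witness $\pb$ lying in a suitable intersection of such polyhedra.

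For the constant-$n$ case, I would fix a pair $(i,i')$ and enumerate over the at most $|\bids|$ distinct values $c=b_i-b_{i'}$ appearing in the bid list, since only bids sharing this value of $b_i-b_{i'}$ can be simultaneously marginal on $\{i,i'\}$. For each such $c$, I would restrict to the affine hyperplane $H_c=\{\pb:p_i-p_{i'}=c\}$ and consider the arrangement cut out on $H_c$ by the halfspaces defining $P_\bid^{i,i'}$ for the bids with $b_i-b_{i'}=c$. This produces $O(n|\bids|)$ hyperplanes in an $(n-1)$-dimensional affine space, whose arrangement has $O((n|\bids|)^{n-1})$ faces of all dimensions, polynomial in $|\bids|$ when $n$ is constant. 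The set of marginal bids on $\{i,i'\}$ is constant on each face, so I would pick a rational representative per face by standard arrangement constructions, sum the weights of the marginal bids there, and report invalidity if any sum is negative.

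For the constant-$k$ case with $k=|\negbids|$, I would instead enumerate the $2^k$ subsets $N'\subseteq\negbids$, treating $N'$ as the would-be set of negative bids simultaneously marginal on some pair $(i,i')$. For each pair $(i,i')$ and each $N'$, I would first check that the bids in $N'$ share a common value of $b_i-b_{i'}$ and then test non-emptiness of $P_{N'}=\bigcap_{\bid\in N'}P_\bid^{i,i'}$ by a single LP. Whenever $P_{N'}$ is non-empty, I would use polynomially many further LPs to compute $N^{**}=\{\bid\in\negbids:P_\bid^{i,i'}\supseteq P_{N'}\}$ and $P^{**}=\{\bid\in\posbids:P_\bid^{i,i'}\supseteq P_{N'}\}$, the bids marginal at every point of $P_{N'}$; at any point in the relative interior of $P_{N'}$ the set of marginal bids on $\{i,i'\}$ is exactly $N^{**}\cup P^{**}$. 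The algorithm declares invalidity as soon as $|N^{**}|>|P^{**}|$ for some triple $(i,i',N')$.

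The hard part will be arguing that the second enumeration does not miss an invalidity that only manifests at a lower-dimensional, non-generic price. The idea is that if some $\pb^*$ witnesses invalidity on pair $(i,i')$ and we set $N'$ to be the set of negative bids marginal at $\pb^*$ on $\{i,i'\}$, then $\pb^*\in P_{N'}$ and $P^{**}$ is contained in the set of positive bids marginal at $\pb^*$ (passing to a generic point of $P_{N'}$ can only drop positive marginal bids). Together with $N'\subseteq N^{**}$, the invalidity assumption then forces $|N^{**}|\geq|N'|>|P^{**}|$, so the triple is detected. In the constant-$n$ case the analogous concern disappears because the arrangement enumeration already visits every face. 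Both routines perform only polynomially many LPs of polynomial size, yielding a polynomial-time algorithm in each regime.
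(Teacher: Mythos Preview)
Your constant-$n$ argument via hyperplane arrangements is correct. Your constant-$k$ argument, however, has a gap in the soundness direction. You assert that at any point in the relative interior of $P_{N'}$ the marginal bids on $\{i,i'\}$ are exactly $N^{**}\cup P^{**}$; this is false, since a bid $\bid\notin N^{**}\cup P^{**}$ may have $P_\bid^{i,i'}\cap P_{N'}$ full-dimensional in $P_{N'}$ and hence be marginal throughout that piece. So nothing you wrote establishes that $|N^{**}|>|P^{**}|$ implies invalidity, and without this your algorithm could in principle flag a valid list. The fix needs a structural fact you did not invoke: after the affine change of variables $q_0=-p_i$, $q_j=p_j-p_i$ on $H_c$, every $P_\bid^{i,i'}$ becomes a translate $\{\qb\geq\vb_\bid\}$ of the nonnegative orthant. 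Then $P_{N'}=\{\qb\geq\wb\}$ with $\wb=\max_{\bid\in N'}\vb_\bid$ componentwise, and $P_\bid^{i,i'}\supseteq P_{N'}$ iff $\wb\in P_\bid^{i,i'}$, so the bids marginal at the single vertex $\wb$ (not at a generic interior point) are exactly $N^{**}\cup P^{**}$, and $|N^{**}|>|P^{**}|$ does witness invalidity there.

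This orthant structure is precisely what the paper exploits, and it lets them give one algorithm covering both regimes. They recast condition~\eqref{eq:valid-cover} of Theorem~\ref{thm:validity} as nonnegativity of the regions $H^{\pb}_i$ and $F^{\pb}_{ij}$ in bid space, then show (Theorem~\ref{thm:finite-valid-bids}) that it suffices to test these only at the vertices $md(U)$ and $mdF(i,U)$ attached to subsets $U\subseteq\negbids$; since each such vertex is a componentwise extremum, Lemma~\ref{lemma:small-subsets} reduces to $|U|\leq n+1$. Enumerating the $O\bigl(\binom{|\negbids|}{n+1}\bigr)$ subsets and directly counting bids in each resulting region is polynomial whenever either $n$ or $|\negbids|$ is bounded, with no arrangement constructions or LPs required.
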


Furthermore, many useful subclasses of bid lists (of arbitrary size) are easily checkable for validity in practice. Hence, while the \conp-completeness result is disappointing, it does not seem to be an important limitation on the auction, as sensible restrictions on the permitted bids can allow us to check validity efficiently in practice.

\subsection{The computational challenges}
\label{sec:stmt}
In this subsection we state the computational problems to be solved. For any bidder $j \in J$, where~$J$ is the set of bidders, $\bids^j$ denotes the bids of bidder~$j$. We assume each $\bids^j$ is \emph{valid}, as defined in Section~\ref{sec:rep}, and provided as a list of vectors encoded in binary. Let $\bids$ be the aggregate list of bids obtained by concatenating the lists of all bidders. Theorem~{\ref{thm:validity}} part {\eqref{eq:valid-cover}} implies that $\bids$ is valid. Figure~\ref{fig:aggregate} depicts the aggregate bid list of Alice and Bob from Figure~\ref{fig:ex1}. The running times of our algorithms are given in terms of $n$, $|J|$, $|\bids|$ and $M \coloneqq \max_{\bid \in \bids} \| \bid \|_\infty$, the maximum bid vector~entry.

Suppose the auctioneer intends to sell target bundle $\tb$. Our aim is to
compute a \Walrasian{} equilibrium: a market-clearing price $\pb$ at
which $\tb$ is demanded and an allocation of $\tb$ to the various bidders so
that every bidder receives a bundle they demand at $\pb$. In the event
that not enough bids are made for the target bundle, some items can
go unsold, which is equivalent to the auctioneer buying them back from the
market. To reflect this, the auctioneer places a total of $\| \tb \|_1 + 1$
positive bids at $\mathbf{0}$.\footnote{Of course, any positive reserve price
can be reflected by appropriately locating the auctioneer's bids.} The computation of a \Walrasian{} equilibrium separates into two problems.

\begin{description}
\item[\textbf{The price-finding problem.}] Given the aggregate bid list $\bids$ and
target bundle $\tb$, find the coordinate-wise minimal \emph{market-clearing}
price $\pb$ at which $\tb$ is demanded, that is $\tb \in D_\bids(\pb)$.
\item[\textbf{The allocation problem.}] Given a valid bid list $\bids^j$ for each
bidder $j \in J$, a target bundle $\tb$ and market-clearing price $\pb$,
allocate $\tb$ to the bidders so that each bidder receives a bundle they demand
at $\pb$, that is, a partition $(\tb^j)_{j \in J}$ of $\tb$ with $\tb^j \in
D_{\bids^j}(\pb)$ for all $j \in J$.
\end{description}
Note that we do not ask for a breakdown of which of $j$'s bids are accepted on which goods. Indeed, we show that a solution to the allocation problem can be obtained without this knowledge.

\begin{theorem}
\label{thm:exists}
Suppose that each bidder $j \in J$ has SS demand correspondence $D^j$.
Given a target allocation $\tb$ and a
market-clearing price $\pb$ for $\tb$, there exists a partition
$\tb=\sum_{j \in J} \tb^j$ such that $\tb^j \in D^j(\pb)$.
\end{theorem}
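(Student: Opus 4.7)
My plan is to reduce the statement to standard properties of sup-convolutions of $M^\natural$-concave (equivalently, strong-substitutes) valuations. Let $u_j$ be a strong-substitutes valuation inducing $D^j$ (one exists by Theorem~\ref{thm:validity} for bid-list-induced demands, and more generally by \citet{Murota.Shioura1999}), and define the aggregate valuation $u$ by sup-convolution,
\[
u(\xb) \coloneqq \sup\Bigl\{ \sum_{j \in J} u_j(\xb^j) : \xb^j \in \Z^n,\ \sum_{j \in J} \xb^j = \xb \Bigr\}.
\]
The aggregate demand correspondence $D_u$ coincides with the correspondence $D_\bids$ induced by the aggregate bid list, so the market-clearing hypothesis gives $\tb \in D_u(\pb)$.

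First, I would invoke the fact that the sup-convolution of $M^\natural$-concave functions is itself $M^\natural$-concave and, whenever finite, is attained at an integer decomposition (see, e.g., \citet[Ch.~6]{murota-book}). Applied to our setting, this yields integer bundles $\tb^j$ with $\sum_j \tb^j = \tb$ and $\sum_j u_j(\tb^j) = u(\tb)$.

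Second, I would verify by a single-bidder swap argument that $\tb^j \in D^j(\pb)$ for every $j$. Suppose for contradiction some bidder $j^*$ admits a strictly improving bundle $\yb^{j^*}$ with $u_{j^*}(\yb^{j^*}) - \pb \cdot \yb^{j^*} > u_{j^*}(\tb^{j^*}) - \pb \cdot \tb^{j^*}$. Keeping all other bundles fixed and setting $\yb \coloneqq \yb^{j^*} + \sum_{j \neq j^*} \tb^j$, the definition of $u$ gives $u(\yb) \geq u_{j^*}(\yb^{j^*}) + \sum_{j \neq j^*} u_j(\tb^j)$; subtracting $\pb \cdot \yb$ from both sides then yields $u(\yb) - \pb \cdot \yb > u(\tb) - \pb \cdot \tb$, contradicting $\tb \in D_u(\pb)$.

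The principal obstacle is the first step: for arbitrary discrete concave functions the integer sup-convolution need not be attained even when finite, so the $M^\natural$-concavity of each $u_j$ is essential. The swap argument in the second step is elementary and purely local, relying only on quasi-linearity. An alternative route that sidesteps the attainment issue altogether is to cite the Minkowski-sum identity $D_u(\pb) = \sum_{j \in J} D^j(\pb)$ for strong-substitutes valuations (cf.~\citet{ST15}), which produces the required partition directly from $\tb \in D_u(\pb)$.
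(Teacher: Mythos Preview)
Your proposal is correct. The paper itself does not prove Theorem~\ref{thm:exists}; it is stated with a citation to \citet{DKM01}, \citet{murota-book}, and \citet{MS09}, and your argument is precisely the standard one found in those references: attainment of the sup-convolution for $M^\natural$-concave summands, followed by the elementary swap argument showing each component of an optimal decomposition must be individually demand-optimal at~$\pb$.

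One small remark. Your line ``the aggregate demand correspondence $D_u$ coincides with the correspondence $D_\bids$ induced by the aggregate bid list'' is extraneous for the theorem as stated, which is phrased purely in terms of SS demand correspondences $D^j$ and makes no reference to bids. The hypothesis ``$\pb$ is market-clearing for $\tb$'' should simply be read as $\tb \in D_u(\pb)$ for $u$ the sup-convolution of the $u_j$; once you adopt that reading, the bid-list detour is unnecessary and the argument is self-contained. Your alternative route via the Minkowski-sum identity $D_u(\pb)=\sum_{j}D^j(\pb)$ is equally valid and is in fact the formulation closest to how \citet{MS09} present the result.
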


Theorem~\ref{thm:exists} (cf.~\citep{DKM01,murota-book,MS09}) ensures that if the
target bundle $\tb$ is aggregately
demanded at market-clearing price~$\pb$, then there exists an allocation of
$\tb$ among the bidders so that every bidder receives a bundle she demands.
However, care must be taken in finding such an allocation, as bidders cannot
simply be allocated an arbitrary bundle they demand at the market-clearing
price. (Indeed, in Figure~\ref{fig:ex1} the lowest prices at which Alice and
Bob aggregately demand $(1,1)$ is given by $\pb = (4,4)$. If we allocate Alice
the whole bundle $(1,1)$, Bob does not demand the empty remainder $(0,0)$ at
$\pb$.)

If we had (oracle) access to the valuations of the bidders, we could perturb the valuations such that the target bundle is uniquely demanded at some price by subtracting carefully constructed functions that are discrete-convex in one variable.\footnote{See \citep[Theorem 6.13 (4)]{murota-book}.} However, it is not clear how to perturb the bids individually to achieve a suitable `indirect' perturbation of the valuations. Instead, our allocation algorithm takes the approach of perturbing bids bidder by bidder.
Some intuition can be gained by considering the addition of a small `random' perturbation vector $\vec{v}^j$ to all the bids in bid set $\bids^j$ (a different $\vec{v}^j$ for each $j \in J$) and recomputing the new market-clearing prices. This has the effect of breaking ties between bids in two different bid lists. (This also uses the plausible fact, established in Proposition~\ref{prop:shift-valid}, that small perturbations make only small changes to the market clearing price(s).) However, the perturbations need to be exponentially small in the number of goods in order to ensure that all possible ties are broken. (One can check, noting the pigeonhole principle, that if perturbations are multiples of some inverse-polynomial, then they will have distinct subsets having the same sum, losing the guarantee that all possible ties are broken.) Our algorithm of Section~\ref{sec:main} avoids this by systematically perturbing and un-perturbing bid sets, making decisions on marginal bids as it proceeds.

\section{Finding market-clearing prices}
\label{sec:price-finding-overview}

\begin{figure}[t!]
	\centering
	\begin{subfigure}[t]{0.49\textwidth}
		\centering
		\includegraphics{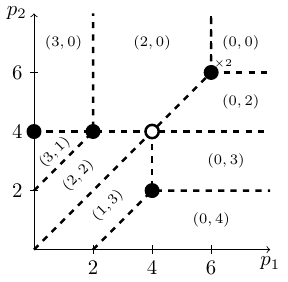}
		\caption{Aggregate demand of Alice and Bob}
		\label{fig:aggregate}
	\end{subfigure}
	\begin{subfigure}[t]{0.49\textwidth}
		\centering
		\includegraphics{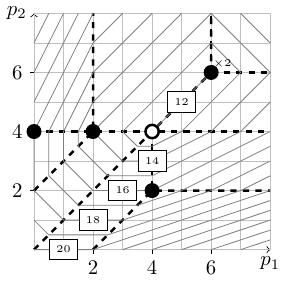}
		\caption{Lyapunov function $g$ for target bundle $\tb = (1,1)$}
		\label{fig:lyapunov}
	\end{subfigure}
	\caption{Figure (a) shows the demand of the aggregate bid list obtained by aggregating Alice's and Bob's bids from Figure~\ref{fig:ex1}, while (b) depicts a contour plot of the Lyapunov function $g$ for target bundle $\tb = (1,1)$. Note that $\pb = (4,4)$ is the minimal market-clearing price for $\tb$. Starting at $\mathbf{0}$, \textsc{MinUp} repeatedly moves in direction $d = (1,1)$ until it reaches $\pb$, whereas \textsc{LongStepMinUp} makes a single long step from $\mathbf{0}$ to $\pb$.}
	\label{fig:price-finding}
\end{figure}

In order to determine component-wise minimal prices at which the market is cleared, we apply an iterative steepest-descent approach from the discrete convex optimisation literature {\citep{Shioura17}}. This approach generalises an algorithm used by Ausubel's ascending auction design~{\citep{Aus06}} and {\citet{GS00}} for the task of finding equilibrium prices in single-unit markets. {\Citet{Aus06}} and others define the \emph{Lyapunov function} with regard to indirect utility function $f_u$ for strong-substitutes valuation $u$ and target bundle $\tb$ as $g_{\tb}(\pb) \coloneqq f_u(\pb) + \tb \cdot \pb$.

A function $f:\Z^n \to \mathbb{R}$ is \textit{$L^\natural$-convex} if it satisfies the \emph{translation submodularity} property,
\begin{equation}\label{eq:translation-submodularity-main}
	f(\pb) + f(\qb) \geq f( (\pb - \alpha \mathbf{1}) \vee \qb) + f(\pb \wedge (\qb + \alpha \mathbf{1})) \quad (\forall \pb, \qb \in \Z^n_+, \forall \alpha \in \Z_+).
\end{equation}
Here, $\vee$ and $\wedge$ denote the component-wise maximum and minimum, respectively. It is well-known that the Lyapunov function $g_{\tb}$ is $L^\natural$-convex if the valuation is strong substitutes (see \citep{MSY2013, MSY2016}, or Proposition~\ref{prop:L-natural-convex} in Appendix~{\ref{appendix:price-finding}}) and thus has a unique component-wise minimal minimiser {\citep{Shioura17}}. Moreover, this minimiser is integral and corresponds to the component-wise minimal market clearing prices (cf.~Lemma~{\ref{lemma:lyapunov-equivalence}} in Appendix~{\ref{appendix:price-finding}}), and can be found using the following discrete steepest descent procedure. Given a starting point $\pb \coloneqq \pb^0$ that is dominated by the market-clearing prices~$\pb^*$, repeatedly find the component-wise minimal subset $S \subseteq [n]$ minimising $g_{\tb}(\pb + \eb^S) - g_{\tb}(\pb)$ and move a unit step in this direction by updating $\pb \coloneqq \pb + \eb^S$. The method terminates once a local minimum is reached. As a local minimum of an $L^\natural$-convex function is a global minimum (see, e.g., {\citep{Shioura17}}), we have found~$\pb^*$. It is straightforward that the $L^\natural$-convexity of the Lyapunov function as defined in \eqref{eq:translation-submodularity-main} guarantees the submodularity of function $g'(\pb;S) \coloneqq g_{\tb}(\pb + \eb^S) - g_{\tb}(\pb)$ with respect to argument $S$ (for any fixed~$\pb$). Hence we can apply standard submodular minimisation algorithms (cf.~Section~{\ref{sec:prelims}}) to find the component-wise minimal steepest descent direction at any prices $\pb$.

Adapting this method to our bidding-language setting, we can use our direct knowledge of the bids and \eqref{eq:bid-utility-function} to express $g_{\tb}$ for
any valid bid list $\bids$ as
\begin{equation}\label{eq:lyapunov-overview}
	g_{\tb}(\pb) \coloneqq f_{\bids}(\pb) + \tb \cdot \pb = \sum_{\bid \in \bids} w(\bid) \max_{i \in [n]_0} (b_i - p_i) + \tb \cdot \pb.
\end{equation}

Theorem~\ref{thm:validity} guarantees that the function $f_{\bids}$ defined in \eqref{eq:bid-utility-function} is the indirect utility function $f_u$ of some strong-substitutes valuation $u$, so \eqref{eq:lyapunov-overview} is a well-defined Lyapunov function. (Figure~\ref{fig:lyapunov} depicts a contour plot of the Lyapunov function defined for the aggregate demand of Alice and Bob as given in Examples~\ref{example:Alice} and~\ref{example:Bob} and shown in Figure~\ref{fig:aggregate}.) Moreover, \eqref{eq:lyapunov-overview}
implies that $g_{\tb}$ can be evaluated at any $\pb$ in time $O(n|\bids|)$.

In order to improve the running time of the steepest-descent method, \citet{Shioura17} proposes aggregating several steps in the same direction into a single `long' step. After finding the component-wise minimal direction $\eb^{S}$, we move $\lambda(\pb, S)$ unit steps in this direction at once by updating $\pb \coloneqq \pb + \lambda(\pb,S) \eb^S$, where $\lambda(\pb, S)$ is defined as
\begin{equation}
\label{eq:steplength-overview}
\lambda(\pb, S) \coloneqq \max \{ \lambda \in \Z_+ \mid g'(\pb;S) =
g'(\pb + (\lambda-1)\eb^{S};S) \}
\end{equation}
and $g'(\pb; S) \coloneqq g_{\tb}(\pb + \eb^S) - g_{\tb}(\pb)$.
\citet{Shioura17} shows that the number of long steps before reaching~$\pb^*$ is at most $n \max \{-g'(\bm{0};S) \mid S \subseteq [n] \}$. In our setting, we see that $-g'(\bm{0};S) \leq |\bids|$, and obtain an upper bound of $n |\bids|$ on the number of long steps that our steepest-descent method takes (see Appendix~\ref{appendix:price-finding}).

In Appendix~\ref{appendix:price-finding}, we describe two methods for computing the step length \eqref{eq:steplength-overview} at every iteration of the steepest-descent procedure, and analyse their complexity in our bidding-language setting. The first method uses binary search, as suggested by \citet{Shioura17}, to find each $\lambda(\pb, S)$; we show that determining each $\lambda(\pb, S)$ takes time $O(n |\bids| \log M)$. (Recall that $M = \max_{\bid \in \bids} \| \bid \|_\infty$.) Our second method for determining $\lambda(\pb, S)$ (the `demand change' method) directly exploits our knowledge of the individual bids to determine the step length in time $O(n|\bids|^2)$. We see that the best method in practice depends on the magnitudes of $\log M$ and $|\bids|$. Overall, both methods lead to price-finding algorithms that are polynomial in the size of the bid lists, as we state in Theorem~\ref{thm:long-step-analysis}.

\begin{theorem}
\label{thm:long-step-analysis}
The steepest descent approach takes time $O(n^2|\bids|^2\log M + n|\bids|T(n))$ and $O(n^2|\bids|^3 + n|\bids|T(n))$ to find the component-wise smallest market-clearing prices using the \emph{binary search} and \emph{demand change} long step methods, respectively.
\end{theorem}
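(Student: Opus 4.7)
The plan is to bound the total running time by multiplying the number of iterations of the long-step steepest descent by the per-iteration cost. The per-iteration cost decomposes into two components: a submodular function minimisation to find the inclusion-wise minimal minimiser $S$ of $g'(\pb;\cdot)$, and the computation of the long step length $\lambda(\pb,S)$. The SFM step costs $T(n)$. For the step-length computation, each evaluation of $g'(\pb + \lambda \eb^S; S)$ reduces to two evaluations of $f_\bids$ via \eqref{eq:bid-utility-function}, each taking $O(n|\bids|)$ time; binary search over $\lambda \in [0,O(M)]$ thus contributes $O(n|\bids|\log M)$ per iteration, and I would invoke the demand change analysis developed in the appendix for the alternative $O(n|\bids|^2)$ bound.

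The main obstacle is bounding the number of iterations by $O(n|\bids|)$. My approach would be to exploit the piecewise-linear structure of $g_\tb$ implied by \eqref{eq:lyapunov-overview}: breakpoints of $g_\tb$ along a coordinate direction $\eb^S$ arise precisely when some bid becomes marginal between a good in $S$ and a good outside $S$, and the definition of $\lambda(\pb,S)$ in \eqref{eq:steplength-overview} means that each long step terminates at such a breakpoint. I would then define a potential $\Phi(\pb)$ that counts the bid--good marginality events remaining on a suitable monotone coordinatewise path from $\pb$ to the componentwise-minimal market-clearing price $\pb^*$. Using the $L^\natural$-convexity of $g_\tb$, which follows from Theorem~\ref{thm:validity} via the discrete convex analysis equivalence between SS and $M^\natural$-concavity, I would argue that each iteration strictly reduces $\Phi$ and that once a breakpoint has been resolved it cannot recur along the trajectory. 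A simple counting argument then shows $\Phi(\pb^0) = O(n|\bids|)$, because each of the $|\bids|$ bids can contribute at most $O(n)$ distinct marginality events, one per good it might become marginal on relative to the trajectory's steepest-descent cuts.

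Combining these bounds yields $O(n|\bids|) \cdot O(n|\bids|\log M + T(n)) = O(n^2|\bids|^2 \log M + n|\bids|T(n))$ for the binary search method, and $O(n|\bids|) \cdot O(n|\bids|^2 + T(n)) = O(n^2|\bids|^3 + n|\bids|T(n))$ for the demand change method, as claimed. The iteration bound is the genuine obstacle: the per-iteration analysis is essentially mechanical once the two long-step procedures are described, but the $O(n|\bids|)$ iteration bound requires the discrete-convex-analytic potential argument leveraging the $L^\natural$-convex structure of the Lyapunov function to rule out recurrence of resolved marginalities.
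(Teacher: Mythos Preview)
Your per-iteration analysis is correct and matches the paper: the SFM step costs $T(n)$, binary search for the step length costs $O(n|\bids|\log M)$ since each evaluation of $g'$ is $O(n|\bids|)$ and the search interval has length $O(M)$, and the demand-change method costs $O(n|\bids|^2)$ per iteration. The final arithmetic combining these with an $O(n|\bids|)$ iteration bound is also right.

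The iteration bound, however, is handled very differently in the paper, and your proposed route has a real gap. The paper does not build a bespoke potential function at all: it simply invokes an existing result of Shioura (Theorem~4.17 in \citet{Shioura17}), which bounds the number of long-step iterations by $n\max\{-g'(\mathbf{0};S)\mid S\subseteq[n]\}$. It then observes that $\mathbf{0}$ and $\eb^S$ share a demanded bundle $\xb$, so $g'(\mathbf{0};S)=(\tb-\xb)\cdot\eb^S\geq -\sum_i x_i\geq -|\bids|$, since each bid contributes at most one item. This two-line computation immediately yields the $n|\bids|$ bound.

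Your potential $\Phi$ counting ``bid--good marginality events remaining on a monotone path to $\pb^*$'' is not well-defined as stated, because the steepest-descent direction $S$ changes between iterations. A bid that crosses a marginality threshold relative to one cut $S$ versus $[n]_0\setminus S$ may well become marginal again relative to a different cut used in a later iteration; your non-recurrence claim therefore needs substantially more structure than the sketch provides. Likewise the ``$O(n)$ events per bid'' count implicitly assumes a fixed direction. Shioura's theorem already encapsulates the delicate monotonicity argument (via Proposition~\ref{prop:monotonicity} in the appendix) that makes the long-step trajectory well-behaved, and the paper wisely treats it as a black box rather than reproving it.
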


The computation time of the steepest descent method, even with long steps, is dominated by the task of finding a component-wise minimal submodular minimiser. In practice, we can exploit our knowledge of the bids to decrease the dimensionality of the submodular function to be minimised, which leads to performance improvements. For full details on these practical improvements, and on the `unit-step' and `long-step' steepest-descent methods for computing market-clearing prices in the bidding-language setting, we refer to Appendix~\ref{appendix:price-finding}.

\section{Allocations to the separate bidders}
\label{sec:main}
Suppose we are given a market-clearing price $\pb$ for target bundle $\tb$. We now present an algorithm that solves the allocation problem, i.e.~finds a partition $(\tb^j)_{j \in J}$ of the target bundle $\tb$ such that $\tb^j$ is demanded by bidder $j$ at price $\pb$. We note that while the market-clearing price $\pb$ returned by our steepest descent approach in Section~\ref{sec:price-finding-overview} is component-wise minimal, our allocation algorithm works for any integral market-clearing price.

\begin{algorithm}[htb]
  \caption{\textsc{Allocate}}
  \label{alg:allocate}
  \begin{algorithmic}[1]
    \State \textbf{Input:} Initial allocation problem $\mathcal{A}$.
    \State \textbf{Output:} Target bundle allocation.
    \State Allocate all non-marginal bids by applying \textsc{NonMarginals} (Algorithm~\ref{alg:proc1}).
    \While {$\mathcal{A}$ has a non-empty bid list}
      \State Run the \textsc{FindParams} subroutine (Algorithm~\ref{alg:findparams}).
      \If {\textsc{FindParams} returns a \keylist{} $(I,j^*)$ with at most one \linkgood{} $i^*$}
        \State Process unambiguous marginal bids with \textsc{UnambiguousMarginals} (Algorithm~\ref{alg:proc2}).
      \Else
        \State Simplify the allocation problem with \textsc{ShiftProjectUnshift} (Algorithm~\ref{alg:proc3}).
      \EndIf
      \State Allocate all (newly) non-marginal bids with \textsc{NonMarginals} (Algorithm~\ref{alg:proc1}).
    \EndWhile
    \State \Return the partial allocation bundle $\mb^j$ for each bidder $j \in J$.
  \end{algorithmic}
\end{algorithm}

Our algorithm \textsc{Allocate}, stated in Algorithm~\ref{alg:allocate}, repeatedly simplifies the problem until it becomes vacuous. It generates a sequence of \emph{allocation problems} by iteratively allocating parts of the target bundle to bidders and removing satisfied bids from the bid lists until the residual supply bundle (initially the target bundle) is empty. In order to capture this formally, we define a generalised allocation problem in Section~\ref{subsec:generalised-allocation-problem}; the definition introduces a weaker notion of local validity for bid lists and also features a partial allocation bundle for each bidder as well as a residual target bundle. With every allocation problem we associate a corresponding \emph{marginal bids multigraph} and \emph{derived graph}. These graphs, and key terminology such as \keylists{} and \linkgoods{}, are defined in Section~\ref{subsec:graphs}. The marginal bids multigraph is used to quantify progress and establish the running time of the algorithm, while the derived graph is required for \textsc{Allocate} to decide whether to apply the \textsc{UnambiguousMarginals} or \textsc{ShiftProjectUnshift} procedure.

The algorithm starts by removing all non-marginal bids and transferring their demanded items from the residual supply to the partial allocation bundle of the bid's owner. This is performed by the procedure \textsc{NonMarginals}, which is defined in Section~\ref{sec:unambiguous}. Next, the algorithm calls the \textsc{FindParams} subroutine (defined in Section~\ref{subsec:graphs}), which constructs and works on the derived graph. If \textsc{FindParams} identifies a subset of marginal bids that can be allocated unambiguously, \textsc{Allocate} invokes \textsc{UnambiguousMarginals} (defined in Section~\ref{sec:unambiguous}) to process these bids. Otherwise \textsc{FindParams} identifies parameters that \textsc{Allocate} uses to call the \textsc{ShiftProjectUnshift} procedure. \textsc{ShiftProjectUnshift} is defined in Section~\ref{sec:shift}; it does not allocate items and delete bids but instead simplifies the allocation problem by shifting and projecting the bids in order to strictly reduce the demand ties between bids. Finally, \textsc{NonMarginals} is invoked again to process any bids that may have become non-marginal. The algorithm repeats this process until the allocation problem become vacuous in the sense that all bid lists are empty, at which point the partial allocation bundles constitute a valid allocation of the target bundle $\tb$ at price vector $\pb$. Section~\ref{subsec:allocate} gives the running time and a proof of correctness for \textsc{Allocate}.

\subsection{The generalised allocation problem}
\label{subsec:generalised-allocation-problem}
We now introduce allocation problems formally and define the terminology and technical tools required to reason about how \textsc{Allocate} successively reduces the initial allocation problem using the three procedures \textsc{NonMarginals}, \textsc{UnambiguousMarginals} and \textsc{ShiftProjectUnshift}. In general, we note that applying these procedures may result in bid lists that are no longer valid in the sense of Theorem~\ref{thm:validity}. (Figure~\ref{fig:ex3} in Section~\ref{sec:unambiguous} gives an example of how applying \textsc{NonMarginals} can result in an invalid bid list.) Instead, we introduce the notion of local validity that requires bid lists to be valid only in the neighbourhood of the market-clearing price $\pb$. For prices $\pb$ in this neighbourhood, we can define the demand correspondence $D_\bids(\pb)$ in the same way as for globally valid bid lists. Let $B(\pb, \varepsilon) \coloneqq \left \{ \yb \in \mathbb{R}^n \mid \|\yb-\pb\|_\infty < \varepsilon \right \}$ denote the open $\varepsilon$-ball with regard to the $L_\infty$-norm.

\begin{definition}\label{def:local-validity}
A list of bids $\bids$ is \textit{$\varepsilon$-valid at $\pb$} if the indirect utility function $f_{\bids}$ restricted to $B(\pb;\varepsilon)$ is convex. The bid list is \textit{locally valid at $\pb$} if there exists some $\varepsilon > 0$ for which it is $\varepsilon$-valid.
\end{definition}

This generalises the notion of valid bid lists introduced in Section~{\ref{sec:bids}}, and we correspondingly generalise Theorem~{\ref{thm:validity}} with the following proposition (proved in Appendix~{\ref{appendix:local-validity}}).

\begin{proposition}\label{prop:local-validity}
Let $\bids$ be an $\varepsilon$-valid bid list at $\pb$, and let $P = B(\pb;\varepsilon)$. The following are equivalent.
\begin{enumerate}
	\item The indirect utility function $f_\bids$ restricted to $P$ is convex.
	\item There is no price vector $\pb \in P$ and pair of distinct goods $i,i' \in [n]_0$ at which the weights $w(\bid)$ of the bids $\bid \in \bids$ marginal on $i$ and $i'$ sum to a negative number.
\end{enumerate}
\end{proposition}

\begin{definition}\label{defn:allocation-prob}
An \emph{allocation problem} is a 4-tuple $\mathcal{A}=[\pb,(\bids^j)_{j\in
J},(\mb^j)_{j\in J},\rb]$, where
\begin{enumerate}
\item $\pb \in \R^{n+1}$ is a price vector with $p_0=0$,
\item for each bidder $j \in J$, $\bids^j$ is a list of bids in
$\{0\} \times \left(\Z+\{0,\frac{1}{10}\}\right)^{n} \times \Z$ that is locally valid at $\pb$,
\item $\mb^j\in\Z^{n+1}_+$ for each $j\in J$ is the partial allocation,
\item $\rb\in\Z^{n+1}_+$ is the residual supply,
\end{enumerate}
and there exists a valid allocation $(\rb^j)_{j \in J}$ of the residual supply
$\rb$ to bidders, i.e.~$\rb^j \in D_{\bids^j}(\pb)$ and $\sum_{j \in J} \rb^j =
\rb$. For any such valid allocation, we say that $\tb^j \coloneqq \rb^j + \mb^j$ is a
\emph{solution} to $\mathcal{A}$.
\end{definition}

Note that the bids need not be integral and the bid lists for the bidders are only required to be locally valid at $\pb$. The intuition behind this definition is to capture the action of successively allocating items of the target bundle $\tb$ to bidders, which may break global validity but preserves local validity at $\pb$. For each bidder $j \in J$, the partial allocation $\mb^j$ is a bundle denoting the subset of $\tb$ already allocated to $j$, while $\rb$ denotes the remaining part of $\tb$ not yet allocated to any
bidder. The $0$-th coordinates of $\mb^j$ and $\rb$ denote the number of reject goods (not yet) allocated.

\begin{definition}\label{defn:reduction}
	We say that $\mathcal{A}'$ is a \emph{valid reduction from $\mathcal{A}$} if
$\mathcal{A}'$ is an allocation problem and all solutions to $\mathcal{A}'$ are
also solutions to $\mathcal{A}$.
\end{definition}

In the \emph{initial problem}, the residual supply is given by the target
bundle and the partial allocation vectors are $\mathbf{0}$. Note that $t_0$
denotes the total number of rejected bids and can be computed as the total
weight of bids minus the total number of items in $\tb$. In the \emph{vacuous
problem}, the bid lists are empty and the residual supply is $\mathbf{0}$.

\paragraph{Technical tools.}
The following observations and lemma will be used in Section~\ref{sec:shift}
to prove that applying the procedure \textsc{ShiftProjectUnshift} to an allocation problem $\mathcal{A}$ produces an allocation problem that is a valid reduction from $\mathcal{A}$. We define the \emph{surplus gap} of a bid $\bid$ at $\pb$ as the
difference between the utility derived from a demanded good and the maximum
utility derived from a non-demanded good. This allows us to describe the goods
that $\bid$ demands when we perturb the price or the bid by a small amount.
Moreover, we see in Lemma~\ref{obs:validity-margin} that the size of the
neighbourhood around $\pb$ in which a bid list is valid can be lower bounded by
the surplus gaps of the bids in $\bids$.
Formally, let $I = \argmax_{i \in [n]_0} b_i - p_i$ denote the goods demanded
by $\bid$ at $\pb$. Then if $I \not = [n]_0$, we define the \emph{surplus gap}
of $\bid$ at $\pb$ as
\[
	\max_{i \in [n]_0} (b_i - p_i) - \max_{i \in [n]_0 \setminus I} (b_i - p_i).
\]

\begin{observation}
\label{obs:no-new-marginals}
	If bid $\bid$ demands goods $I$ at $\pb$ with a surplus gap of at
least $\varepsilon$, then for any price $\pb' \in B(\pb, \varepsilon / 2)$, the
goods $I'$ demanded by $\bid$ at $\pb'$ form a subset of $I$.
\end{observation}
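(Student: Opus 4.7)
The plan is to unpack the surplus-gap hypothesis into a pairwise inequality and then bound how much this pairwise quantity can change under a price perturbation of norm less than $\varepsilon/2$. The whole thing is a one-step triangle inequality, so I do not expect a real obstacle; the only subtlety is keeping track of the reject good $0$, for which $b_0 = p_0 = 0$ by convention.

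First I would dispose of the trivial case $I = [n]_0$: then $I' \subseteq [n]_0 = I$ automatically, so we may assume $I \neq [n]_0$, in which case the surplus gap of $\bid$ at $\pb$ is defined and at least $\varepsilon$. Unpacking the definition, for every $i \in I$ and every $j \in [n]_0 \setminus I$,
\[
(b_i - p_i) - (b_j - p_j) \geq \varepsilon,
\]
where for $i=0$ or $j=0$ we use $b_0 = p_0 = 0$.

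Next, fix any $\pb' \in B(\pb,\varepsilon/2)$. By definition of the $L_\infty$-ball we have $|p_k - p'_k| < \varepsilon/2$ for every good $k \in [n]$, and trivially $p_0 = p'_0 = 0$. Then for any pair of goods $i,j$,
\[
\bigl[(b_i - p'_i) - (b_j - p'_j)\bigr] - \bigl[(b_i - p_i) - (b_j - p_j)\bigr]
= (p_i - p'_i) - (p_j - p'_j),
\]
whose absolute value is at most $|p_i - p'_i| + |p_j - p'_j| < \varepsilon$. Combining with the surplus-gap inequality above, for every $i \in I$ and every $j \in [n]_0 \setminus I$,
\[
(b_i - p'_i) - (b_j - p'_j) > \varepsilon - \varepsilon = 0.
\]
Hence no $j \notin I$ can achieve the maximum of $b_k - p'_k$ over $k \in [n]_0$, so the set $I'$ of goods demanded by $\bid$ at $\pb'$ is contained in $I$.

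The main thing to verify carefully is the strict inequality at the end: because $B(\pb,\varepsilon/2)$ is an open ball, the perturbation of each coordinate is \emph{strictly} less than $\varepsilon/2$, which is exactly what ensures $(b_i - p'_i) - (b_j - p'_j) > 0$ and hence $I' \subseteq I$ rather than merely a weaker comparability statement. This is the one detail worth flagging; everything else is bookkeeping with the reject good $0$.
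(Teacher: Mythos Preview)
Your argument is correct. The paper states this result as an ``Observation'' and gives no proof; your direct verification via the surplus-gap inequality and an $L_\infty$ perturbation bound is exactly the intended routine check, including the handling of the reject good and the use of the open ball to secure the strict inequality.
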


\begin{observation}
\label{obs:no-new-marginals-2}
	If bid $\bid$ demands goods $I$ at $\pb$ with a surplus gap of at least $\varepsilon$, then for any bid $\bid' \in B(\bid, \varepsilon /4)$ and price $\pb' \in B(\pb, \varepsilon/4)$, the goods $I'$ demanded by $\bid'$ at $\pb'$ form a subset of $I$.
\end{observation}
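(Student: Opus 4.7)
The plan is to reduce the statement to a direct application of the triangle inequality on the surplus expressions $b_i - p_i$. The core observation is that when both the bid vector and the price vector move by less than $\varepsilon/4$ in the $L_\infty$-norm, the surplus of any good changes by less than $\varepsilon/2$, and this is strictly smaller than the surplus gap $\varepsilon$. Consequently the ranking between goods in $I$ and goods outside $I$ is preserved.

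Concretely, I would first write, for every good $i \in [n]_0$, the identity
\[
(b'_i - p'_i) - (b_i - p_i) = (b'_i - b_i) - (p'_i - p_i),
\]
and apply the triangle inequality together with $\|\bid' - \bid\|_\infty < \varepsilon/4$ and $\|\pb' - \pb\|_\infty < \varepsilon/4$ to deduce
\[
\bigl| (b'_i - p'_i) - (b_i - p_i) \bigr| < \tfrac{\varepsilon}{2}.
\]
Let $M \coloneqq \max_{k \in [n]_0}(b_k - p_k)$. For $i \in I$ this gives $b'_i - p'_i > M - \varepsilon/2$, while for $i \notin I$ the defining inequality of the surplus gap $b_i - p_i \leq M - \varepsilon$ yields $b'_i - p'_i < M - \varepsilon/2$. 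Hence any good $i' \in I'$, being a maximiser of $b'_k - p'_k$, must lie in $I$, which is exactly the claim $I' \subseteq I$.

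The only minor point to check is the treatment of the notional reject good $0$, whose entries satisfy $b_0 = p_0 = 0$ by convention; since perturbations preserve this convention, the above inequality chain continues to hold trivially for $i = 0$. I do not expect any serious obstacle here: the argument is essentially a two-sided analogue of Observation~\ref{obs:no-new-marginals}, with the perturbation budget of $\varepsilon/2$ split equally between the bid and the price, so that their combined effect on any surplus is strictly controlled by $\varepsilon/2$ and hence cannot bridge the gap of $\varepsilon$.
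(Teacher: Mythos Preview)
Your argument is correct and is precisely the intended one: the paper states this as an Observation without proof, and the direct surplus-comparison via the triangle inequality that you give is exactly the self-evident verification the reader is meant to supply. The only cosmetic point is that your caveat about the reject good is unnecessary---in the paper's conventions the $0$-th coordinates of both bids and prices are fixed at $0$ and are not perturbed, so the inequality $\bigl|(b'_0 - p'_0) - (b_0 - p_0)\bigr| = 0 < \varepsilon/2$ holds automatically.
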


\begin{lemma}
\label{obs:validity-margin}
	If $\bids$ is $\delta$-valid at $\pb$ for some $\delta > 0$ and
all bids in $\bids$ have a surplus gap of at least~$\varepsilon$ at $\pb$, then $\bids$ is $\varepsilon / 2$-valid at $\pb$.
\end{lemma}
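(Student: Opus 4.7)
The plan is to verify criterion~(1) of Definition~\ref{def:local-validity}, i.e.\ to show that $f_\bids$ is convex on $B(\pb,\varepsilon/2)$. The key device is a globally defined ``truncated'' function $\tilde f_\bids$ that (i) agrees with $f_\bids$ on $B(\pb,\varepsilon/2)$ and (ii) is positively $1$-homogeneous about $\pb$ once we subtract the constant $\tilde f_\bids(\pb)$. Together these let me bootstrap convexity on the (possibly very small) ball $B(\pb,\min(\delta,\varepsilon/2))$ guaranteed by $\delta$-validity up to convexity on the entire ball $B(\pb,\varepsilon/2)$.

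Concretely, for each $\bid\in\bids$ write $I_\bid$ for the set of goods it demands at $\pb$ and set $\tilde f_\bids(\pb'):=\sum_{\bid\in\bids} w(\bid)\max_{i\in I_\bid}(b_i-p'_i)$. By Observation~\ref{obs:no-new-marginals} and the surplus-gap hypothesis, for any $\pb'\in B(\pb,\varepsilon/2)$ and any $\bid$ the demanded set of $\bid$ at $\pb'$ lies inside $I_\bid$, so the max over all goods coincides with the max over $I_\bid$; thus $f_\bids=\tilde f_\bids$ on $B(\pb,\varepsilon/2)$. Moreover, $b_i-p_i$ takes a common value $c_\bid$ for $i\in I_\bid$, so a direct computation gives $\max_{i\in I_\bid}(b_i-p_i-t\Delta_i)-c_\bid=-t\min_{i\in I_\bid}\Delta_i$ for every $t\geq 0$. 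Summing over $\bid$, the function $h(\Delta):=\tilde f_\bids(\pb+\Delta)-\tilde f_\bids(\pb)$ satisfies $h(t\Delta)=t\,h(\Delta)$ for all $t\geq 0$ and all $\Delta$.

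By $\delta$-validity the function $f_\bids$, and hence $\tilde f_\bids$, is convex on the non-empty neighbourhood $B(\pb,\min(\delta,\varepsilon/2))$; equivalently, $h$ is convex on a neighbourhood of the origin. A standard rescaling argument then lifts this to global convexity of $h$ (hence of $\tilde f_\bids$): given $\Delta_1,\Delta_2$ and $\lambda\in[0,1]$, choose $\alpha>0$ small enough that $\alpha\Delta_1,\alpha\Delta_2$ lie in the neighbourhood, apply local convexity to $\alpha\Delta_1$ and $\alpha\Delta_2$, and then divide through by $\alpha$ using $h(\alpha\Delta)=\alpha h(\Delta)$. Restricting to $B(\pb,\varepsilon/2)$ and invoking the agreement $f_\bids=\tilde f_\bids$ on that ball yields the desired $\varepsilon/2$-validity. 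The main obstacle is recognising the correct ``localisation'' $\tilde f_\bids$ of $f_\bids$ and observing that positive homogeneity about $\pb$ upgrades local convexity to global convexity; once these are in place the remaining work is a short computation using Observation~\ref{obs:no-new-marginals} and the fact that $b_i-p_i$ is constant on $I_\bid$.
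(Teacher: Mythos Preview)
Your proof is correct and follows essentially the same approach as the paper: both exploit that, by Observation~\ref{obs:no-new-marginals} and the constancy of $b_i-p_i$ on each $I_\bid$, the function $f_\bids$ is radially linear (equivalently, positively homogeneous after subtracting $f_\bids(\pb)$) on $B(\pb,\varepsilon/2)$, and then use a rescaling argument to transport convexity from the small $\delta$-ball to the $\varepsilon/2$-ball. The only difference is cosmetic: you package the radial linearity by introducing the globally defined truncated function $\tilde f_\bids$ and proving it globally convex, whereas the paper works directly with $f_\bids$ and verifies midpoint convexity on $B(\pb,\varepsilon/2)$ via linearity on segments from $\pb$.
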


We prove this lemma in Appendix~\ref{appendix:surplus-validity}.

\subsection{The marginal bids graph and the derived graph}
\label{subsec:graphs}
With every allocation problem~$\mathcal{A}$ we associate a marginal bids
multigraph and a derived graph. As mentioned above, the marginal bids multigraph allows us to provide upper bounds on the running time of our allocation algorithm. Moreover, in order to decide whether to apply \textsc{UnambiguousMarginals} or \textsc{ShiftProjectUnshift} and to determine the input parameters for these procedures, our \textsc{Allocate} algorithm calls the subroutine \textsc{FindParams}, defined in this section, which first constructs the derived graph and then attempts to find a maximal open or closed path in this graph.

\begin{definition}\label{def:mbg}
	The \emph{marginal bids graph} $G_\mathcal{A}$ associated with
allocation problem $\mathcal{A}$ is an undirected edge-labelled multigraph
whose vertices are the goods $[n]_0$ (including the `reject' good).
$G_\mathcal{A}$ has an edge $(i,i')$ labelled with $j$ if bidder $j$ has a bid
that is marginal at $\pb$ between $i$ and $i'$.
\end{definition}

For any bidder $j$, let $G_\mathcal{A}[j]$ denote the subgraph of $G_\mathcal{A}$ induced by all $j$-labelled edges; note that this graph is simple, as there is at most one edge labelled
$j$ connecting two goods $i$ and $i'$ in $G_\mathcal{A}$.
A pair $(I,j)$ with $I \subseteq [n]_0$
and $j \in J$ is a \emph{\keylist{}} if $I$ is the set of vertices of some
connected component of $G_\mathcal{A}[j]$. Intuitively, this captures the dependencies between the demands of (and thus the possible allocations to) bidder $j$'s bids. A vertex is a \emph{\linkgood{}} if its incident edges are labelled with more than one bidder. (Effectively, a \linkgood{} expresses a dependency between \keylists{} of different bidders.) We call a cycle in $G_\mathcal{A}$ a \emph{multi-bidder cycle} if it contains edges labelled by different bidders. A vertex in such a cycle is called a \emph{cycle-\linkgood{}} if its two edges in the cycle are labelled differently. Note that any multi-bidder cycle has at least two cycle-\linkgood{}s{}. Like \linkgoods{}, multi-bidder cycles and their cycle-\linkgoods{} capture the dependencies between demands of (and thus possible allocations to) different bidders.

\begin{definition}\label{def:ubergraph}\label{def:derived-graph}
The \emph{derived graph} $D_\mathcal{A}$ associated with allocation problem
$\mathcal{A}$ is a simple bipartite graph whose two disjoint and independent
vertex sets are the set of \linkgood{}s{} and the set of \keylists{} of
$G_\mathcal{A}$. There is an edge between \linkgood{} $i$ and \keylist{} $(I,j)$ if $i \in I$.
\end{definition}

Note that by the definition of \linkgood{}s{}, every \linkgood{} is adjacent to at
least two \keylists{} in the derived graph. This is consistent with our observation that a link good can be interpreted as a link between connected components of $G_\mathcal{A}[j]$ for different bidders $j$. Figure~\ref{fig:graphs-ex} gives an example of a marginal bids multigraph and its corresponding derived graph.

\begin{figure}[t!]
	\centering
	\begin{subfigure}[b]{0.4\textwidth}
	\centering
	\includegraphics{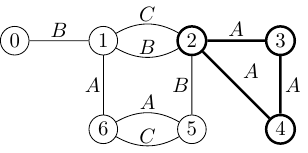}
	\caption{A marginal bids multigraph.}
	\label{fig:marginal-bids-graph}
	\end{subfigure}
	\begin{subfigure}[b]{0.59\textwidth}
	\centering
	\includegraphics{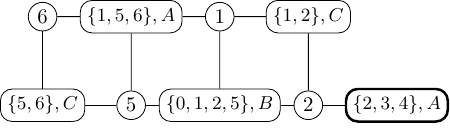}
	\caption{A derived graph.}
	\label{fig:derived-graph}
	\end{subfigure}
	\caption{Example of a marginal bids multigraph (a) and derived graph
(b) with three bidders $A,B,C$ and six goods $1,\ldots,6$ together with reject good $0$. Goods $1,2,5,6$ are \linkgood{}s{}. Goods $2,3,4$ form a \keylist{} $(\{2,3,4\},A)$ with one \linkgood{}, represented by a leaf \keylist{} in the derived graph (bolded in both graphs).}
	\label{fig:graphs-ex}
\end{figure}

We describe a procedure to construct the derived graph in near-linear time
$O(\alpha(n)n|\bids|)$ using a disjoint-union data structure. Here
$\alpha(\cdot)$ is the inverse Ackermann function, which grows extremely slowly
and is near-constant in our context, as $\alpha(n) \leq 4$ for any $n \leq
2^{2^{2^{2^{16}}}}$. The disjoint-union data structure
(cf.~\citep{TarjanvanLeeuwen1984}) maintains a representation of a set
partition and admits an $\alpha(n)$-time operation to merge two subsets of the
partition. Internally, it assigns a distinct label to each subset of the
partition and provides $\alpha(n)$-time access to the label of the subset in
which a given element lies.

First we show how to compute the \keylists{} for each bidder. Fix a bidder $j$
and initialise the disjoint-union data structure. For each bid $\bid \in
\bids^j$, compute the marginal goods $S$ at prices~$\pb$ and, fixing any $i \in
S$, merge the sets containing $i$ and $i'$ for all $i' \in S \setminus \{i\}$.
Now the data structure has learnt the vertex partition induced by the connected
components of $G_\mathcal{A}$ in time $O(\alpha(n) n |\bids^j|)$.
In order to recover this partition and express it as a family of sets, we initialise an empty family $\mathcal{K}$ of sets, where each set in $\mathcal{K}$ will have an associated label. For each good $i \in [n]$, determine the label $l$ of $i$'s subset in the data structure. If there already is a set in $\mathcal{K}$ with label $l$, add $i$ to this set. Otherwise, add the new singleton set $\{i\}$ with label $l$ to $\mathcal{K}$. Finally, iterate through $\mathcal{K}$ and delete all singleton sets. This takes time $O(\alpha(n) n)$. Now $\mathcal{K}$ represents the family of \keylists{} of bidder $j$. Hence in total it takes time $O(\alpha(n)n|\bids|)$ to compute the \keylists{} for all bidders.

In order to compute the \linkgood{}s{}, iterate through the \keylists{} of all bidders and count the number of times each good appears. If a good appears at least twice, it is a \linkgood{}. Once we know which good is a \linkgood{}, we can compute the edges of the derived graph: for each \keylist{} $(I,j)$, add an edge between $(I,j)$ and each \linkgood{} in $I$. Iterating through the \keylists{} of all bidders takes $O(nm) = O(n|\bids|)$ time.

\paragraph{The \textsc{FindParams} subroutine.}
In every iteration, the \textsc{Allocate} algorithm employs a subroutine to decide, using the derived graph, whether to invoke \textsc{UnambiguousMarginals} or \textsc{ShiftProjectUnshift}, and to compute the input parameters for the respective procedure. This subroutine \textsc{FindParams}, given in Algorithm~\ref{alg:findparams}, takes as input an allocation problem and returns one of three possible outputs: a \keylist{} $(I,j)$ with no \linkgood{}s{}, a \keylist{} $(I,j)$ with one \linkgood{} denoted~$i^*$, or a cycle-\linkgood{} $i^*$ and the label $j^*$ of one of its incident edges in the multi-bidder cycle. The \textsc{Allocate} algorithm invokes \textsc{UnambiguousMarginals} if \textsc{FindParams} returns a \keylist{} and \textsc{ShiftProjectUnshift} if it returns a cycle-\linkgood{} and edge~label.

\textsc{FindParams} works by walking through the derived graph $D_{\mathcal{A}}$ to find a maximal path. A \emph{path} in a graph is a sequence of distinct vertices $v_1, \ldots, v_k$ such that $v_i$ and $v_{i+1}$ are connected by an edge for all $i \in [k-1]$; it is \emph{maximal} in a given graph if we cannot extend the path with any vertex $v_{k+1}$ of the graph. A path is \emph{closed} if there is an edge between $v_1$ and $v_k$, and \emph{open} otherwise. (Hence a closed path is a cycle.) Note that a path in $D_{\mathcal{A}}$ alternates between link good and demand cluster vertices.

\begin{algorithm}[bht]
\caption{\textsc{FindParams}}
\label{alg:findparams}
\begin{algorithmic}[1]
  \State Compute the derived graph $D_\mathcal{A}$.
  \If {$D_\mathcal{A}$ contains an isolated \keylist{}}
    \State \Return this \keylist{} (without a \linkgood{}).
  \EndIf
  \State Starting from any \linkgood{}, take a walk in $D_\mathcal{A}$ to find a maximal path.
  \State Let $i^*$ and $(I,j^*)$ be the last \linkgood{}
and \keylist{} visited, respectively.
  \If {the path is open}
    \State \Return $(I,j^*)$ and $i^*$.
  \Else
    \State \Return $i^*$ and $j^*$.
  \EndIf
\end{algorithmic}
\end{algorithm}

\begin{lemma}
	In time $O(\alpha(n)n|\bids|)$, \textsc{FindParams} returns a \keylist{}
$(I,j^*)$ with no \linkgood{}s{}, a \keylist{} $(I,j^*)$ with one \linkgood{} $i^*$, or
a cycle-\linkgood{} $i^*$ and the label $j^*$ of one of its incident edges in the
multi-bidder cycle.
\end{lemma}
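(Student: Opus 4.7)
The plan is to argue the runtime bound and the correctness of each output case separately.

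\textbf{Runtime.} The derived graph $D_\mathcal{A}$ has $O(n + |\bids|)$ vertices (at most $n+1$ \linkgoods{} and at most $|\bids|$ \keylists{}, since every \keylist{} is a connected component of some $G_\mathcal{A}[j]$ and each such component contains at least one marginal bid) and $O(n|\bids|)$ edges. The construction described in the preceding paragraphs runs in $O(\alpha(n) n |\bids|)$ time. The subsequent maximal alternating walk traverses each edge of $D_\mathcal{A}$ at most once, for $O(n|\bids|)$ additional time. Testing for an isolated \keylist{} and checking for openness (an edge between the first and last vertex of the walk) are likewise linear. This yields the $O(\alpha(n) n |\bids|)$ overall bound.

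\textbf{Correctness.} I handle the three output cases. Case~1 is immediate: an isolated \keylist{} has no \linkgood{} neighbor in $D_\mathcal{A}$ by definition. For Case~2 (open path), the walk terminates at some \keylist{} $(I, j^*) = K_l$ preceded by \linkgood{} $i^*$. Reading ``maximal'' in the trail-like sense (no unused edge remains at the terminal vertex, although vertices may be revisited), every \linkgood{} neighbor of $K_l$ in $D_\mathcal{A}$ must already have been reached by the walk through its edge to $K_l$. Combined with the alternating structure of the walk and the fact that the path is open, this forces $K_l$'s only \linkgood{} neighbor to be $i^*$, giving a \keylist{} with exactly one \linkgood{}.

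For Case~3 (closed path), I lift the cycle $C = i_1, K_1, \ldots, i_l, K_l, i_1$ in $D_\mathcal{A}$ to a closed walk $W$ in $G_\mathcal{A}$ as follows: for each \keylist{} $K_k = (I_k, j_k)$ on $C$, replace the pair of $C$-edges $\{i_k, K_k\}, \{K_k, i_{k+1}\}$ with a path $P_k$ in $G_\mathcal{A}[j_k]$ from $i_k$ to $i_{k+1}$ (indices modulo $l$); such a $P_k$ exists since $K_k$ is a connected component of $G_\mathcal{A}[j_k]$ containing both \linkgoods{}. Two \keylists{} of the same bidder are disjoint connected components of $G_\mathcal{A}[j]$, so consecutive \keylists{} on $C$ have distinct bidder labels; in particular $j_{l-1} \neq j_l = j^*$, and the two edges of $W$ incident to $i^* = i_l$ (the last edge of $P_{l-1}$ and the first edge of $P_l$) carry distinct labels $j_{l-1}$ and $j^*$. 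Shortcutting $W$ at internally repeated vertices produces a simple cycle through $i^*$ in $G_\mathcal{A}$ that retains these two edges at $i^*$, yielding a multi-bidder cycle in which $i^*$ is a cycle-\linkgood{} and $j^*$ labels one of its two incident cycle-edges.

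\textbf{Main obstacle.} The subtlest point is pinning down the termination condition in Case~2: the ``\keylist{} has one \linkgood{}'' conclusion relies on interpreting the walk as a trail (distinct edges, possibly repeated vertices) that halts only when no unused edge remains at the current vertex. Given this reading, the open-case claim is immediate. The closed-case shortcutting in Case~3 is a standard graph-theoretic argument; care is needed only to ensure that the two distinguishing edges at $i^*$, which sit at the boundary between $P_{l-1}$ and $P_l$ in $W$, are not excised during the shortcutting process, which holds because shortcutting only removes detours between two visits to an internal repeated vertex.
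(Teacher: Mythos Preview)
Your overall approach---construct the derived graph, walk through it, and split into the three output cases---matches the paper's. The runtime argument is fine (the paper observes more sharply that any path in $D_\mathcal{A}$ has length at most $2n$, but your edge-count bound also stays within $O(\alpha(n)n|\bids|)$).

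The Case~2 argument has a genuine gap. You reinterpret the walk as a trail (distinct edges, possibly repeated vertices), and then claim that terminating at a \keylist{} $K_l$ with all incident edges exhausted forces $K_l$ to have a single \linkgood{} neighbour. This does not follow: a trail that passes through $K_l$ once internally and then terminates there uses three edges at $K_l$, so $K_l$ can have three (or any odd number of) \linkgood{} neighbours while the walk is still ``open'' in the sense of not returning to its start. The paper's (terse) argument is different: when the walk is a genuine \emph{path} (distinct vertices) that cannot be extended, the terminal vertex has only its incoming edge incident to it, hence degree one; since every \linkgood{} has degree at least two in $D_\mathcal{A}$, the terminal vertex must be a leaf \keylist{} with exactly one \linkgood{}. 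Your trail reading obscures this, and the stated justification (``combined with the alternating structure of the walk and the fact that the path is open'') does not explain why $K_l$ cannot have higher degree.

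Your Case~3 is more explicit than the paper's, which simply asserts the existence of the multi-bidder cycle. However, the shortcutting step is not airtight as written: you assert that the two distinguished edges at $i^*$ survive because they sit at the boundary between $P_{l-1}$ and $P_l$, but you have not ruled out that $i^*$ is \emph{also} an internal vertex of some other $P_k$ (which happens whenever $i^*$ lies in a third \keylist{} $K_k$ on the derived-graph cycle). In that case a na\"{i}ve shortcut between two occurrences of $i^*$ can discard one of the intended edges. A cleaner fix is to cyclically rebase $W$ at $i^*$, first shortcut at $i^*$ itself so that it appears only at the endpoints while retaining the outgoing $j^*$-labelled edge and an incoming edge of a different label, and only then shortcut the remaining internal repetitions.
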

\begin{proof}
Note that any path in the derived graph $D_{\mathcal{A}}$ has length at most $2n$. Hence constructing the derived graph, which takes time $O(\alpha(n)n|\bids|)$, dominates the running time. By construction of the derived graph, an isolated \keylist{} in
$D_\mathcal{A}$ contains no \linkgood{}s{}, while a leaf \keylist{} contains exactly
one \linkgood{}. Hence if there is an isolated \keylist{}, the subroutine returns a
\keylist{} without a \linkgood{}.
If the path found is open, the last vertex must be a leaf \keylist{} and the subroutine returns a \keylist{} with single \linkgood{}. Now suppose the path is closed and consider only the cycle formed by alternating \linkgood{} and \keylist{} vertices. Firstly, note that there is a path between consecutive \linkgood{}s{} in the marginal bids graph $G_\mathcal{A}$ using only $j$-labelled edges. Secondly, consecutive \keylists{} have different bidder
labels. These two observations imply that $G_\mathcal{A}$ contains a simple
multi-bidder cycle with cycle-\linkgood{} $i^*$ and an incident edge labelled
with $j^*$.
\end{proof}

\subsection{Allocating unambiguous bids}
\label{sec:unambiguous}
\begin{figure}[t!]
	\centering
	\begin{subfigure}[b]{0.49\textwidth}
		\centering
		\includegraphics{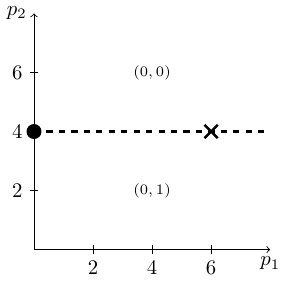}
		\caption{Alice's bid list after applying \textsc{NonMarginals}.}
		\label{fig:Alice-proc1}
	\end{subfigure}
	\begin{subfigure}[b]{0.49\textwidth}
		\centering
		\includegraphics{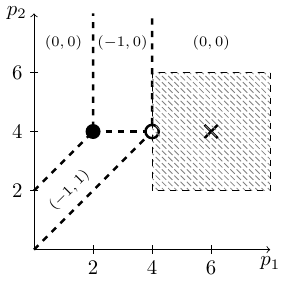}
		\caption{Bob's list after applying \textsc{NonMarginals}.}
		\label{fig:Bob-proc1}
	\end{subfigure}
	\caption{The resulting bid lists after applying \textsc{NonMarginals}
	(Algorithm~\ref{alg:proc1}) to
Alice's and Bob's bid lists from Figure~\ref{fig:ex1} at prices $\pb = (6,4)$
(marked by a cross). Note that Alice's bid list remains globally valid, whereas
Bob's bid list is only locally $2$-valid at $\pb$ (indicated by the hatched
square).}
	\label{fig:ex3}
\end{figure}

\paragraph{Non-marginal bids.}
Suppose bidder $j$ has a non-marginal bid $\bid$ at market-clearing
prices~$\pb$ that demands good $i \in [n]_0$ in the allocation problem
$\mathcal{A}$. Then this bid contributes exactly $w(\bid) \in \{-1,1\}$ items
of good $i$ to any solution of $\mathcal{A}$. Hence we can unambiguously
allocate these items to $\mb^j$ and remove them from the residual supply $\rb$,
thus accepting the non-marginal bid on the appropriate good.
\textsc{NonMarginals}, given in Algorithm~\ref{alg:proc1}, processes all
non-marginal bids in this way. Note that while this operation may not preserve
global validity of bid lists, the resulting lists remain locally valid at $\pb$,
so that the result is a valid allocation problem. Figure~\ref{fig:ex3} gives an
example. \textsc{Allocate} calls \textsc{NonMarginals} in every iteration in order
to process non-marginal bids that can be allocated unambiguously.

\begin{algorithm}[ht!]
\caption{\textsc{NonMarginals} (accept non-marginal bids)}
\label{alg:proc1}
\begin{algorithmic}[1]
\State \textbf{Input:} Allocation problem $\mathcal{A} = [\pb,
(\bids^j)_{j \in J}, (\mb^j)_{j \in J}, \rb]$.
\State \textbf{Output:} Reduced allocation problem $\mathcal{A}'$ without
non-marginal bids.
\ForAll{bidders $j \in J$}
	\ForAll{non-marginal bids $\bid \in \bids^j$}
  	\State Determine unique good $i$ demanded by $\bid$ and remove $\bid$ from $\bids^j$.
	  \State Increment $m^j_i$ by $w(\bid)$.
	  \State Decrement $r_i$ by $w(\bid)$.
	\EndFor
\EndFor
\end{algorithmic}
\end{algorithm}

\begin{lemma}\label{lem:obvious}
	Given a valid allocation problem $\mathcal{A}$, \textsc{NonMarginals} (Algorithm~\ref{alg:proc1})
outputs a reduction $\mathcal{A}'$ of $\mathcal{A}$ in linear time. Moreover,
we have $G_\mathcal{A} = G_{\mathcal{A}'}$.
\end{lemma}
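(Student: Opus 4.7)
The plan is to verify four assertions: (a) the bid lists of $\mathcal{A}'$ remain locally valid at $\pb$ and a valid allocation exists (so $\mathcal{A}'$ is a bona fide allocation problem); (b) every solution of $\mathcal{A}'$ is a solution of $\mathcal{A}$; (c) the running time is linear; and (d) $G_\mathcal{A}=G_{\mathcal{A}'}$. Assertions (c) and (d) are immediate: computing each bid's surplus maximisers takes $O(n)$, giving total time $O(n|\bids|)$, linear in the input size; and since the edges of the marginal bids multigraph correspond only to bids marginal at $\pb$, deleting non-marginal bids contributes no edge changes.

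For (a), I would argue one deletion at a time. Let $\bid\in\bids^j$ be non-marginal with unique demanded good $i$ and surplus gap $\varepsilon>0$. By Observation~\ref{obs:no-new-marginals}, $\bid$ still uniquely demands $i$ throughout $B(\pb,\varepsilon/2)$, so on this ball its contribution to $f_{\bids^j}$ is the linear function $\pb'\mapsto w(\bid)(b_i-p'_i)$. Writing $f_{\bids^j\setminus\{\bid\}}=f_{\bids^j}-w(\bid)(b_i-p'_i)$ and intersecting with the ball of local validity of $\bids^j$ at $\pb$, convexity is preserved because subtracting a linear function preserves convexity. Iterating over all non-marginal bids in all $\bids^j$ (and shrinking the ball to the minimum of the relevant $\varepsilon/2$) yields local validity of every reduced list at $\pb$.

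For (b), the key identity is that for a non-marginal bid $\bid$ uniquely demanding $i$ we have
\[
D_{\bids^j}(\pb)=D_{\bids^j\setminus\{\bid\}}(\pb)+w(\bid)\eb^i.
\]
Indeed, in every UDR inside $B(\pb,\varepsilon/2)$, bid $\bid$ contributes exactly $w(\bid)\eb^i$ to the demanded bundle (again by Observation~\ref{obs:no-new-marginals}), so each nearby UDR bundle decomposes as a bundle demanded without $\bid$ plus the fixed vector $w(\bid)\eb^i$; this translation commutes with taking the discrete convex hull, giving the identity at $\pb$ itself. Consequently the map $\rb^j\mapsto\rb^j-w(\bid)\eb^i$ bijects valid allocations of $\rb$ in the pre-deletion problem with valid allocations of $\rb-w(\bid)\eb^i$ after deletion, and the solution bundles $\tb^j=\mb^j+\rb^j$ are preserved since $\mb^j$ is simultaneously incremented by $w(\bid)\eb^i$. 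Iterating over all non-marginal bids establishes both existence of a valid allocation in $\mathcal{A}'$ and the claim that solutions to $\mathcal{A}'$ solve $\mathcal{A}$.

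The main subtlety is the commutation of the discrete convex hull with translation in (b): because $\pb$ is itself marginal in general, $D_{\bids^j}(\pb)$ is defined via bundles in neighbouring UDRs rather than directly, so one must use the uniform surplus gap of $\bid$ to ensure every nearby UDR receives the same fixed contribution $w(\bid)\eb^i$. Everything else reduces to bookkeeping.
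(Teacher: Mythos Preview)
Your proposal is correct and follows essentially the same approach as the paper: local validity is preserved because removing a non-marginal bid subtracts a locally linear term from $f_{\bids^j}$, and the marginal bids graph is unchanged because only non-marginal bids are deleted. The paper dismisses part~(b) as ``straightforward'' without writing down the identity $D_{\bids^j}(\pb)=D_{\bids^j\setminus\{\bid\}}(\pb)+w(\bid)\eb^i$ or the translation/discrete-convex-hull argument you give, so your version is in fact more explicit on the one point the paper leaves to the reader.
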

\begin{proof}
	Let $\mathcal{A}' = [\pb, (\bids^j)'_{j \in J}, (\mb^j)'_{j \in J},
\rb']$ be the output of \textsc{NonMarginals}. First we show that~$\mathcal{A}'$
is a valid allocation problem. Fix some $j\in J$. Since $\bids^j$ is locally
valid by assumption, the indirect utility function $f_{\bids^j}$ is convex in
some small neighbourhood of $\pb$. Allocating a non-marginal bid to bidder $j$
corresponds to subtracting from the utility function $f_{\bids^j}(q)$ the term
$\max_{i \in [n]_0} (b_i - q_i)$. In a sufficiently small neighbourhood of
$\pb$, this term is linear, so the resulting utility function is also convex in
some open neighbourhood of $\pb$.
	It is straightforward to see that $(\tb^j)_{j \in J}$ is a solution to
$\mathcal{A}'$ if and only if it is a solution to $\mathcal{A}$,
so~$\mathcal{A}'$ is a reduction of~$\mathcal{A}$. To see that $G_\mathcal{A} =
G_{\mathcal{A}'}$, note that the marginal bids are unchanged.
\end{proof}

\paragraph{Unambiguous marginal bids.}
Let $\mathcal{A}$ be an allocation problem without non-marginal bids and suppose \textsc{FindParams} returns a \keylist{} $(I,j)$ with at most one \linkgood{}. \textsc{UnambiguousMarginals} is invoked by \textsc{Allocate} to unambiguously allocate the bids specified by this \keylist{}. By the definition of a \keylist{}, none of bidder $j$'s bids are marginal between items in both $I$ and $[n]_0 \setminus I$. Let $\bids^j_I$ denote the bids marginal on goods in $I$. All bids $\bid \in \bids^j_I$ contribute an item of a good in $I$ to bidder $j$'s final allocation bundle, while all other bids contribute an item of a good not in~$I$.

If the \keylist{} $(I,j)$ has no \linkgood{}s{}, none of the bids by bidders other
than $j$ are marginal on~$I$, so all items of goods $i \in I$ in the residual
supply must be allocated to $j$. Hence we reduce our allocation problem by
transferring $r_i$ items from the residual supply $\rb$ to bidder $j$'s partial
allocation bundle $\mb^j$ for each $i \in I$ and removing all bids marginal on items in $I$ from $\bids^j$.

If the \keylist{} $(I,j)$ has a single \linkgood{} $i^*$, all items of goods in $I
\setminus \{ i^* \}$ must be allocated to $j$ by the same argument as above.
Secondly, the bids in $\bids^j_I$ must be allocated a total of
$\sum_{\bid \in \bids^j_I} w(\bid)$ units of goods from $I$. Hence the difference
$\sum_{\bid \in \bids^j_I} w(\bid) - \sum_{i \in I \setminus \{ i^* \} } r_i$
gives us the number of items of $i^*$ that must be allocated to bidder $j$. This yields Algorithm~\ref{alg:proc2}.

\begin{algorithm}[htb]
\caption{\textsc{UnambiguousMarginals} (process all unambiguous marginal bids)}
\label{alg:proc2}
\begin{algorithmic}[1]
\State \textbf{Input:} Allocation problem $\mathcal{A} = [\pb, (\bids^j)_{j \in
J}, (\mb^j)_{j \in J}, \rb]$ and \keylist{} $(I,j)$ with at most one \linkgood{}
$i^*$.
\State \textbf{Output:} Valid reduction $\mathcal{A}'$ of $\mathcal{A}$ with no bids marginal on goods in $I$.
\If {$(I,j)$ has no \linkgood{}s{}}
  \ForAll {$i \in I$}
    \State Increment $m^j_i$ by $r_i$ and set $r_i$ to $0$.
  \EndFor
\Else
  \State Compute $d = \sum_{\bid \in \bids^j_I} w(\bid) - \sum_{i \in I
\setminus \{ i^* \} } r_i$.
  \State Increment $m^j_{i^*}$ by $d$ and decrement $r_{i^*}$ by $d$.
  \ForAll {$i \in I \setminus \{ i^* \}$}
    \State increment $m^j_i$ by $r_i$ and set $r_i$ to $0$.
  \EndFor
\EndIf
\State Remove all bids marginal on $I$ from $\bids^j$.
\end{algorithmic}
\end{algorithm}

\begin{lemma}\label{lemma:procedure2}
	\textsc{UnambiguousMarginals}, given in Algorithm~\ref{alg:proc2},
	returns a valid reduction of the input allocation
	problem in time $O(n|\bids^j|)$. Moreover, the marginal bids graph
$G_{\mathcal{A}'}$ of $\mathcal{A}'$ has strictly fewer edges than
$G_\mathcal{A}$.
\end{lemma}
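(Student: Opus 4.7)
The plan is to establish the three claims—valid reduction, running time $O(n|\bids^j|)$, and strict decrease in the edge count of the marginal bids graph—in sequence, leveraging the key structural property that a \keylist{} $(I,j)$ isolates a subset of bidder $j$'s bids within the goods $I$.

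First I would observe that because $I$ is the vertex set of a connected component of $G_\mathcal{A}[j]$, every bid of bidder~$j$ demands at $\pb$ either only goods in $I$ or only goods in $[n]_0 \setminus I$; otherwise it would contribute an edge of $G_\mathcal{A}[j]$ crossing from $I$ to its complement. Writing $\bids^j_I$ and $\bids^j_{\bar I}$ for the corresponding sublists, Observation~\ref{obs:no-new-marginals} yields the local decomposition $f_{\bids^j}(\qb) = f_{\bids^j_I}(\qb) + f_{\bids^j_{\bar I}}(\qb)$ in a small neighbourhood of $\pb$, with the two summands depending on disjoint coordinate sets $I$ and $[n]_0 \setminus I$. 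The assumed local convexity of $f_{\bids^j}$ then implies convexity of each summand (fix the other variables to $\pb$'s coordinates); hence $\bids^j_{\bar I}$ is locally valid at $\pb$, while other bidders' bid lists are untouched.

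Next I would verify existence of a valid allocation for $\mathcal{A}'$ and the valid-reduction property simultaneously. Starting from any valid allocation $(\rb^k)_{k\in J}$ of $\mathcal{A}$: since no bid of any $k \neq j$ demands a good in $I \setminus \{i^*\}$ at $\pb$ (by absence of both non-marginal bids and non-$j$ labels on goods $I \setminus \{i^*\}$), $\rb^k$ has zero components on $I \setminus \{i^*\}$, which forces $\rb^j$ to absorb all of $\rb|_{I \setminus \{i^*\}}$. Combined with the conservation identity $\sum_{i \in I} x_i = \sum_{\bid \in \bids^j_I} w(\bid)$ for any $\xb \in D_{\bids^j_I}(\pb)$—valid because each such bid locally demands only goods in $I$ and contributes its weight to exactly one coordinate in any adjacent UDR—this pins down $\rb^j|_{i^*}$ as the value $d$ computed by the algorithm (and the no-\linkgood{} case is the same argument with the conservation step dropped). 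The local Minkowski-sum identity $D_{\bids^j}(\pb) = D_{\bids^j_I}(\pb) + D_{\bids^j_{\bar I}}(\pb)$ then lets me split $\rb^j = \rb^j|_I + \rb^j|_{\bar I}$ with $\rb^j|_{\bar I} \in D_{\bids^j_{\bar I}}(\pb)$; setting $(\rb')^j \coloneqq \rb^j|_{\bar I}$ and $(\rb')^k \coloneqq \rb^k$ gives a valid allocation for $\mathcal{A}'$. Running the decomposition in reverse, any solution of $\mathcal{A}'$ extends to a solution of $\mathcal{A}$ by re-adding the transferred bundle (which, by the uniqueness argument above, must be the \emph{same} element of $D_{\bids^j_I}(\pb)$), so $\mathcal{A}'$ is a valid reduction.

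The running-time bound is then immediate: the procedure iterates once through $\bids^j$, spending $O(n)$ per bid to identify its demanded goods, and performs $O(|I|) = O(n)$ arithmetic updates, giving $O(n|\bids^j|)$ in total. For the strict edge-count decrease, I note that the connected component of $G_\mathcal{A}[j]$ defining $(I,j)$ has at least one edge (else there would be no bid of $j$ marginal on $I$ and the \keylist{} would be trivial); removing every bid of $\bids^j_I$ from $\bids^j$ removes exactly the $j$-labelled edges inside $I$, while preserving all other edges of $G_\mathcal{A}$ (since $\pb$ and the bid lists $\bids^k$ for $k \neq j$ are unchanged). The main obstacle is the local Minkowski-sum identity $D_{\bids^j}(\pb) = D_{\bids^j_I}(\pb) + D_{\bids^j_{\bar I}}(\pb)$: though intuitively clear from the disjoint coordinate supports, rigorously verifying that the discrete convex hull of bundles demanded across UDRs adjacent to $\pb$ factors into Minkowski summands requires carefully tracking how each bid in each UDR contributes to exactly one coordinate (in $I$ or in its complement) and noting that for sets with disjoint coordinate support the discrete convex hull of the sum equals the sum of discrete convex hulls.
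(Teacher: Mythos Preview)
Your proposal is correct and in fact more thorough than the paper's own argument in places. The paper establishes local validity of the truncated bid list via criterion~(2) of Definition~\ref{def:local-validity}: for any pair of goods $i,i'$, the bids of $\bids^j$ marginal on both are either all in $\bids^j_I$ (hence removed) or all in $\bids^j_{\bar I}$ (hence untouched), so the non-negativity condition is inherited directly. You instead use criterion~(1) via the coordinate-separable decomposition $f_{\bids^j} = f_{\bids^j_I} + f_{\bids^j_{\bar I}}$ near $\pb$, which is equally valid and arguably more structural.

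For the valid-reduction claim, the paper dispatches this in one sentence (``the partial allocation performed by \textsc{UnambiguousMarginals} is unambiguous''), whereas you supply the full argument: the conservation identity together with the non-\linkgood{} property of $I \setminus \{i^*\}$ forces any valid allocation of $\mathcal{A}$ to assign exactly the computed bundle to bidder~$j$ on the coordinates in $I$, and the local Minkowski-sum identity lets you split and reassemble allocations in both directions. Your version also explicitly verifies the existence of a valid allocation for $\mathcal{A}'$ (required by Definition~\ref{defn:allocation-prob}), which the paper's proof leaves implicit. The Minkowski-sum identity you flag as the main obstacle is genuine but routine once the coordinate-separable decomposition of $f_{\bids^j}$ is in hand: independently perturbing $\pb$ in the $I$-coordinates and in the $[n]_0\setminus I$-coordinates visits all adjacent UDRs, and the demanded bundle at such a perturbation is the sum of the two pieces, so the discrete convex hull factors as claimed. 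The edge-count and running-time arguments match the paper's.
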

\begin{proof}
	In order to see that $\mathcal{A}'$ is a valid allocation problem, we
verify that the new bid list $(\bids^j)'$ of bidder $j$ after applying
\textsc{UnambiguousMarginals} is locally valid by checking criterion~(2) of Definition~\ref{def:local-validity}.
Fix a price $\pb$ and goods $i,i'$. Recall that bids in $\bids^j$ cannot be marginal on goods in $I$ and $[n]_0 \setminus I$, by the definition of \keylist{}s. If $i \in I$ and $i' \in [n]_0 \setminus I$, then $\bids^j$ has no bids marginal on~$i$ and~$i'$. If both $i, i'$ are goods in $I$, all bids marginal on $i$ and $i'$ are removed by \textsc{UnambiguousMarginals}. If neither $i$ nor~$i'$ is a good in $I$, then none of the bids marginal on $i$ and~$i'$ are removed. As $\bids^j$ is locally valid by assumption, this implies that $(\bids^j)'$ is also locally valid.
It is straightforward to see that a solution to $\mathcal{A}'$ is a solution to
$\mathcal{A}$, as the partial allocation performed by \textsc{UnambiguousMarginals} is
unambiguous. Finally, let $i, i' \in I$ and note that the marginal bids graph
$G_\mathcal{A}$ contains an edge between $i$ and $i'$ that is not present in
$G_{\mathcal{A}'}$. As removing bids does not introduce new edges to the graph,
$G_{\mathcal{A}'}$ has strictly fewer edges.
\end{proof}

\subsection{The shift-project-unshift reduction}
\label{sec:shift}

Suppose that \textsc{FindParams} returns a cycle-\linkgood{} $i^*$ and the label $j^*$ of one of its adjacent edges in the cycle. In this case \textsc{Allocate} invokes the \textsc{ShiftProjectUnshift} procedure to obtain a reduction. Unlike \textsc{NonMarginals} and \textsc{UnambiguousMarginals}, this procedure does not make progress by allocating items and deleting the satisfied bids. Instead, it temporarily shifts the bids of bidder $j^*$ and perturbs the market-clearing price $\pb$ using SFM. We show that the marginal bids graph for this new allocation problem has strictly fewer edges, reducing the overall dependencies between the demands of bids. Finally, the procedure projects the bids of bidder $j^*$ in such a way that it can unperturb the market-clearing price while retaining the dependency structure of the new marginal bids graph. We now introduce the shift and project operations.

\paragraph{Shifting bids.} Suppose all bids are integral and we shift some
bidder's bids by a small quantity $\varepsilon < 1/4$ (we use $\varepsilon =
1/10$ for concreteness) in the direction of $i$ by adding $\varepsilon \eb^i$
to each bid vector. Then Lemma~\ref{lemma:shift-subset} and
Proposition~\ref{prop:shift-valid} (proved in Appendix~\ref{appendix:shift-valid}) together show that we can use SFM to find a
price ${\pb^\varepsilon \in \left \{ \pb + \varepsilon \eb^S, S \subseteq [n]
\right \}}$, at which the new allocation problem $\mathcal{A}'$ with shifted
bids and price vector~$\pb^\varepsilon$ is a valid reduction.

\begin{lemma}\label{lemma:shift-subset}
Fix $\varepsilon$ with $|\varepsilon| < 1/4$, as well as $i \in [n]$ and $j \in J$. Let $\mathcal{A}$
be an allocation problem with integral bids and prices, and let $\mathcal{A}'$
be obtained by replacing $\bids^j$ with $(\bids^j)' \coloneqq \left \{\bid +
\varepsilon \eb^i \mid \bid \in \bids^j \right \}$. Then the bid lists of all
bidders in $\mathcal{A}'$ are locally valid at any price $\pb^\varepsilon \in
B(\pb, \varepsilon)$. Moreover, the bundles demanded by any bidder at
$\pb^\varepsilon$ in $\mathcal{A}'$ form a subset of the bundles they demand
at $\pb$ in $\mathcal{A}$.
\end{lemma}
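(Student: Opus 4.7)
The plan is to exploit two facts. First, because bids and prices in $\mathcal{A}$ are all integers, every bid whose demanded set $I$ is a proper subset of $[n]_0$ has surplus gap at least $1$ at $\pb$ (bids with $I=[n]_0$ contribute a locally linear term and present no difficulty). Second, for the shifted bidder $j$, shifting every bid by $\varepsilon\eb^i$ translates the indirect utility function: directly from \eqref{eq:bid-utility-function} one has $f_{(\bids^j)'}(\qb) = f_{\bids^j}(\qb - \varepsilon\eb^i)$, and consequently $D_{(\bids^j)'}(\qb) = D_{\bids^j}(\qb - \varepsilon\eb^i)$. Combining the first fact with Lemma~\ref{obs:validity-margin} (applied with surplus-gap parameter $1$) promotes the hypothesised local validity of each $\bids^{j'}$ at $\pb$ into $\tfrac{1}{2}$-validity, so every $f_{\bids^{j'}}$ is convex on the whole ball $B(\pb,\tfrac{1}{2})$.

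For local validity at a fixed $\pb^\varepsilon \in B(\pb,\varepsilon)$: for any $j' \neq j$ the bid list is unchanged, and since $\varepsilon < \tfrac{1}{4}$ we have $\pb^\varepsilon \in B(\pb,\tfrac{1}{2})$, which is already a convexity neighbourhood of $f_{\bids^{j'}}$. For the shifted bidder $j$, the translation identity reduces local validity of $(\bids^j)'$ at $\pb^\varepsilon$ to convexity of $f_{\bids^j}$ on a neighbourhood of $\pb^\varepsilon - \varepsilon\eb^i$; this point lies in $B(\pb,2\varepsilon)\subset B(\pb,\tfrac{1}{2})$, so the required convexity is already available.

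For the subset property I would reduce to the following geometric sub-claim: if $\bids$ has integer bids and $\pb$ is integer, then $D_{\bids}(\pb') \subseteq D_{\bids}(\pb)$ for every $\pb' \in B(\pb,\tfrac{1}{2})$. Granting this, the unchanged bid lists of bidders $j' \neq j$ give $D_{\bids^{j'}}(\pb^\varepsilon) \subseteq D_{\bids^{j'}}(\pb)$ directly, and the translation identity combined with $\pb^\varepsilon - \varepsilon\eb^i \in B(\pb,2\varepsilon) \subset B(\pb,\tfrac{1}{2})$ yields $D_{(\bids^j)'}(\pb^\varepsilon) = D_{\bids^j}(\pb^\varepsilon - \varepsilon\eb^i) \subseteq D_{\bids^j}(\pb)$. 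To prove the sub-claim, I would show that every UDR $U$ of $\bids$ meeting $B(\pb,\tfrac{1}{2})$ has $\pb$ in its closure. Each defining hyperplane of $U$ has the form $p_k - p_{k'} = b_k - b_{k'}$ or $p_k = b_k$, with integer right-hand side; for any witness $\qb \in U \cap B(\pb,\tfrac{1}{2})$, $|q_k - q_{k'} - (p_k - p_{k'})| < 1$, and since $p_k - p_{k'}$ is already an integer, $\pb$ must lie on the same closed side of every such hyperplane as $\qb$. Hence $\pb \in \bar U$, and $D_{\bids}(\pb')$ is the discrete convex hull of a sub-collection of the bundles whose hull defines $D_{\bids}(\pb)$, yielding the inclusion.

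The main obstacle is this geometric sub-claim: it is where integrality of $\pb$ and $\bids$ is used nontrivially, and one must treat the UDRs as the top-dimensional cells of the common refinement of the hyperplane arrangements arising from the individual bids, so that every defining hyperplane has integer coefficients and constant. All remaining steps are bookkeeping: invoking Lemma~\ref{obs:validity-margin}, the translation identity, and nested inclusions of $L_\infty$ balls.
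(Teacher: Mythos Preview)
Your proof is correct and follows essentially the same skeleton as the paper's: integrality gives every bid surplus gap at least $1$, Lemma~\ref{obs:validity-margin} upgrades local validity to $\tfrac{1}{2}$-validity, and nested $L_\infty$-balls finish the local-validity claim. The one genuine difference is in how the subset property is argued. The paper invokes Observations~\ref{obs:no-new-marginals} and~\ref{obs:no-new-marginals-2} directly on each bid and then asserts, somewhat tersely, that the bundles at non-marginal prices near $\pb^\varepsilon$ lie in $D(\pb)$. You instead isolate this as an explicit sub-claim (every UDR of an integer bid list meeting $B(\pb,\tfrac{1}{2})$ has the integer point $\pb$ in its closure) and prove it by checking that $\pb$ lies on the correct side of each integer-constant defining hyperplane; you then handle the shifted bidder via the translation identity $D_{(\bids^j)'}(\qb)=D_{\bids^j}(\qb-\varepsilon\eb^i)$ rather than Observation~\ref{obs:no-new-marginals-2}. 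Both routes are equivalent, but yours makes the final inclusion step more transparent, at the cost of introducing the UDR-closure argument as an extra lemma.
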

\begin{proof}
As the bids and prices in $\mathcal{A}$ are integral, each bid either demands
all goods $[n]_0$ or has a surplus gap of at least 1 at $\pb$. By
Lemma~\ref{obs:validity-margin}, all bid lists are 1/2-valid at $\pb$. This
implies that every unshifted bid list is $(1/2-\varepsilon)$-valid at
$\pb^\varepsilon$ and the shifted bid list is $(1/2-2\varepsilon)$-valid at
$\pb^\varepsilon$.
By Observations~\ref{obs:no-new-marginals} and~\ref{obs:no-new-marginals-2}, a
non-marginal bid demands the same good for all $\qb \in B(\pb^\varepsilon,
\delta)$ and sufficiently small $\delta > 0$. Hence a bidder will demand the
same set of bundles at all non-marginal prices in $B(\pb^\varepsilon, \delta)$.
As the bundles a bidder demands at $\pb^\varepsilon$ are by definition the
discrete convex hull of bundles they demand at non-marginal prices
infinitesimally close to $\pb^\varepsilon$, we are done.
\end{proof}

\begin{proposition}\label{prop:shift-valid}
Let $\mathcal{A}$ be an allocation problem with price vector $\pb$ and fix $\varepsilon$ with $|\varepsilon| < 1/4$. Suppose we shift all bids of bidder $j$'s in $\bids$ by $\varepsilon$ to obtain (aggregate) bid list $\bids'$. Then there exists a price $\pb^\varepsilon \in \left \{ \pb \pm \varepsilon \eb^S, S \subseteq [n] \right \}$ at which the residual supply~$\rb$ is demanded by $\bids'$. We can determine~$\pb^\varepsilon$ using submodular minimisation on the Lyapunov function $g$ with regard to list $\bids'$ and the residual supply $\rb$.
\end{proposition}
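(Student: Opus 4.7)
The plan is to apply submodular function minimisation to a discretised version of the shifted Lyapunov function. Let $\bids'$ denote the aggregate bid list after bidder $j$'s bids have been shifted by $\varepsilon \eb^i$, and define $g'(\qb) \coloneqq f_{\bids'}(\qb) + \rb \cdot \qb$. Since $\mathcal{A}$ is a valid allocation problem, $\rb$ is aggregately demanded at $\pb$ in the unshifted setting, which is equivalent to $\pb$ minimising the unshifted Lyapunov $g$. The goal is to show that some $\pb^\varepsilon \in \{\pb \pm \varepsilon \eb^S : S \subseteq [n]\}$ inherits this property for $g'$, and that such a $\pb^\varepsilon$ can be found by SFM.

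First I would introduce the set function $h(S) \coloneqq g'(\pb + \varepsilon \eb^S) - g'(\pb)$ on $2^{[n]}$, together with its counterpart $h^-(S) \coloneqq g'(\pb - \varepsilon \eb^S) - g'(\pb)$. By Lemma~\ref{lemma:shift-subset}, the shifted bid list $\bids'$ is locally valid throughout $B(\pb, \varepsilon)$, so $f_{\bids'}$ coincides there with the indirect utility of a strong-substitutes valuation and $g'$ is piecewise linear and convex on this ball. The same submodular structure exploited by the discrete steepest-descent method in Section~\ref{sec:price-finding-overview} then makes $h$ (and $h^-$) submodular on $2^{[n]}$, so any SFM algorithm returns a minimiser $S^*$ in one of the two directions, yielding the candidate price $\pb^\varepsilon \coloneqq \pb \pm \varepsilon \eb^{S^*}$.

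The crux is to argue that $\rb \in D_{\bids'}(\pb^\varepsilon)$, equivalently that $\pb^\varepsilon$ is a global minimiser of $g'$. I would proceed in two parts: (i) show that $g'$ attains a global minimum somewhere in $B(\pb, \varepsilon)$, and (ii) show that minimality of $g'$ over the corner prices implies global minimality inside the ball. For (i), piecewise linearity and the observation that shifting a single bid by $\varepsilon$ in one coordinate perturbs each linear piece of $f_{\bids'}$ by at most $\varepsilon$ imply that the minimiser cannot move from $\pb$ by more than $\varepsilon$ in $L_\infty$ norm. For (ii), one invokes the local-to-global minimality property of L-convex functions applied to the restriction of $g'$ to a neighbourhood scaled by $\varepsilon$: because $|\varepsilon| < 1/4$ is small relative to the smallest gap in the bid components (which live in $\Z + \{0,1/10\}$), no marginal hyperplane of $g'$ strictly separates an interior point of $B(\pb, \varepsilon)$ from all corner prices.

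The main obstacle I anticipate is part (ii): making rigorous the claim that scanning the corner prices exhausts the combinatorial structure of $g'$ inside $B(\pb, \varepsilon)$. A careful argument needs to track which linear pieces of $f_{\bids'}$ meet $\pb^\varepsilon$ and then use convexity of the demand-region polyhedral complex to rule out a strictly better price in the interior of the ball. The $\pm$ sign in the statement captures the observation that the direction of the shift determines on which side of $\pb$ the new minimiser lies, so running SFM on both $h$ and $h^-$ and keeping the better outcome completes the argument.
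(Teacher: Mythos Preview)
Your computational step---defining $h^{\pm}(S)\coloneqq g'(\pb\pm\varepsilon\eb^S)-g'(\pb)$, observing their submodularity, and returning the better of the two SFM minimisers---is exactly how the paper handles the algorithmic side at the end of its proof. The divergence, and the gap, is in the existence argument.

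Your part (i) is the real problem, not part (ii) as you suspect. The claim that ``shifting \ldots\ perturbs each linear piece of $f_{\bids'}$ by at most $\varepsilon$, [so] the minimiser cannot move from $\pb$ by more than $\varepsilon$ in $L_\infty$ norm'' does not follow. A uniform bound on the \emph{value} perturbation of a piecewise-linear convex function says nothing about how far its minimiser set moves: if the function is nearly flat in some direction, an arbitrarily small tilt can translate the argmin by an arbitrarily large distance. Here you are shifting all of bidder~$j$'s bids, so the value perturbation is bounded by $\varepsilon|\bids^j|$, not $\varepsilon$, and in any case neither bound controls the location of the new demand region for~$\rb$. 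Your part (ii) also does not go through as stated: even granting a minimiser inside $B(\pb,\varepsilon)$, convexity alone does not force it to coincide with a corner $\pb\pm\varepsilon\eb^S$, and the appeal to $L^\natural$-convexity on an $\varepsilon$-scaled lattice would need justification that the shifted indirect utility is $L^\natural$-convex on that lattice, which is not immediate.

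The paper's proof takes a completely different, polyhedral route. It writes the demand region for each $\rb^j$ (bidder~$j$'s share versus the aggregate of all others) as a polytope $\{H^j\pb'\ge A^j\}$ whose rows are strong-substitutes vectors and whose right-hand sides are integers. After the $\varepsilon$-shift, the relevant intersection becomes $\{H^1\pb'\ge A^1\}\cap\{H^2(\pb'-\varepsilon\eb^i)\ge A^2\}$. Three matrix lemmas about unimodular matrices with strong-substitutes rows (Lemmas~\ref{nice inverse}--\ref{Hstep}) are then used to exhibit an explicit point $\pb^\varepsilon$ in this intersection of the form $\pb\pm\varepsilon\eb^S$: one isolates the tight constraints at $\pb$, inverts a square subsystem, and shows that the correction term $G^{-1}\hat H'_i$ is itself $\pm\eb^S$. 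Integrality of $A^j$ and the $\{-1,0,1\}$ structure of the rows are what make the slack inequalities survive the $\varepsilon$-perturbation. None of this structure is captured by a generic ``small perturbation of a convex function'' argument; you would need to rebuild that polyhedral analysis to close the gap.
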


\paragraph{Projecting bids.}
Next we define a projection operation on bids. The idea of this operation is to modify a bid $\bid$ so that its surplus gap at $\pb$ is increased. As a
consequence, if that bid or the price is perturbed slightly, it `remembers'
its preferred goods.
Let $\bid$ be a bid and $I = \argmax_{i \in [n]_0} (b_i - p_i)$ be the set of
demanded goods at price $\pb$. The projection $\bid'$ of $\bid$ w.r.t.~$\pb$ is
defined as follows.
	\[
		\bid' \coloneqq
		\begin{cases}
			\bid - \eb^{[n]_0 \setminus I} & \text{if } 0 \in I, \\
			\bid + \eb^I & \text{ otherwise.}
		\end{cases}
	\]
Note that we allow for bid vector entries to be negative.
Figure~\ref{fig:project-example} illustrates the projection operation.

\begin{observation}\label{obs:project-margin}
	If bid $\bid$ demands all goods $[n]_0$, we have $\bid' = \bid$.
Otherwise, the projection operation on $\bid$ w.r.t.~$\pb$ increases the bid's
surplus gap at $\pb$ by 1.
\end{observation}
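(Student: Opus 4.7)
Let $M \coloneqq \max_{i \in [n]_0}(b_i - p_i)$, so $I = \{i \in [n]_0 : b_i - p_i = M\}$, and, when $I \neq [n]_0$, set $M' \coloneqq \max_{i \in [n]_0 \setminus I}(b_i - p_i)$. The surplus gap of $\bid$ at $\pb$ equals $M - M'$. The first claim is immediate: if $I = [n]_0$, then $0 \in I$ and $[n]_0 \setminus I = \emptyset$, so $\eb^{[n]_0 \setminus I} = \mathbf{0}$ and $\bid' = \bid$.

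For the second claim, split on whether $0 \in I$ and compute the surpluses of $\bid'$ at $\pb$ directly. If $0 \in I$, then $\bid' = \bid - \eb^{[n]_0 \setminus I}$, so the new surplus at good $i$ is $b_i - p_i$ for $i \in I$ and $b_i - p_i - 1$ for $i \in [n]_0 \setminus I$. Since $M > M' > M' - 1$, the set of demanded goods of $\bid'$ at $\pb$ is still $I$, the new maximum surplus is $M$, the new second-best is $M' - 1$, and the new surplus gap is $M - (M' - 1) = (M - M') + 1$. If instead $0 \notin I$, then $\bid' = \bid + \eb^I$, so the new surplus at good $i$ is $b_i - p_i + 1$ for $i \in I$ and $b_i - p_i$ for $i \in [n]_0 \setminus I$. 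Since $M + 1 > M > M'$, the set of demanded goods is again $I$, and the new surplus gap is $(M + 1) - M' = (M - M') + 1$. The main thing to check is that $I$ is preserved as the argmax in both subcases, which follows from $M > M'$; once this is in hand, the gap increase of $1$ drops out of the arithmetic.
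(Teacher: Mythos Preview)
Your proof is correct. The paper states this observation without proof, treating it as immediate from the definition of the projection operation; your case analysis on whether $0 \in I$ and direct computation of the surpluses $b'_i - p_i$ is precisely the routine verification the paper omits.
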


\begin{lemma}
\label{lemma:project-invariants}
Projecting all bids in a bidder's bid list $\bids$ w.r.t.~$\pb$ does
not change the set of bundles demanded at $\pb$. Moreover, if $\bids$ is
locally valid at $\pb$, the projected bid list is locally $1/2$-valid.
\end{lemma}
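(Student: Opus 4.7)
The plan is to analyze the projection bid-by-bid and then lift to the whole list. Fix any bid $\bid \in \bids$ with demanded set $I = \argmax_{i \in [n]_0}(b_i - p_i)$ and let $\bid'$ denote its projection. The starting observation is that projection shifts entries uniformly on $I$ versus on $[n]_0 \setminus I$: if $0 \in I$ then entries on $I$ are unchanged and those on $[n]_0 \setminus I$ drop by $1$, while if $0 \notin I$ entries on $I$ rise by $1$ and those on $[n]_0 \setminus I$ are unchanged. Thus all differences $b_i - b_{i'}$ for $i, i' \in I$ are preserved, the set demanded by $\bid'$ at $\pb$ is still exactly $I$, and, when $I \neq [n]_0$, the surplus gap of $\bid'$ at $\pb$ grows by exactly $1$, as recorded in Observation \ref{obs:project-margin}.

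For the demand claim, I would show that at every non-marginal price $\qb = \pb + \vb$ with $\|\vb\|_\infty$ sufficiently small, each bid contributes the same single good to the aggregate bundle before and after projection. Observation \ref{obs:no-new-marginals}, applied to both $\bid$ (with its original surplus gap) and $\bid'$, implies that for such $\qb$ both demand only goods in $I$; the in-$I$ surpluses at $\qb$ differ between $\bid$ and $\bid'$ by a constant that is uniform over $I$, so their argmax over $I$ coincides. Consequently $\qb$ is non-marginal for both lists simultaneously and both yield the same aggregate bundle, so every UDR adjacent to $\pb$ produces the same demanded bundle before and after projection, and the discrete convex hull characterisation in Definition \ref{def:local-validity} gives equality of the demand sets at $\pb$.

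For $1/2$-validity, I would use the same pointwise analysis to show that in some sufficiently small open neighbourhood of $\pb$ each bid is marginal on exactly the same pairs $(i, i')$ before and after projection, so criterion~(2) of Definition \ref{def:local-validity} is inherited from the original locally valid $\bids$ and the projected list is $\eta$-valid at $\pb$ for some $\eta > 0$. Since every projected bid either has surplus gap at least $1$ at $\pb$ or was already demanding all of $[n]_0$ (in which case it is a fixed point of the projection and imposes no surplus-gap constraint), applying Lemma \ref{obs:validity-margin} with $\varepsilon = 1$ upgrades the validity neighbourhood to $B(\pb, 1/2)$, yielding the desired $1/2$-validity.

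The main obstacle is the bookkeeping: choosing the neighbourhood radius uniformly small enough to make Observation \ref{obs:no-new-marginals} simultaneously valid for all original bids in $\bids$, and handling cleanly the edge case of bids that demand all of $[n]_0$ and therefore have no well-defined surplus gap but are unchanged by the projection.
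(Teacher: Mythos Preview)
Your proposal is correct and follows essentially the same route as the paper: show that in a small neighbourhood of $\pb$ every bid demands exactly the same set of goods before and after projection, deduce both the invariance of $D_\bids(\pb)$ and local validity of the projected list, and then invoke Observation~\ref{obs:project-margin} together with Lemma~\ref{obs:validity-margin} (with $\varepsilon = 1$) to upgrade to $1/2$-validity. Your explicit handling of the edge case of bids demanding all of $[n]_0$ is a useful clarification that the paper's own proof leaves implicit.
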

\begin{proof}
Suppose all bids in $\bids$ have a surplus gap of at least $\varepsilon$ at
$\pb$ and $\bids$ is locally $\delta$-valid at $\pb$. Note that at any price
$\qb$ in the open ball $B(\pb, \varepsilon)$ centred at $\pb$, every bid
demands the same set of goods before and after projecting. Hence the projected
bid list is locally $\min\{\delta, \varepsilon \}$-valid at $\pb$, and the set
of bundles demanded at $\pb$, consisting of the discrete convex hull of bundles
demanded in UDRs bordering $\pb$ remains the same. The second statement follows
from Observations \ref{obs:validity-margin} and~\ref{obs:project-margin}.
\end{proof}

\begin{figure}
	\centering
	\begin{subfigure}[b]{0.49\textwidth}
		\centering
		\includegraphics{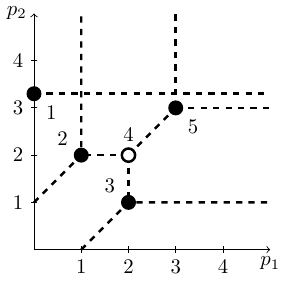}
		\caption{Five numbered bids before projecting.}
		\label{fig:before}
	\end{subfigure}
	\begin{subfigure}[b]{0.49\textwidth}
		\centering
		\includegraphics{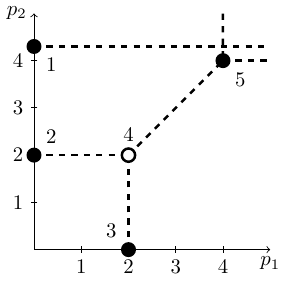}
		\caption{The result of projecting w.r.t.~prices $\pb=(4,4)$.}
		\label{fig:after}
	\end{subfigure}

	\caption{Example of the project operation w.r.t.~prices $\pb = (4,4)$
on five bids numbered from $1$ to $5$ shown in (a). The projected bids are
shown in (b). Note that only the negative bid is unchanged, as it demands all
goods, including the reject good.}
	\label{fig:project-example}
\end{figure}

\paragraph{The procedure.} The procedure \textsc{ShiftProjectUnshift} is stated in Algorithm~\ref{alg:proc3}. It shifts and projects to reduce the number of edges in marginal goods graph. This effectively reduces the number of demand ties among the bids in our allocation problem. Lemmas~\ref{lemma:procedure3-reduce} (proved in Appendix~\ref{appendix:procedure3-reduce}) and~\ref{lemma:procedure3-main} establish that the reduction obtained by \textsc{ShiftProjectUnshift} is valid and makes progress.

\begin{algorithm}[bth!]
  \caption{\textsc{ShiftProjectUnshift}}
  \label{alg:proc3}
  \begin{algorithmic}[1]
  \State \textbf{Input:} Allocation problem $\mathcal{A} = [\pb, (\bids^j)_{j \in J}, (\mb^j)_{j \in J}, \rb]$, cycle-\linkgood{} $i^*$ and edge label $j^*$.
  \State \textbf{Output:} Allocation problem $\mathcal{A}'$.
  \State Add $\frac{1}{10}\eb^{i^*}$ to each of $j^*$'s bids and
compute a new market-clearing price $\pb^\varepsilon = \pb +
\frac{1}{10}\eb^{S^*}$ by solving $S^* = \argmin_{S \subseteq [n]} g_{\rb}(\pb
\pm \frac{1}{10}\eb^S)$ using SFM.
  \State Project every bid list w.r.t.~$\pb^\varepsilon$.
  \State Subtract $\frac{1}{10}\eb^{i^*}$ from each of $j^*$'s
bids and reset price to $\pb$.
  \end{algorithmic}
\end{algorithm}

\begin{lemma}\label{lemma:procedure3-reduce}
	The marginal bids graph $G_{\mathcal{A}'}$ of $\mathcal{A}'$ has
strictly fewer edges than $G_\mathcal{A}$.
\end{lemma}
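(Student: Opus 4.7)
The plan is to prove $G_{\mathcal{A}'}\subseteq G_\mathcal{A}$ and then exhibit an edge of the multi-bidder cycle through $i^*$ that is destroyed, giving strict inclusion.

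For the subgraph inclusion, fix any bid $\bid$ in $\mathcal{A}$ and let $I(\bid)\subseteq[n]_0$ be its demand at $\pb$; write $\delta\coloneqq\pb^\varepsilon-\pb$ and let $\tilde\bid$ be the shifted bid (equal to $\bid+\tfrac{1}{10}\eb^{i^*}$ if $\bid\in\bids^{j^*}$, otherwise $\bid$). Using Lemma~\ref{lemma:shift-subset} for $\bid\in\bids^{j^*}$ and Observations~\ref{obs:no-new-marginals}--\ref{obs:no-new-marginals-2} together with the integer-surplus gap enjoyed by the other bidders, one establishes that the demand $I_\varepsilon(\bid)$ of $\tilde\bid$ at $\pb^\varepsilon$ satisfies $I_\varepsilon(\bid)\subseteq I(\bid)$. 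A short case analysis (depending on whether $0\in I_\varepsilon(\bid)$) on the projection operation followed by the final unshift and price reset then shows that the resulting bid is marginal at $\pb$ on precisely $I_\varepsilon(\bid)$. Hence every edge of $G_{\mathcal{A}'}$ was already an edge of $G_\mathcal{A}$.

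For the strict inclusion, take the multi-bidder cycle $u_1=i^*,u_2,\ldots,u_m,u_1$ in $G_\mathcal{A}$ guaranteed by the hypothesis on $(i^*,j^*)$, chosen so that the edge $(u_1,u_2)$ is labelled $j^*$ and the other cycle edge at $i^*$, namely $(u_m,u_1)$, is labelled some $j'\neq j^*$ (available because $i^*$ is a cycle-\linkgood{}). Assume for contradiction that every edge of this cycle survives in $G_{\mathcal{A}'}$. Edge $(u_l,u_{l+1})$ survives iff some bid of its label has $\{u_l,u_{l+1}\}\subseteq I_\varepsilon$; writing out the shifted utilities at $\pb^\varepsilon$ and equating them at $u_l$ and $u_{l+1}$ yields a necessary condition on $\delta$ that depends only on the endpoints and the label, not on the specific witnessing bid. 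The only edge producing a $\tfrac{1}{10}$ correction is the $j^*$-edge incident to $i^*$: it forces $\delta_{u_1}-\delta_{u_2}=\tfrac{1}{10}$. Every other cycle edge (whether $j^*$-labelled with both endpoints different from $i^*$, or labelled by a bidder other than $j^*$, including the edge $(u_m,u_1)$) simply forces $\delta_{u_l}=\delta_{u_{l+1}}$. Chaining these equalities around the remaining $m-1$ edges gives $\delta_{u_2}=\delta_{u_3}=\cdots=\delta_{u_m}=\delta_{u_1}$, contradicting $\delta_{u_1}-\delta_{u_2}=\tfrac{1}{10}$. Hence some cycle edge is missing from $G_{\mathcal{A}'}$, so $|E(G_{\mathcal{A}'})|<|E(G_\mathcal{A})|$.

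The main obstacle is establishing that the survival condition for an edge in $G_{\mathcal{A}'}$ depends only on its two endpoints and its label, not on which particular bid of that label witnesses it, since the same pair of goods can be linked by several of a bidder's bids with distinct demand sets. Once this uniform necessary condition is in place, the shifted-utility calculation forces the cycle of $\delta$-equalities, and the lone $\tfrac{1}{10}$ discrepancy at the $(i^*,j^*)$ edge yields the contradiction.
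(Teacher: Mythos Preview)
Your proposal is correct and follows essentially the same route as the paper's proof. Both first argue that each of the three steps can only shrink a bid's demand set, giving $G_{\mathcal{A}'}\subseteq G_\mathcal{A}$; both then assume for contradiction that every edge of the chosen multi-bidder cycle survives and derive an arithmetic inconsistency from the marginality equations around the cycle. The only cosmetic difference is bookkeeping: the paper sums the equalities $b^l_{u_l}-p_{u_l}=b^l_{u_{l+1}}-p_{u_{l+1}}$ around the cycle to eliminate prices entirely (obtaining an identity among the $b$'s that the shift offsets by $\tfrac{1}{10}$ on one side only), whereas you keep the perturbation $\delta=\pb^\varepsilon-\pb$ explicit and chain the constraints $\delta_{u_l}=\delta_{u_{l+1}}$ to contradict $\delta_{u_1}-\delta_{u_2}=\tfrac{1}{10}$. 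These are the same argument.

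One remark on the ``main obstacle'' you flag: it is not actually an obstacle. If the $j$-labelled edge $(u_l,u_{l+1})$ survives in $G_{\mathcal{A}'}$, then by the monotone-shrinking part there is \emph{some} bid $\bid\in\bids^j$ with $\{u_l,u_{l+1}\}\subseteq I_\varepsilon(\bid)\subseteq I(\bid)$. The constraint you extract from $\{u_l,u_{l+1}\}\subseteq I_\varepsilon(\bid)$ uses only (i) the original equality $b_{u_l}-p_{u_l}=b_{u_{l+1}}-p_{u_{l+1}}$, (ii) whether $j=j^*$, and (iii) whether either endpoint equals $i^*$. Item (i) holds for \emph{any} witnessing bid precisely because $\{u_l,u_{l+1}\}\subseteq I(\bid)$, and (ii)--(iii) are data of the edge alone. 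So the derived constraint on $\delta$ is automatically independent of the witnessing bid; no extra argument is needed.

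A minor point: your claim that the demand of the final bid at $\pb$ is \emph{precisely} $I_\varepsilon(\bid)$ is true but deserves one line of justification (integrality: after projection the surplus gap is at least $1$, and after the unshift the final bid and $\pb$ are integral, so surpluses are integers and the at-most-$\tfrac{2}{10}$ perturbation cannot separate goods inside $I_\varepsilon$ nor promote goods outside it). For the strict-inclusion argument itself, only the inclusion $\{u_l,u_{l+1}\}\subseteq I_\varepsilon$ is needed, and that follows directly from monotone shrinking.
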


\begin{lemma}\label{lemma:procedure3-main}
	\textsc{ShiftProjectUnshift}, given in Algorithm~\ref{alg:proc3}, returns a reduction $\mathcal{A}'$ of $\mathcal{A}$ in time $O(T(n) + n|\bids|)$, where $T(n)$ is the time it takes to minimise an $n$-dimensional submodular function.
\end{lemma}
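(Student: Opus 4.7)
The plan is to verify three properties of the output $\mathcal{A}'$ of Algorithm~\ref{alg:proc3}---each bid list is locally valid at $\pb$, $\rb$ admits a valid partition among bidders under the final bid lists at $\pb$, and every solution of $\mathcal{A}'$ is also a solution of $\mathcal{A}$---together with the running time bound. I would trace the four steps (shift, SFM, project, unshift) in order, invoking Lemma~\ref{lemma:shift-subset}, Proposition~\ref{prop:shift-valid}, Lemma~\ref{lemma:project-invariants}, and the surplus-gap tools of Section~\ref{sec:shift} to propagate the invariants needed at each step.

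For the validity of the bid lists, I would first use Lemma~\ref{lemma:shift-subset} to conclude that after shifting $j^*$'s bids by $\varepsilon \eb^{i^*}$ with $\varepsilon = 1/10$, every bid list is locally valid at each price in $B(\pb, \varepsilon)$. Proposition~\ref{prop:shift-valid} then supplies a price $\pb^\varepsilon \in \{\pb \pm \varepsilon \eb^S : S \subseteq [n]\}$ at which $\rb$ is demanded by the shifted aggregate, found by one SFM call on the Lyapunov function. Projecting each bid list with respect to $\pb^\varepsilon$ preserves the bundles demanded at $\pb^\varepsilon$ by Lemma~\ref{lemma:project-invariants}, raises each non-trivial bid's surplus gap at $\pb^\varepsilon$ to at least $1$ by Observation~\ref{obs:project-margin}, and yields $1/2$-valid projected bid lists at $\pb^\varepsilon$. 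When the algorithm then unshifts $j^*$'s bids by $-\varepsilon \eb^{i^*}$ and resets the price back to $\pb$---two perturbations each of $L_\infty$-norm $\varepsilon$---Observations~\ref{obs:no-new-marginals} and~\ref{obs:no-new-marginals-2} imply that every final bid's demanded goods at $\pb$ are contained in the demanded goods of the corresponding projected-shifted bid at $\pb^\varepsilon$, and local validity at $\pb$ follows.

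For the remaining two claims, my plan is to combine the known market-clearing configuration $(\pb^\varepsilon, \bids_{\text{proj-shifted}})$ with the small perturbation obtained by un-shifting and resetting the price, exploiting that projection has pushed the surplus gaps of non-trivial bids to at least $1$ whereas the combined perturbation has $L_\infty$-norm at most $\varepsilon = 1/10$. Specifically, I would re-apply Proposition~\ref{prop:shift-valid} to the projected configuration at $\pb^\varepsilon$, now treating the inverse shift of $-\varepsilon \eb^{i^*}$ on $j^*$'s bids as the perturbation, to obtain a nearby market-clearing price for $\rb$ in $\{\pb^\varepsilon \pm \varepsilon \eb^S : S \subseteq [n]\}$; a direction-of-perturbation check, leaning on the hypothesis that $i^*$ is a cycle-link good on a multi-bidder cycle of $G_\mathcal{A}$, pins this price down to $\pb$ itself, giving the required valid allocation. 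Transferring solutions from $\mathcal{A}'$ to $\mathcal{A}$ then uses the bid-level containment of demanded goods just established, since it propagates through the assignment characterisation of the demand correspondence to yield $D_{\bids^j_f}(\pb) \subseteq D_{\bids^j_0}(\pb)$, so any bundle realisable under the final bids at $\pb$ is realisable under the original bids at $\pb$.

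The runtime analysis is a direct summation: shifting and unshifting each cost $O(|\bids^{j^*}|)$, projecting every bid takes $O(n|\bids|)$ time (each projection first identifying a bid's demanded goods at $\pb^\varepsilon$ in $O(n)$ time), the price reset is $O(n)$, and the single SFM call on a function evaluable in $O(n|\bids|)$ time by~\eqref{eq:lyapunov-overview} contributes $T(n)$, giving the claimed $O(T(n) + n|\bids|)$. The hard part of the overall argument is the third paragraph: the interaction between the SFM direction $S^*$, the cycle-link good $i^*$, and the integer offsets introduced by projection is delicate, and both the existence of a valid allocation at $\pb$ for the final bid lists and the transfer of solutions hinge on the surplus gap bound of $1$ absorbing the accumulated $\varepsilon$-perturbations along the shift-project-unshift cycle.
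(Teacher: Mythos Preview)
Your overall structure---trace shift, SFM, project, unshift and chain the invariants via Lemma~\ref{lemma:shift-subset}, Proposition~\ref{prop:shift-valid}, and Lemma~\ref{lemma:project-invariants}---matches the paper, and your local-validity and running-time arguments are fine. The gap is in the third paragraph, specifically the step showing that $\rb$ is demanded at $\pb$ by the final (unshifted, projected) bid lists.

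You propose to obtain, via a second application of Proposition~\ref{prop:shift-valid}, a market-clearing price $\pb' \in \{\pb^\varepsilon \pm \varepsilon \eb^S : S \subseteq [n]\}$ and then to ``pin this price down to $\pb$ itself'' using ``a direction-of-perturbation check, leaning on the hypothesis that $i^*$ is a cycle-link good.'' That check is neither stated nor correct: nothing about $i^*$ being a cycle-\linkgood{} forces $\pb' = \pb$, and indeed $\pb'$ need not equal $\pb$ coordinate-wise. The cycle hypothesis is used only in Lemma~\ref{lemma:procedure3-reduce} (edge reduction); it plays no role in Lemma~\ref{lemma:procedure3-main}. The paper's device is different and simpler: observe that after unshifting, every bid is again integral (the $+\varepsilon\eb^{i^*}$ and $-\varepsilon\eb^{i^*}$ cancel, while projection only adds or subtracts integer vectors). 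Since $\pb^\varepsilon \in \{\pb \pm \varepsilon \eb^{S^*}\}$ and $\pb' \in \{\pb^\varepsilon \pm \varepsilon \eb^S\}$, every coordinate of $\pb'$ lies within $2\varepsilon = 0.2$ of the integer $p_i$, so the nearest-integer rounding $[\pb']$ equals $\pb$. With integral bids, a bundle demanded at $\pb'$ is also demanded at $[\pb']$, and hence $\rb \in D_{\bids'}(\pb)$.

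A secondary issue: your transfer-of-solutions argument invokes ``the assignment characterisation of the demand correspondence'' to pass from bid-level containment of demanded goods to $D_{\bids^j_f}(\pb) \subseteq D_{\bids^j_0}(\pb)$. As Section~\ref{sec:rep} emphasises, independently assigning each bid one of its demanded goods does \emph{not} in general yield a bundle in $D_\bids(\pb)$ when negative bids are present, so this inference is not valid as stated. The paper instead chains reductions step by step: Lemma~\ref{lemma:shift-subset} gives the demand-set containment across the shift (via the UDR/discrete-convex-hull definition, not via bid assignment), Lemma~\ref{lemma:project-invariants} shows projection leaves demand at $\pb^\varepsilon$ unchanged, and the same shift argument handles the unshift.
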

\begin{proof}
Let $\mathcal{A}'$ and $\mathcal{A}''$ denote the allocation problems after executing lines 3 and 4, respectively. First we~show that~$\mathcal{A}'$ is a reduction of
$\mathcal{A}$. Indeed, by Proposition~\ref{prop:shift-valid} and Step 1, the
residual bundle~$\rb$ is aggregately demanded at $\pb^\varepsilon$
in~$\mathcal{A}'$. Furthermore, for any allocation $(\rb^j)_{j \in J} = (\tb^j
- \mb^j)_{j \in J}$ of the residual bundle $\rb$ at price $\pb^\varepsilon$ in
$\mathcal{A}'$, Lemma~\ref{lemma:shift-subset} implies that $(\mb^j +
\rb^j)_{j \in J}$ is a solution of $\mathcal{A}$.
Secondly, Lemma~\ref{lemma:project-invariants} implies that $\mathcal{A}''$ is a reduction of~$\mathcal{A}'$. Finally, after line 5, the bid lists are locally valid at $\pb$ by Lemma~\ref{lemma:shift-subset}.
Proposition~\ref{prop:shift-valid} implies that the residual bundle $\rb$ is
demanded at some price
$\pb' \in \left \{ \pb^\varepsilon \pm \varepsilon \eb^S \mid S \subseteq [n] \right \}$.
Note that if a bundle is demanded at $\pb'$, it is also demanded at $[\pb']$ due to
the structure of our integral bid vectors, where $[ \cdot ]$ denotes the
operation of rounding each component to the nearest integer. As $[\pb'] = \pb$,
the result follows.
\end{proof}

\subsection{The main algorithm}
\label{subsec:allocate}
The algorithm \textsc{Allocate}, stated in Algorithm~\ref{alg:allocate} above, combines the procedures \textsc{NonMarginals}, \textsc{UnambiguousMarginals} and \textsc{ShiftProjectUnshift}
in order to solve the allocation problem described in
Section~\ref{sec:stmt}. Recall that it uses \textsc{FindParams} as a subroutine to decide whether to call \textsc{UnambiguousMarginals} or \textsc{ShiftProjectUnshift} in each iteration of the loop.
Theorem~\ref{thm:main} proves correctness and gives a running time bound for
\textsc{Allocate}. We note that this bound is likely to be pessimistic.


\begin{theorem}
\label{thm:main}
	\textsc{Allocate} solves the allocation problem in time $O \left
(n^2|J|(\alpha(n)n|\bids| + T(n)) \right )$, where $T(n)$ is the time required
to minimise an $n$-dimensional submodular set function.
\end{theorem}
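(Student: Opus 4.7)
The plan is to establish correctness via a loop invariant and then bound the overall running time by counting iterations of the main loop times the work done per iteration.

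For correctness, I would argue that throughout the execution, $\mathcal{A}$ remains a valid allocation problem (in the sense of Definition~\ref{defn:allocation-prob}) and a valid reduction of the initial problem. This is the loop invariant: it holds at the start by assumption, and is preserved by each iteration because \textsc{NonMarginals}, \textsc{UnambiguousMarginals} and \textsc{ShiftProjectUnshift} each return a valid reduction (Lemmas~\ref{lem:obvious}, \ref{lemma:procedure2} and~\ref{lemma:procedure3-main}). When the loop terminates, all bid lists are empty. Since an allocation problem by definition admits a valid allocation of the residual supply $\rb$, and bids are the only way to demand items, this forces $\rb = \mathbf{0}$, so $\tb = \sum_j \mb^j$ and each $\mb^j$ is a bundle demanded by bidder $j$ at $\pb$ (by transitivity of ``reduction'').

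For the running time, the key observation is that the marginal bids multigraph $G_\mathcal{A}$ has at most $O(n^2 |J|)$ edges, because for each bidder $j$, the subgraph $G_\mathcal{A}[j]$ is simple on the $n+1$ vertices $[n]_0$ and therefore has at most $\binom{n+1}{2}$ edges. Every iteration of the while loop invokes either \textsc{UnambiguousMarginals} or \textsc{ShiftProjectUnshift}; both strictly reduce the number of edges of $G_\mathcal{A}$ (Lemmas~\ref{lemma:procedure2} and~\ref{lemma:procedure3-reduce}), while \textsc{NonMarginals} leaves $G_\mathcal{A}$ unchanged (Lemma~\ref{lem:obvious}). Hence the loop executes at most $O(n^2 |J|)$ iterations, and along the way the edge count drops to $0$, at which point all remaining bids are non-marginal at $\pb$ and are cleared by the final call to \textsc{NonMarginals}.

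The per-iteration cost is dominated by \textsc{FindParams}, which takes $O(\alpha(n) n |\bids|)$ as already analysed, and by \textsc{ShiftProjectUnshift}, which takes $O(T(n) + n |\bids|)$; \textsc{UnambiguousMarginals} and \textsc{NonMarginals} are each $O(n|\bids|)$ and thus subsumed. Multiplying the iteration count $O(n^2|J|)$ by the per-iteration cost $O(\alpha(n) n |\bids| + T(n))$ yields the claimed bound.

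I expect the main subtlety to lie not in the arithmetic but in pinning down termination cleanly: in particular, verifying that the edge-count potential really is a valid progress measure requires the observation that \textsc{NonMarginals} never increases it and that, once the multigraph has no edges, every remaining bid is non-marginal at $\pb$ and so is removed by the subsequent call to \textsc{NonMarginals}, emptying all bid lists. The bound $O(n^2|J|)$ on edges is also slightly loose (multiple marginal goods in a single bid contribute $\binom{k}{2}$ edges, but $k \leq n+1$), so the potential argument is robust without needing a tighter count.
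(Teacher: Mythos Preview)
Your proposal is correct and follows essentially the same approach as the paper: bound the number of iterations by the initial edge count $|J|\binom{n+1}{2}=O(n^2|J|)$ of the marginal bids multigraph, use Lemmas~\ref{lem:obvious}, \ref{lemma:procedure2}, \ref{lemma:procedure3-reduce} and~\ref{lemma:procedure3-main} to argue that each iteration is a valid reduction that strictly decreases this potential, and then multiply by the per-iteration cost dominated by \textsc{FindParams} and \textsc{ShiftProjectUnshift}. Your treatment of termination (empty bid lists force $\rb=\mathbf{0}$) is slightly more explicit than the paper's, but the structure is the same.
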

\begin{proof}
By construction, the marginal bids graph of the initial allocation problem has
at most $|J| \binom{n+1}{2}$ edges. Every call to \textsc{UnambiguousMarginals}
or \textsc{ShiftProjectUnshift}
strictly reduces the number of edges (by Lemmas~\ref{lemma:procedure2}
and~\ref{lemma:procedure3-reduce}). Hence after at most $|J| \binom{n+1}{2}$
iterations of Step 2, the marginal bids graph of the current allocation problem
is an empty graph, implying that there are no more marginal bids. In
particular, at this point a single call to \textsc{NonMarginals} allocates all
remaining non-marginal bids and returns a vacuous allocation problem. As all
three procedures \textsc{NonMarginals}, \textsc{UnambiguousMarginals} and
\textsc{ShiftProjectUnshift} return a reduction in the sense of
Definition~\ref{defn:reduction}, the solution to the final vacuous allocation
problem is also a solution to the original allocation problem.
To see the running time guarantee, note that \textsc{FindParams} and
the procedures \textsc{NonMarginals}, \textsc{UnambiguousMarginals} and
\textsc{ShiftProjectUnshift} are each called at most $|J| \binom{n+1}{2} =
O(n^2|J|)$ times, and \textsc{FindParams} dominates \textsc{NonMarginals}
and \textsc{UnambiguousMarginals}.
\end{proof}

\paragraph{Incorporating priorities.}
\label{subsubsec:priorities}
The \textsc{FindParams} subroutine as stated in Section~\ref{subsec:graphs} is not fully specified, and different implementations may lead to different inputs for \textsc{UnambiguousMarginals} or \textsc{ShiftProjectUnshift} when given the same allocation problem. If the input for \textsc{ShiftProjectUnshift} depends on the implementation used, ties may be broken differently and thus the target bundle is allocated differently among the bidders.

In order to control which input for \textsc{ShiftProjectUnshift} is returned, we propose the use of a priority list consisting of a permutation of all good-bidder pairs $(i,j) \in [n]_0 \times J$. The priority list can be chosen to favour certain types of bidder; for example, if it is considered desirable to bring `small' bidders (having low demand) into the market, we
can prioritise their bids, making it slightly more likely that they will be allocated. Moreover, the list may be given as an additional input parameter at the start of \textsc{Allocate} or be generated and updated dynamically as \textsc{Allocate} runs. The subroutine \textsc{FindParams} is replaced by
\textsc{PriorityParams}, which returns the highest pair $(i,j)$ in the priority list that constitutes a valid input to \textsc{ShiftProjectUnshift}, or an input to \textsc{UnambiguousMarginals} if no such pair exists. Recall that a pair $(i,j)$ is a valid input for \textsc{ShiftProjectUnshift} if $i$ is a cycle-\linkgood{} in some multi-bidder cycle and $j$ is the label of one of its edges in this cycle. This is the case if the derived graph $D_\mathcal{A}$ has a cycle containing the edge from~$i$ to the \keylist{} $(I,j)$ satisfying $i \in I$. For any (undirected) graph $G$ and edge $vw$, one can check whether $G$ has a cycle containing $vw$ by determining whether its endpoints $v$ and $w$ are connected in the graph $G-vw$ obtained by deleting $vw$, using breadth first search (BFS). This implies the following subroutine.

\begin{algorithm}[htb!]
  \caption{\textsc{PriorityParams}}
  \label{alg:priorityparams}
  \begin{algorithmic}[1]
    \State Compute the derived graph $D_\mathcal{A}$.
    \ForAll {pairs $(i,j)$ in the priority list}
      \If {$i$ is a \linkgood{} and $i \in I$ for some \keylist{} $(I,j)$}
        \State Check whether the edge from $i$ to $(I,j)$ lies in a cycle by temporarily removing the edge and verifying (using BFS) whether the two endpoints are still connected in the graph.
        \State \Return $(i,j)$ if the edge lies in a cycle.
      \EndIf
    \EndFor
    \State \Return some leaf \keylist{} $(I,j)$ and the adjacent \linkgood{} $i$.
  \end{algorithmic}
\end{algorithm}

To see that \textsc{PriorityParams} is well-defined, note that the priority
list is a permutation of all possible pairs $(i,j)$. Hence if the derived graph
contains a cycle, the subroutine will return it, otherwise there exist at least
two leaves for Step 2 to choose from. In practice, we can prune the priority
list dynamically by removing pairs $(i,j)$ once $i$ is no longer a \linkgood{}.
This is possible because \textsc{UnambiguousMarginals} and
\textsc{ShiftProjectUnshift} do not add edges to the
derived graph, and so a good cannot become a \linkgood{} again at some later
point.

\section{Conclusions and further work}
This paper provides a practical process for running auctions in which bidders can express strong-substitutes preferences using positive and negative bids. Recent working papers \citep{BK-OS, LinTran2017} show that \emph{all} strong-substitutes preferences can be represented using appropriate combinations of these bids. While we show that verifying the validity of bid lists is \conp-complete, useful subclasses of collections of bids are efficiently checkable for validity in practice, and can be checked offline, bidder by bidder, prior to the auction.

Our approach to finding market-clearing prices builds on the long-step steepest-descent procedure proposed by \citet{Shioura17}. We both analyse his binary search method for determining long steps in our product-mix auction bidding language context, and also propose a novel second method for determining step lengths.

The structure of the product-mix bidding language also facilitates the design of an efficient algorithm for computing an equilibrium allocation of goods to bidders. Our approach contrasts with the previous literature on the problem of allocating supply to bidders at given prices (\citep{murota-book, MT2001} and \citep{PLW}), which instead assumes that bidders' preferences are accessed through an abstract valuation oracle.

All our algorithms approaches are polynomial-time in the size of the bid list inputs and, as we show in the online companion, efficient in practice.

One question for future research is whether we can also exploit the information provided by the bidding language to improve the computational efficiency of the pre-existing submodular function minimisation subroutine that both our price-finding and allocation
algorithms use. Another obvious question is the extent to which our methods can be extended to broader classes of valuations.

\section*{Acknowledgements}
Baldwin and Klemperer were supported by ESRC Grant ES/L003058/1. Goldberg and Lock were supported by a JP Morgan faculty fellowship during the work on the final version of the paper.
We thank the reviewers for valuable comments and suggestions, which have been helpful in improving this paper. We are particularly grateful to a referee who both identified a problem in an earlier draft of our paper, and took the trouble to develop an elegant solution to it.
We also thank our colleagues -- including Meg Meyer, Duncan Coutts, Martin Bichler and Maximilian Fichtl -- and Tobias Dammers and Andres Löh, who undertook the Haskell implementation and provided much helpful feedback as the algorithms were developed.

\bibliographystyle{abbrvnat}
\bibliography{library}

\appendix

\section{Experiments}
\label{appendix:experiments}
In order to evaluate the practical running time of our allocation algorithm, we
run experiments on various numbers of goods, bidders and bids. We use our own
Python implementation of the product-mix auction, available at
\url{https://github.com/edwinlock/product-mix}. Some effort was made to
optimise for speed by exploiting fast matrix operations provided by the NumPy
package
\citep{Numpy}.
Furthermore, in an effort to implement submodular
minimisation efficiently, the Fujishige-Wolfe algorithm was implemented in
combination with a memoization technique to reduce the number of submodular
function queries.

\subsection{Generating test data}
We describe a procedure to generate a valid list of positive and negative bids
at points within the lattice $[M]^n$ and a bundle $\vec{x}$ that is demanded in
aggregate by these bids at prices $\vec{p} = \frac{1}{2} M \eb^{[n]}$ in the
centre of the lattice. In our experiments, we fix $M=100$. The bids are
generated in such a way that for parameter $q$ at least $q$ bids are marginal
between two or more goods at $\vec{p}$ and we pick $\vec{x}$ such that
$\vec{p}$ is the component-wise minimal market-clearing price vector.
For any permutation $\pi$ of $[n]$, let $\vec{p}_\pi = \vec{p} + \sum_{i \in
[n]} \frac{\varepsilon}{2i} \eb^{\pi (i)}$ for some $\varepsilon < 0.1$. Note
that a unique bundle is demanded at $\vec{p}_\pi$ for any permutation $\pi$.

\begin{algorithm}[h!]
  \caption{\textsc{GenerateList}(n,M,q)}
  \label{alg:generatelist}
  \begin{algorithmic}[1]
    \State \textbf{Initialise:} Empty bid list $\bid$ and bundle $\xb = \bf{0}$.
    \State \textbf{Repeat} the following $q$ times:
    \State Pick a subset $S$ of goods with $|S| \geq 2$ from $[n]_0$ and
flip a fair coin.
    \If {the coin lands on heads}
      \State Pick a positive bid that is marginal on goods $S$ at $\pb$ and add it to $\bids$. \State Pick any good $i \in S$ uniformly at random and increment $x_i$ by one.
    \Else
		  \State Generate a negative bid $\bid$ that is marginal on goods $S$ at $\pb$, as well as the following three positive bids.
      \State Pick two goods $i, j \in [n]$ and add a bid at points $\bid - \lambda_i \eb^i$ and $\bid - \lambda_j \eb^j$ for some $1 \leq \lambda_i \leq b_i - 1$ and $1 \leq \lambda_j \leq b_j - 1$.
      \State Add a bid at $\bid + \lambda \vec{1}$ for some $1 \leq \lambda \leq \min_{i \in [n]} {M-b_i}$.
      \State Pick a permutation $\pi$ of $[n]$ and increment $\vec{x}$ by the bundle aggregately demanded at $\vec{p}_\pi$ by the four bids just generated.
    \EndIf
    \If {$\vec{p}$ is the component-wise minimal market-clearing price vector of $\vec{x}$}
      \State \Return $\bids$.
    \Else
      \State Repeat the algorithm.
    \EndIf
  \end{algorithmic}
\end{algorithm}

Generated in this way, every bid list has $2.5q$ bids in expectation. The
procedure \textsc{GenerateList} is repeated for each bidder, so the total
number of bids generated is $B = 2.5qm$.

\begin{lemma}
	The bid list generated by \textsc{GenerateList} is valid.
\end{lemma}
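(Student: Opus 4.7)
The plan is to apply criterion (\ref{eq:valid-cover}) of Theorem~\ref{thm:validity}: $\bids$ is valid if and only if for every price $\pb'$ and every pair of distinct goods $i',i'' \in [n]_0$, the total weight of bids marginal on both $i'$ and $i''$ at $\pb'$ is non-negative. Since this sum is additive across bids, it suffices to show that the bids generated in a single iteration of the main loop contribute non-negatively on every pair at every price; summing over the $q$ iterations then yields validity.

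The ``heads'' branch only adds a single bid of weight $+1$, whose contribution is trivially non-negative, so all the work sits in the ``tails'' branch, which produces one negative bid $\bid$ of weight $-1$ together with three positive bids $\bid_1 = \bid - \lambda_i \eb^i$, $\bid_2 = \bid - \lambda_j \eb^j$ and $\bid_3 = \bid + \lambda \vec{1}$. The core claim I would establish is: whenever $\bid$ is marginal on a pair $\{a,b\} \subseteq [n]_0$ at some price $\pb'$, at least one of $\bid_1, \bid_2, \bid_3$ is also marginal on $\{a,b\}$ at $\pb'$. Given $w(\bid) = -1$ and $w(\bid_k) = +1$, this immediately yields a non-negative contribution on $\{a,b\}$ at $\pb'$.

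I would prove the core claim by a short case analysis on whether the pair involves the reject good. If $\{a,b\} \subseteq [n]$, then the common surplus of $\bid$ on $a$ and $b$ is at least the surplus on good $0$, namely $0$. Shifting to $\bid_3$ adds $\lambda$ to every non-reject coordinate and leaves the surplus on $0$ unchanged at $0$, so the argmax of $\bid_3$ still contains both $a$ and $b$, making $\bid_3$ marginal on $\{a,b\}$. If $\{a,b\} = \{0,c\}$ for some $c \in [n]$, the common surplus is exactly $0$. When $c \neq i$, passing to $\bid_1$ preserves the surplus on $c$ and on $0$ at $0$ while only strictly decreasing the surplus on $i$, so $\bid_1$ is marginal on $\{0,c\}$; when $c = i$, we have $c \neq j$ (since $i \neq j$), and the symmetric argument gives $\bid_2$ marginal on $\{0,i\}$.

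The main obstacle lies in the asymmetric coverage pattern around the reject good: $\bid_3$ cannot handle pairs involving $0$ because $\lambda \vec{1}$ does not alter the reject coordinate, while each of $\bid_1, \bid_2$ fails on pairs involving its own decremented coordinate. The reason \textsc{GenerateList} introduces two positive bids of the form $\bid - \lambda_k \eb^k$ along \emph{distinct} directions $i \neq j$ is precisely to ensure that, for any specific good $c \in [n]$, at least one of them is unperturbed at $c$. The side bounds $1 \leq \lambda_i \leq b_i - 1$, $1 \leq \lambda_j \leq b_j - 1$ and $1 \leq \lambda \leq \min_k (M - b_k)$ merely keep the positive bids inside $[M]^n$ and play no role in the validity argument.
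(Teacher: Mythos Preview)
Your proof is correct and follows the same decomposition as the paper: both reduce to showing that each iteration's contribution is a valid bid list (using that the union of valid lists is valid), with the single positive bid being trivial and the four-bid gadget requiring verification. The paper's proof simply says ``apply Theorem~19'' for the gadget, whereas you actually carry out the verification via criterion~\eqref{eq:valid-cover} of Theorem~\ref{thm:validity}; your case analysis on whether the marginal pair contains the reject good is exactly the content that the paper's terse cross-reference leaves implicit, and your observation that the two decrement directions $i\neq j$ are needed precisely so that every pair $\{0,c\}$ is covered by at least one of $\bid_1,\bid_2$ is the key point.
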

\begin{proof}
	Note that the union of finitely many valid bid lists is again a valid
bid list. Hence it suffices to show that the list consisting of one negative
bid $\bid$ and three positive bids $\bid - \lambda_i \eb^i, \bid - \lambda_j
\eb^j$ and $\bid + \lambda \vec{1}$ is valid. To see this, apply Theorem~19 from the main paper.
\end{proof}

\subsection{Testing the algorithm}
We run \textsc{Allocate} on allocation problems with bid lists generated by
\textsc{GenerateList} for different numbers of goods $n$, bidders $m$ and bids
$B$. It should be noted that, as expected, the value of $M$ has no discernable
impact on the running time of the allocation algorithm and thus can be fixed to
$M=100$. We perform three pairs of experiments in which we vary one of the
parameters $n, m, q$ and fix the other two to realistic values.

\begin{enumerate}
	\item For the first pair of experiments, we vary the average number of
bids by running \textsc{GenerateList} with values $q=20, 40, 60, \ldots, 500$.
We fix the number of bidders to $m=5$ and the number of goods to $n=2$ and
$n=10$, respectively.
	\item For the second pair of experiments, we vary the number of goods
from $n=10$ to $50$ in steps of $5$ and fix $q$ to 50 and 100, respectively.
	\item Finally, the last pair of experiments varies the number of
bidders from $m=2$ to $m=20$ in steps of 1 with two goods $n=2$, while the bid
numbers are fixed by setting $q$ to $50$ and $100$, respectively.
\end{enumerate}

For each data point $(n,m,q)$, 50 allocation problems with $n$ goods, $m$
bidders and $2.5q$ bids per bidder (in expectation) are generated. The
\textsc{Allocate} algorithm is then timed on each allocation problem and the
average over all 50 times is recorded.

\subsection{Results}
The outcomes of the three pairs of experiments are shown in
Figures~\ref{fig:exp1} to \ref{fig:exp3}. The experimental data corroborates
the running time bounds for \textsc{Allocate} given in Theorem~9:
the algorithms is linear in the number of bids and quadratic in the number of
goods. Figure~\ref{fig:exp3} suggests that our algorithm runs in quadratic time
on our generated allocation problems, which is in line with our theoretical
bound, as the total number of bids $B=2.5qm$ is linear in $m$. Overall, we see
that our allocation algorithm runs quickly even when presented with a large
number of bids.

\begin{figure}[htb!]
	\centering
	\begin{subfigure}{.49\textwidth}
		\centering

\includegraphics[scale=0.45]{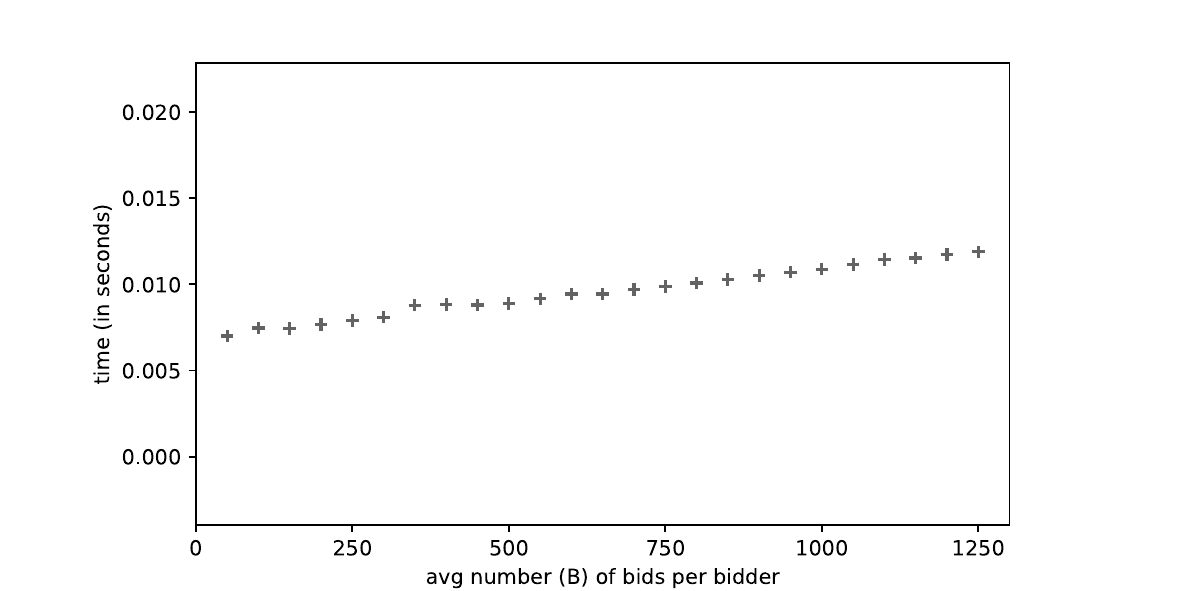}
		\caption{Parameters: $n=2, m=5, M=100$.}
	\end{subfigure}
	\begin{subfigure}{.49\textwidth}
		\centering

\includegraphics[scale=0.45]{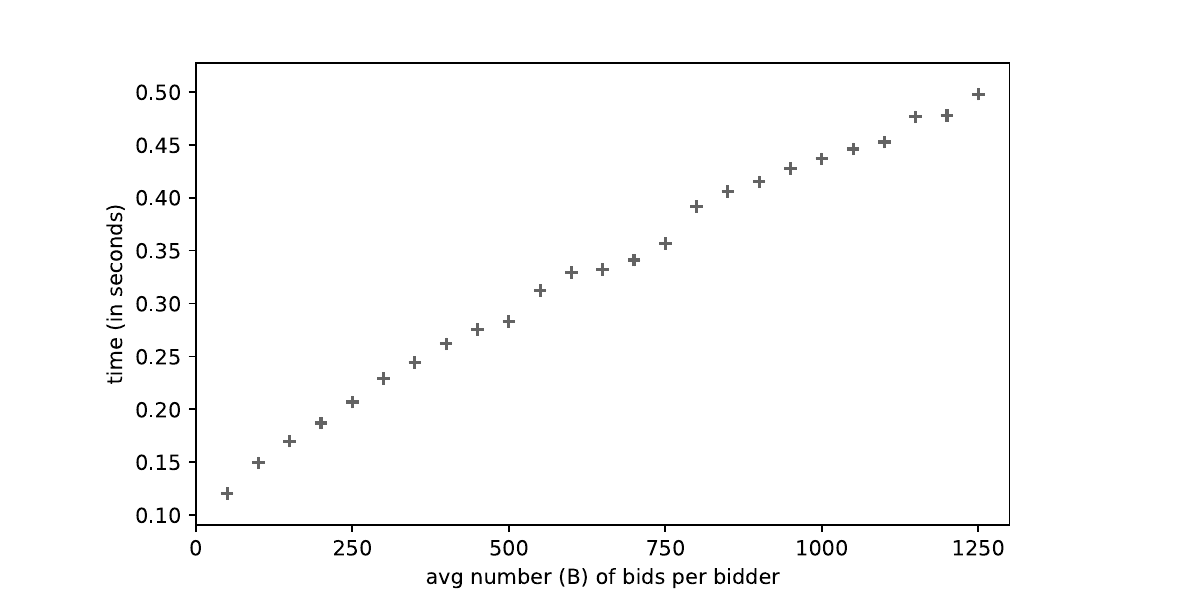}
		\caption{Parameters: $n=10, m=5, M=100$.}
	\end{subfigure}
	\caption{Testing \textsc{Allocate} by varying the number $B$ of bids
for each bidder while keeping all other parameters fixed. We increase $q$ from
$20$ to $500$ in steps of $20$, fix the number of bidders at $m=5$ and set the
number of goods to $n=2$ (a) and $n=10$ (b), respectively.}
	\label{fig:exp1}
\end{figure}

\begin{figure}[htb!]
	\centering
	\begin{subfigure}{.49\textwidth}
		\centering

\includegraphics[scale=0.45]{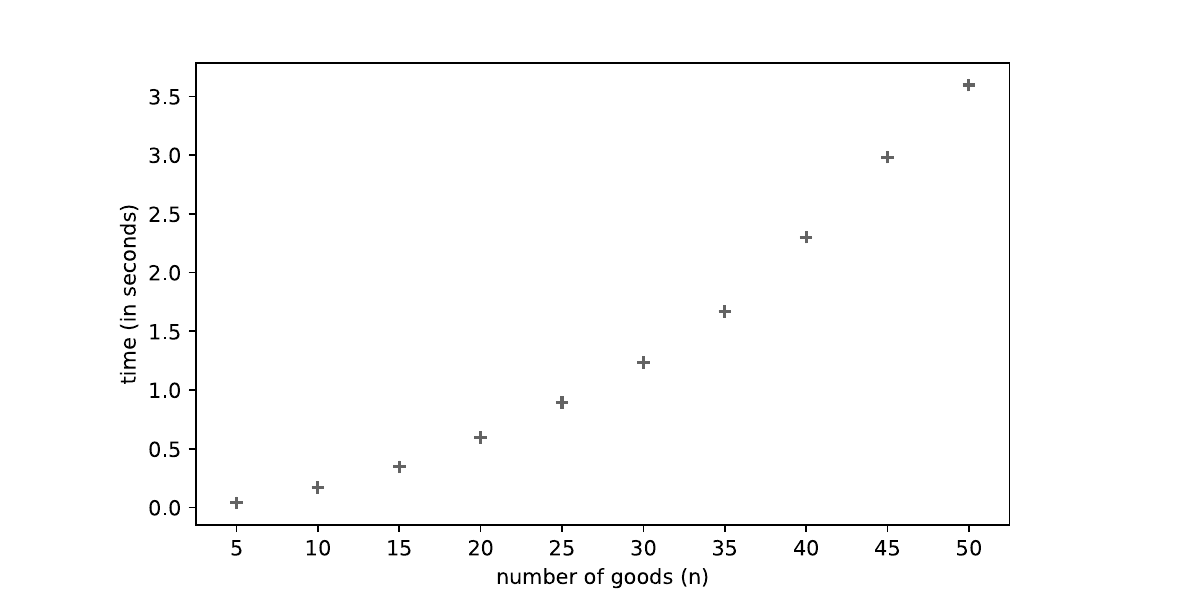}
		\caption{Parameters: $m=5, M=100, q=50$.}
	\end{subfigure}
	\begin{subfigure}{.49\textwidth}
		\centering

\includegraphics[scale=0.45]{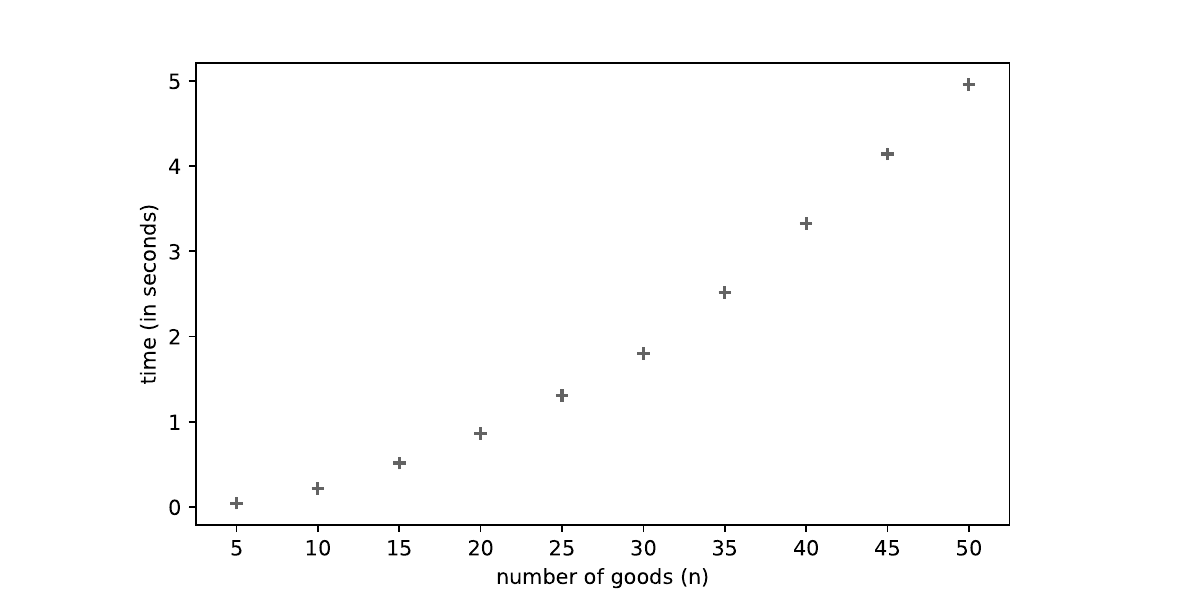}
		\caption{Parameters: $m=5, M=100, q=100$.}
	\end{subfigure}
	\caption{Testing \textsc{Allocate} by varying the number $n$ of goods.
We increase $n$ from $5$ to $50$ in steps of $5$, fix the number of bidders at
$m=5$ and $M=100$ and set the number of bids to $q=50$ (a) and $q=100$ (b),
respectively.}
	\label{fig:exp2}
\end{figure}

\begin{figure}[htb!]
	\begin{subfigure}{.49\textwidth}
		\centering

\includegraphics[scale=0.45]{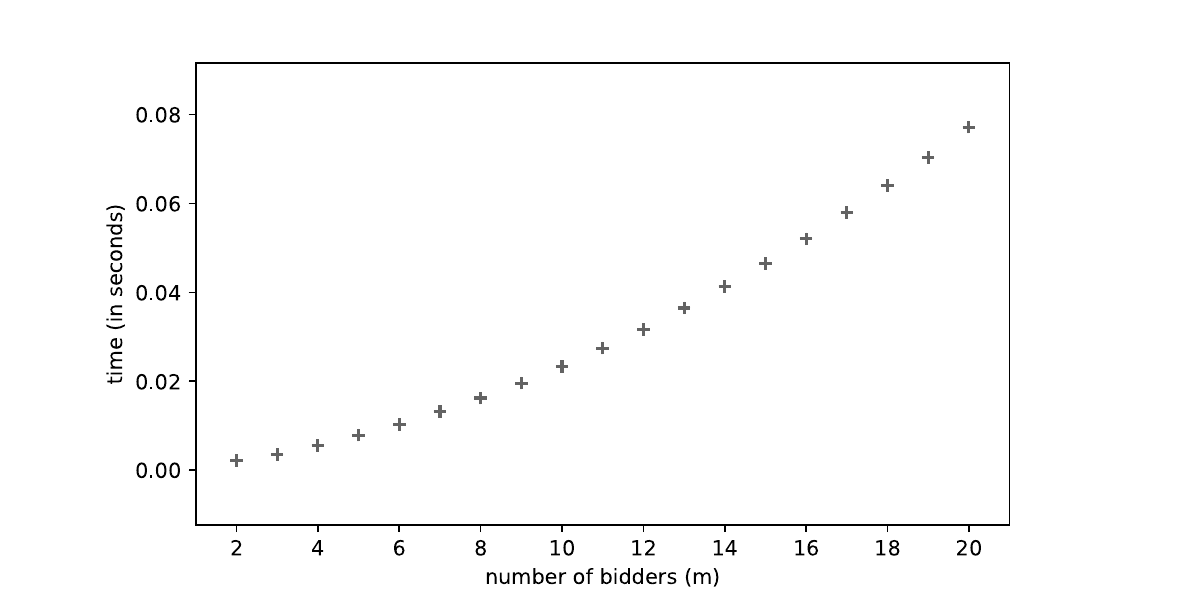}
		\caption{Parameters: $n=2, M=100, q=50$.}
	\end{subfigure}
	\begin{subfigure}{.49\textwidth}
		\centering

\includegraphics[scale=0.45]{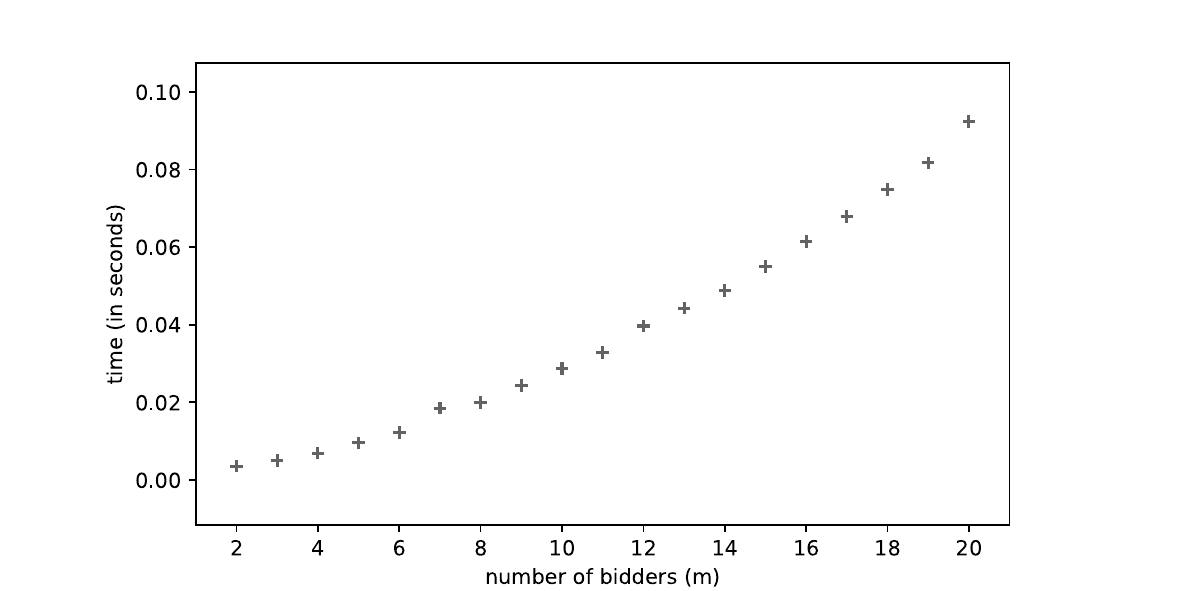}
		\caption{Parameters: $n=2, M=100, q=100$.}
	\end{subfigure}
	\caption{Testing \textsc{Allocate} by varying the number $m$ of
bidders. We increase $m$ from $2$ to $20$ in steps of $1$, fix the number of
goods at $n=2$ and set $q$ to $50$ (a) and $100$ (b), respectively.}
	\label{fig:exp3}
\end{figure}

\section{Additional proofs}
\label{sec:missing-proofs}

We give some definitions that are used in the proofs below. Here and in the following, we use terminology and material developed in \citep{BK}, drawing on the literature in tropical geometry, especially \citep{Mikhalkin}; see also the textbooks by \citet{Maclagan.Sturmfels2015} and \citet{Joswig}. 

\textbf{Geometric objects for valuations.}
The Locus of Indifference Prices (LIP) $\mathcal{L}_u$ for valuation~$u$ with demand correspondence $D_u$ is defined as
\[
    \mathcal{L}_{u} = \{ \pb \in \R^n \mid |D_u(\pb)| > 1 \}.
\]
This set was identified as an object of mathematical interest by \citet{Mikhalkin} and is also known as a ``tropical hypersurface'' \citep{Maclagan.Sturmfels2015, Joswig}.

Recall that a \emph{rational polyhedral complex} is a finite collection of polyhedrons, closed under taking faces; non-empty intersections of these polyhedrons must be faces of them, and these faces must have rational slope \citep{Maclagan.Sturmfels2015,Joswig}.  $\mathcal{L}_u$ can be decomposed into an $(n-1)$-dimensional \emph{rational polyhedral complex} $\Pi_u$ \citep[Proposition 2.1]{Mikhalkin}. We call the $(n-1)$-dimensional polyhedra in this complex its \textit{facets}. Let $w_u$ be the weight function obtained as follows: if $\xb$ and $\yb$ denote the two bundles demanded on either side of a facet $F$, then the weight $w_u(F)$ of $F$ is the greatest common divisor of the entries of $\xb - \yb$. We recall the following proposition from~{\citep{BK}}.

\begin{proposition}[\mbox{\citep[Proposition 2.4]{BK}}]
\label{prop:demand-change}
    The change in demand, as prices change between the UDRs on either side of facet $F$, is $w_u(F)$ times the primitive integer vector that is normal to $F$ and points in the opposite direction to the change in price.
\end{proposition}
In particular we identify the \emph{strong-substitutes vectors}: those non-zero vectors in $\Z^n$ which have at most one $+1$ entry, at most one $-1$ entry, and no other non-zero entries. (These vectors are also distinguished in the literature on discrete convex analysis, being the edges of discrete generalised polymatroids; see e.g.~\citep[Theorem 3.9]{ST15}). Recall that we assume throughout this paper that valuations are concave-extensible. This implies the following theorem (see also \mbox{\citep[Proposition 3.8]{BK}}).

\begin{theorem}[\mbox{\citep[Theorems 4.1 and 4.10]{ST15}}]
\label{thm:ss-demand-type}
A valuation~$u$ is strong-substitutes if and only if the facets of $\mathcal{L}_u$ are normal to strong-substitutes vectors.
\end{theorem}

Following \citep[Definition 3]{Mikhalkin}, we say that a weighted polyhedral complex $(\Pi, w)$ is \emph{balanced} if, for every $(n-2)$-cell $G$ of $\Pi $, the weights $w(F_{j})$ on the facets $F_{1},\ldots,F_{l}$ that contain $G$, and primitive integer normal vectors $\mathbf{v}_{F_{j}}$ for these facets that are defined by a fixed rotational direction about~$G$, satisfy $\sum_{j=1}^{l}w(F_{j})\mathbf{v}_{F_{j}}=0$.  Balancing is essential to what \citep[Theorem 2.14]{BK} call the ``Valuation-Complex Equivalence Theorem'':
\begin{theorem}[{{\citep[Remark 2.3 and Proposition 2.4]{Mikhalkin}}}]
\label{thm:valuation-complex}
Suppose that $(\Pi,w)$ is an $(n-1)$-dimensional weighted rational polyhedral complex in $\R^n$ with positive weights only, and that $\mathcal{L}$ is the union of the cells in $\Pi$.  There exists a finite set $A \subsetneq \Z^n$ and a concave-extensible valuation $u:A \to \R$ such that $\mathcal{L}_u = \mathcal{L}$ and $w_u = w$, if and only if $(\Pi, w)$ balanced.  Moreover in this case, then for any price vector $\pb$ not contained in $\mathcal{L}$, there exists a unique concave-extensible such $u$ satisfying also $u(\bm{0}) = 0$ and $D_u(\pb)  = \{\bm{0}\}$.
\end{theorem}

\textbf{Geometric objects for bids.}
In analogy to the definitions above, we now define geometric objects for product-mix bids and bid lists.
The Locus of Indifference Prices (LIP) for a bid~$\bid$ is
\[
	\mathcal{L}_{\bid} = \{ \pb \in \R^n \mid |\argmax_{i \in [n]_0} (b_i -
p_i)| > 1 \}.
\]
Similarly, the LIP of a list of bids $\bids$ is given by
\[
	\mathcal{L}_\bids = \{ \pb \in \R^n \mid |D_\bids(\pb)| > 1\}.
\]
Any $\mathcal{L}_\bids$ can be decomposed into an $(n-1)$-dimensional rational
polyhedral complex $\Pi_\bids$. We call the $(n-1)$-dimensional polyhedra in
this complex its \emph{facets}. Endow every facet $F$ of $\Pi_\bids$ with a
weight, as follows:
\[
	w_\bids(F) \coloneqq \sum \{ w(\bid) \mid \bid \in \bids \text{ and }
\mathcal{L}_{\bid} \supseteq F\}.
\]

\subsection{Proof of Theorem~\ref{thm:validity}}
\label{appendix:validity}
\begin{proof}[Proof of Theorem~\ref{thm:validity}]
The equivalence of \eqref{eq:valid-convexity} and \eqref{eq:valid-cover} follows from Proposition \ref{prop:local-validity} if we set $P = \mathbb{R}^n$. The implication $\eqref{eq:valid-SS} \Rightarrow \eqref{eq:valid-convexity}$ is well-known; \citet[Theorem 7.1]{ST15} and \citet[Theorem~4.3]{MSY2016}, for instance, show that $f_u$ is a polyhedral $L^\natural$-convex function and these functions are defined as a subclass of convex functions (cf.~{\citep{MSY2016}}).
We now show $\eqref{eq:valid-cover} \Rightarrow \eqref{eq:valid-SS}$ using Theorem~{\ref{thm:valuation-complex}}. Let $(\Pi_{\bids},w_\bids)$ be the weighted polyhedral complex associated with $\bids$. Note that as well as showing that balancing holds, we must also demonstrate that all weights of the polyhedral complex are positive, which is the setting for Theorem~{\ref{thm:valuation-complex}}.

Observe first that every facet of $\Pi_\bids$ has a strong-substitutes vector as its normal vector, as this property holds for any individual $\mathcal{L}_\bid$. Moreover, the weighted polyhedral complex $(\Pi_\bid,w_\bid)$ associated with a single bid is balanced. If $w(\bid)=1$, then this follows because
$(\Pi_\bid,w_\bid)$ is the weighted polyhedral complex associated with a simple
valuation for at most one unit of any good, and apply Theorem \ref{thm:valuation-complex}.
This then extends to the case $w(\bid)=-1$ because changing the sign of the
weights of all facets does not affect balancing. Then $(\Pi_\bids,w_\bids)$ is
also balanced, as it is the complex corresponding to the union of balanced LIPs.

Next we see that all weights of $(\Pi_\bids,w_\bids)$ are non-negative. Let $F$ be a facet of $\Pi_\bids$ and fix a point~$\vec{p}$ in the relative interior of $F$. Then, since the normal vector to $F$ is a strong-substitutes vector and by Proposition~{\ref{prop:demand-change}}, every bid is either non-marginal at $\vec{p}$ or marginal between the same two distinct goods, $i,i'\in[n]_0$. Let $\bids_{ii'}$ denote the bids that are marginal between $i,i'$ at $\vec{p}$. Note that we have $F \subseteq \mathcal{L}_{\bid}$ if and only if $\bid$ is marginal on $i, i'$, so $w_\bids(F) = \sum_{\bid \in \bids_{ii'}} w(\bid) \geq 0$ (by assumption~\eqref{eq:valid-cover}).

$(\Pi_\bids,w_\bids)$ may have zero-weighted facets: let the set of these be $\Pi_\bids^0$ and write $\Pi_\bids^+\coloneqq\Pi_\bids\setminus\Pi_\bids^0$. Then $\Pi_\bids^+$ inherits from $\Pi_\bids$ the structure of a polyhedral complex. Moreover, if we write $w_\bids^+$ for the restriction of $w_\bids$ to the facets of $\Pi_\bids^+$, then $(\Pi_\bids^+,w_\bids^+)$ is balanced: removing $0$-weighted facets does not affect this criterion.

So, we write $\mathcal{L}_\bids^+$ for the union of the cells in $\Pi_\bids^+$.
Let $\overline{\bm{p}}$ be a price vector whose components are strictly larger
than the components of any bid, so that $D_\bids (\overline{\bm{p}}) = \bm{0}$.
By Theorem~{\ref{thm:valuation-complex}}, there exists a unique concave-extensible
valuation $u$ such that $\mathcal{L}_u = \mathcal{L}_\bids^+$, $w_u =
w_\bids^+$, $u(\bm{0}) = 0$ and $D_u(\overline{\bm{p}}) = \vec{0}$.  We observed above that all facets of $\Pi_\bids$, and hence of $\mathcal{L}_u$, are normal to a strong-substitutes vector, so $u$ is a strong-substitutes valuation by Theorem \ref{thm:ss-demand-type}.

By Proposition~{\ref{prop:demand-change}}, the facets and weights of $\mathcal{L}_u$ define all changes in demand between UDRs associated with valuation~$u$.
If $w(\bid)$ is positive then $\mathcal{L}_\bid$ has the property that facets and weights define all changes in demand between UDRs (see also \citep[Section 2.2]{BK}), and it is easy to see that the same follows for negative-weighted bids, and thus extends by aggregation to our positive-weighted polyhedral complex $(\Pi_\bids^+,w_\bids^+)$. 
So $D_u(\overline{\bm{p}}) = D_\bids(\overline{\bm{p}})$ implies
$D_u(\vec{p}) = D_\bids(\vec{p})$ for all UDR prices $\vec{p} \in \mathbb{R}^n$, and hence for all prices, since $u$ is concave-extensible and $D_\bids(\vec{p})$ is defined to be discrete convex.

We now show that $f_\bids = f_u$. For any price $\vec{p}$ at which demand is unique for every bid $\bid \in \bids$, let $\vec{x}$ be the bundle demanded and let $(i(\bid))_{\bid \in \bids}$ be the list of goods allocated to the bids. That is, each bid $\bid \in \bids$ uniquely demands $i(\bid)$ and $\xb = \sum_{\bid \in \bids} w(\bid) \vec{e}^{i(\bid)}$. Then we claim that
\begin{equation}\label{eq:char-valuation}
	u(\xb) = \sum_{\bid \in \bids} w(\bid) b_{i(\bid)}.
\end{equation}
This then immediately implies $f_\bids(\vec{p}) = f_u(\vec{p})$ for any price
$\vec{p} \in \mathbb{R}^n$ at which demand is unique for every bid, since
\[
f_\bids(\vec{p}) = \sum_{\bid \in \bids} w(\bid) \max_{i \in [n]_0} (b_i -
p_i) = \sum_{\bid \in \bids} w(\bid) (b_{i(\bid)} - p_{i(\bid)}) = u(\vec{x}) -
\vec{x} \cdot \vec{p} = f_u(\vec{p}).
\]
But these prices are dense in $\R^n$ so it follows by continuity of these piecewise-linear functions that $f_\bids(\vec{p})=f_u(\vec{p})$ for all $\vec{p}\in\R^n$.

To see that \eqref{eq:char-valuation} holds, first observe that it holds for the demand and allocation at price  $\overline{\vec{p}}$: here, by definition $i(\bid)=0$ for all $\bid\in\bids$, so $\xb=\bm{0}$; and $u(\bm{0})=0$. 

Next, consider price vectors $\vec{p}'$, $\vec{p}''$ at which bundles $\vec{x}'$, $\vec{x}''$ are uniquely demanded, via unique allocations to bids 
$(i'(\bid))_{\bid \in \bids}$ and $(i''(\bid))_{\bid \in \bids}$ respectively.  Suppose moreover that there exists a price $\vec{p}$ satisfying $\vec{x}',\vec{x}''\in D_\bids(\vec{p})=D_u(\vec{p})$ with $i'(\bid),i''(\bid)\in\argmax_{i\in[n]_0}(b_i-p_i)$ for all $\bid\in\bids$: this holds if $\vec{p}'$ and $\vec{p}''$ are in the same or neighbouring connected components of the complement of $\Pi_\bids$ in $\R^n$.  Then we have:
\begin{equation}\label{eq:abc1}
	u(\vec{x}') - \vec{p} \cdot \vec{x}' = u(\xb'') - \vec{p} \cdot \vec{x}''
\end{equation}
and
\begin{equation}\label{eq:abc2}
	f_\bids(\vec{p}) = \sum_{\bid \in \bids} w(\bid) (b_{i'(\bid)} -
p_{i'(\bid)}) = \sum_{\bid \in \bids} w(\bid) (b_{i''(\bid)} - p_{i''(\bid)}).
\end{equation}
Suppose \eqref{eq:char-valuation} holds for bundle $\vec{x'}$ and
$(i'(\bid))_{\bid \in \bids}$. Then \eqref{eq:abc1} and \eqref{eq:abc2} imply
that \eqref{eq:char-valuation} holds for $\vec{x}''$ and $(i''(\bid))_{\bid \in
\bids}$. But now we can use the fact that \eqref{eq:char-valuation} holds for the demand and allocation at price $\overline{\vec{p}}$ to infer that 
\eqref{eq:char-valuation} for the demand and allocation at any price at which demand is unique, by induction.
\end{proof}

\subsection{Proof of Proposition~\ref{prop:local-validity}}
\label{appendix:local-validity}
We make use of the following technical observation to prove Proposition~\ref{prop:local-validity}.

\begin{observation}\label{obs:one-dim-convex}
	Let $f:[0,1] \to \mathbb{R}$ be a continuous, piecewise-linear
function. If its slope is non-decreasing as we move from $0$ to $1$, $f$ is
convex. This implies that if $f$ is not convex, there exists a point $\lambda$
at which two linear segments meet and the slope decreases.
\end{observation}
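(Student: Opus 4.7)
The plan is to prove the first sentence by comparing secant slopes, and then to obtain the second sentence as an immediate contrapositive. Let $0 = \lambda_0 < \lambda_1 < \cdots < \lambda_k = 1$ be the breakpoints of $f$, and let $s_i$ denote the (constant) slope of $f$ on $[\lambda_{i-1},\lambda_i]$, so that by hypothesis $s_1 \le s_2 \le \cdots \le s_k$.

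I would first establish the following secant-slope claim: for any $0 \le x < z \le 1$, the secant slope $\sigma(x,z) \coloneqq (f(z)-f(x))/(z-x)$ lies between $\min\{s_i\}$ and $\max\{s_i\}$ taken over the pieces that intersect $[x,z]$. This is because, by continuity and piecewise linearity,
\begin{equation*}
f(z) - f(x) = \sum_{i} s_i \cdot \ell_i, \qquad \text{where } \ell_i = |[\lambda_{i-1},\lambda_i] \cap [x,z]|,
\end{equation*}
so $\sigma(x,z) = \sum_i s_i \ell_i / \sum_i \ell_i$ is a convex combination of the relevant slopes.

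Next I would use this to verify the standard three-slope characterisation of convexity: for any $x < y < z$ in $[0,1]$, $\sigma(x,y) \le \sigma(y,z)$. Since $\sigma(x,y)$ is a convex combination of the $s_i$ with $[\lambda_{i-1},\lambda_i] \cap [x,y] \ne \emptyset$ and $\sigma(y,z)$ is a convex combination of the $s_i$ with $[\lambda_{i-1},\lambda_i] \cap [y,z] \ne \emptyset$, and since every index in the first set is at most every index in the second set, monotonicity $s_1 \le \cdots \le s_k$ yields $\sigma(x,y) \le \sigma(y,z)$. The three-slope inequality is equivalent to convexity of a continuous function on an interval, so $f$ is convex.

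For the second sentence, I would argue by contrapositive: suppose the slope never decreases at any breakpoint, i.e.\ $s_1 \le s_2 \le \cdots \le s_k$. Then the hypothesis of the first sentence holds, so $f$ is convex. Hence if $f$ is not convex there must exist some index $i$ with $s_i > s_{i+1}$, and $\lambda = \lambda_i$ is the desired breakpoint. There is no real obstacle here; the only care needed is the routine verification that the secant slope is a convex combination of piece slopes, which is the step I would write out explicitly.
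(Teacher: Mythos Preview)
Your argument is correct. The paper states this as an Observation without proof, treating it as self-evident; your secant-slope argument (expressing $\sigma(x,z)$ as a convex combination of the piece slopes, then invoking the three-slope characterisation of convexity, then taking the contrapositive) is a sound and entirely standard way to justify it, and in fact supplies more detail than the paper does.
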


\begin{proof}[Proof of Proposition~\ref{prop:local-validity}]
Recall that $f_\bids$ restricted to $P$ is convex if and only if
\begin{equation}\label{eq:iuf-convex}
	f_\bids(\lambda \vec{p} + (1-\lambda)\vec{q}) \leq \lambda
f_\bids(\vec{p}) + (1-\lambda)f_\bids(\vec{q}), \quad \forall \vec{p}, \vec{q}
\in P \text{ and } \lambda \in [0,1].
\end{equation}

Suppose that $\vec{p}$ and $\vec{q}$ are two non-marginal prices in
neighbouring UDRs such that for some $\lambda \in (0,1)$, the point $\vec{r} =
\lambda \vec{p} + (1-\lambda) \vec{q}$ lies on the interior of a facet
separating the UDRs. This implies two possibilities for each bid $\bid \in
\bids$. Either $\bid$ non-marginally demands the same good at $\vec{p}$ and
$\vec{q}$ (and thus also at $\vec{r}$). Or $\bid$ non-marginally demands $i$ at
$\vec{p}$ and $i'$ at $\vec{q}$, and is marginal (only) on $i$ and $i'$ at
$\vec{r}$. Moreover, all bids of the latter kind are marginal on the same two
goods.

Let $\bids_{ii'}$ denote the bids that demand the set $\{i, i'\}$ at $\vec{r}$.
We show that $f_{\bids}$ satisfies \eqref{eq:iuf-convex} for $\vec{p}, \vec{q}$
and any $\lambda \in [0,1]$ if and only if the weights of the bids marginal on
$i$ and $i'$ sum to a non-negative number, i.e.~if $\sum_{\bid \in \bids_{ii'}}
w(\bid) \geq 0$.
By construction, we have
\begin{align*}
	f_\bids (\vec{p}) = \sum_{\bid \in \bids_{ii'}} w(\bid) (b_i - p_i) +
\sum_{\bid \not \in \bids_{ii'}} w(\bid) \max_{j \in [n]_0} (b_j - p_j), \\
	f_\bids (\vec{q}) = \sum_{\bid \in \bids_{ii'}} w(\bid) (b_{i'} -
q_{i'}) + \sum_{\bid \not \in \bids_{ii'}} w(\bid) \max_{j \in [n]_0} (b_j -
q_j), \\
	f_\bids (\vec{r}) = \sum_{\bid \in \bids_{ii'}} w(\bid) (b_i - r_i) +
\sum_{\bid \not \in \bids_{ii'}} w(\bid) \max_{j \in [n]_0} (b_j - r_j).
\end{align*}

Hence the inequality
\begin{equation}\label{eq:specific-nonconvex}
	f_\bids(\vec{r}) \leq \lambda f_\bids(\vec{p}) + (1-\lambda)
f_\bids(\vec{q})
\end{equation}
holds if and only if
\begin{equation}\label{eq:valid-convex}
	\sum_{\bid \in \bids_{ii'}}w(\bid)[(b_i - q_i) - (b_{i'} - q_{i'})]
\leq 0.
\end{equation}

Note that the term $(b_i - q_i) - (b_{i'} - q_{i'})$ is strictly negative, as $\bid$ demands $i'$ and not $i$ at $\vec{q}$. Moreover, this term is the same for all bids $\bid \in \bids_{ii'}$, as $i$ and $i'$ are both demanded at $\vec{r}$. This implies by \eqref{eq:valid-convex} that $\sum_{\bid \in \bids_{ij}} w(\bid) \geq 0$ if and only if \eqref{eq:specific-nonconvex} holds.

Now we prove our main statement. Suppose $f_\bids$ restricted to $P$ is not convex. Then there exist $\vec{p}, \vec{q} \in P$ and $\lambda \in [0,1]$ that violate \eqref{eq:iuf-convex}. Due to continuity of $f_\bids$, we can perturb $\vec{p}$ and $\vec{q}$ slightly so that \eqref{eq:iuf-convex} is still violated, $\vec{p}, \vec{q}$ are non-marginal and the line segment $[\vec{p}, \vec{q}]$ crosses only interiors of facets of the LIP $\mathcal{L}_\bids$. We can assume without loss of generality that $\vec{p}$ and $\vec{q}$ are in neighbouring UDRs; otherwise, we can apply to find two non-marginal prices on the interior of the line segment for which \eqref{eq:iuf-convex} fails. Thus we can apply the above to see the implication $2 \Rightarrow 1$.

For the converse ($1 \Rightarrow 2$), suppose there exist $\vec{r}$ and $i,i' \in [n]_0$ such that the weights of bids marginal on $i,i'$ at $\vec{r}$ sum to a negative number. Without loss of generality, we can assume that all marginal bids at $\vec{r}$ are marginal only on goods $i,i'$, by subtracting $\delta \vec{e}^{\{i,i'\}}$ from $\vec{r}$ for some infinitesimal positive value of $\delta$ if necessary. Hence $\vec{r}$ lies in the interior of a facet $F$ and for small enough $\varepsilon > 0$, the points $\vec{p} = \vec{r} + \varepsilon \vec{e}^i - \varepsilon \vec{e}^{i'}$ and $\vec{q} = \vec{r} - \varepsilon \vec{e}^i + \varepsilon \vec{e}^{i'}$ lie in $P$ and in neighbouring UDRs separated by $F$. By the above, this implies that $\vec{p}$ and $\vec{q}$ with $\lambda = 1/2$ violate \eqref{eq:iuf-convex} and we are done.
\end{proof}

\subsection{Proof of Lemma~\ref{obs:validity-margin}}
\label{appendix:surplus-validity}

\begin{proof}[Proof of Lemma~\ref{obs:validity-margin}]
We show that $f_\bids$ is convex on $B(\pb,\varepsilon/2)$ by verifying that it satisfies midpoint convexity. Note that $f_\bids$ is linear on the line segment connecting $\pb$ to $\qb$, for any $\qb \in B(\pb, \varepsilon/2)$. Indeed, $f_\bids$ is the sum of terms $w(\bid)\max_{i \in [n]_0} (b_i - p_i)$ and it suffices to show that each term is linear. Hence fix $\bid$ and define $h(\pb) = w(\bid) \max_{i \in [n]_0} (b_i - p_i)$ as well as $\rb = \theta \pb + (1-\theta) \qb$. By Observation~\ref{obs:no-new-marginals}, the goods demanded by $\bid$ at $\qb, \rb$ and $\pb$ satisfy $I_q \subseteq I_r \subseteq I_{\pb}$. Hence, for any $i^* \in I_q$, we have $h(\pb) = w(\bid) (b_{i^*} - p_{i^*}) = \theta h(\pb) + (1-\theta)h(\qb)$.
Now fix $\qb, \qb' \in B(\pb, \varepsilon /2)$ and choose $\theta > 0$ so
that $\rb = \theta \pb + (1-\theta)\qb$, $\rb' = \theta \pb + (1-\theta)\qb'$
and $(\rb+\rb')/2$ are in $B(\pb, \delta)$. As $f_\bids$ is convex on
$B(\pb,\delta)$, by assumption, we have
	$f \left( \frac{1}{2}(\rb + \rb') \right) \leq \frac{1}{2}f(\rb)
+ \frac{1}{2}f(\rb')$.
	Secondly, we have $f(\rb) = \theta f(\pb) + (1-\theta)f(\qb)$, $f(\rb')
= \theta f(\pb) + (1-\theta)f(\qb')$ and $f(\frac{1}{2}(\rb+\rb')) = \theta
f(\pb) + (1-\theta)f(\frac{1}{2}(\qb+\qb'))$ due to the linearity of $f_\bids$
on the line segments connecting $\pb$ to $\qb, \qb'$ and $(\qb+\qb')/2$ . This
implies midpoint convexity,
		$f \left (\frac{1}{2}(\qb+\qb') \right ) \leq \frac{1}{2}f(\qb)
+ \frac{1}{2}f(\qb').$
\end{proof}

\subsection{Proof of Proposition~\ref{prop:shift-valid}}
\label{appendix:shift-valid}

In order to prove Proposition~{\ref{prop:shift-valid}}, we first prove in Lemma~{\ref{lemm:local-valuation}} that the demand of a bid list that is locally valid around integral prices coincides, in the local neighbourhood of these prices, with the demand of some strong-substitutes valuation (in analogy to the third statement in Theorem~{\ref{thm:validity}}). Working directly with the valuations that correspond locally to our bid lists, we then establish two technical lemmas (Lemmas~{\ref{lemma:correct-shift}} and {\ref{lemma:shift-subs:old}}) that illustrate the robustness of demand when we perturb valuations by some small value. We recall that a vector is strong-substitutes if it has at most one $+1$ entry, at most one $-1$ entry, and no other non-zero entries. In Lemma~{\ref{lemma:shift-subs:old}} and the proof of Proposition~{\ref{prop:shift-valid}}, we will refer to the \textit{aggregate valuation} of two valuations. The aggregation of two bid lists is defined as the concatenation of the two individual lists. For any two strong-substitutes valuations $u^1$ and $u^2$ associated with valid bid lists $\bids^1$ and $\bids^2$, we define the \textit{aggregate valuation} $u^{1,2}$ as the strong-substitutes valuation associated with the aggregation of the two bid lists $\bids^1$ and $\bids^2$; note that the existence of $u^{1,2}$ is guaranteed by Theorem~{\ref{thm:validity}}.%
\footnote{Note that $u^{1,2}$ can be expressed as $u^{1,2}(\xb) = \max_{\yb \in \Z^n} (u^{1}(\yb) + u^{2}(\yb - \xb))$ for all $\xb \in \Z^n$. This is stated, for instance, in footnote 20 of \citep{BK}, and is related to the \textit{aggregate cost function} in \citep[Chapter 11, page 335]{murota-book}. We do not make use of this way of expressing $u^{1,2}$ in the following proofs.}

\begin{lemma}\label{lemm:local-valuation}
    Let $\bids$ be an integral bid list that is $\varepsilon$-valid at integral prices $\pb$ with $\varepsilon < 1/2$. There exists a strong-substitutes valuation $u$ such that $D_{\bids}(\qb) = D_{u}(\qb)$ for all $\qb \in B(\pb;\varepsilon)$.
\end{lemma}

\begin{proof}[Proof of Lemma \ref{lemm:local-valuation}]
Let $\mathcal{H}$ be the collection of hyperplanes obtained by taking the affine hull of all facets in the polyhedral complex $\Pi_{\bids}$ of $\bids$ that intersect $B(\pb;\varepsilon)$. Let $H \in \mathcal{H}$ be a hyperplane obtained from facet $F$ of $\Pi_{\bids}$. As $F$ is normal to a strong-substitutes vector and the bid entries are integral, we can write $H = \{\qb \in \R^n \mid q_i - q_j = c \}$ for some $i,j\in[n]_0$ and some integer $c$. For any $\qb \in B(\pb;\varepsilon) \cap H$ we have $p_i - p_j - 2\varepsilon < q_i - q_j < p_i - p_j + 2\varepsilon$. Due to integrality of $c$ and $\varepsilon < 1/2$, we have $q_i - q_j = p_i - q_j = c$. Hence every $H \in \mathcal{H}$ contains $\pb$.

We now construct a second rational polyhedral complex $(\Pi,w)$ as follows. Take the union of all hyperplanes in $\mathcal{H}$, and decompose this union into a polyhedral complex to obtain $\Pi$.
The cells of this complex are intersections of hyperplanes in $\mathcal{H}$. Every cell (and so every facet and $(n-2)$-dimensional cell) of $\Pi$ contains $\pb$, as $\pb$ is contained in every $H \in \mathcal{H}$. So, in particular, $F\cap P\neq\emptyset$ for every facet $F$ of $\Pi$. Moreover, for each facet $F$ of $\Pi$, either $F\cap P\subseteq F'\cap P$ for some facet $F'$ of $\Pi_\bids$, or $F\cap P$ has at most $(n-2)$-dimensional intersection with the facets of $\Pi_\bids$; this holds because the facets of $\Pi_\bids$ meet facets with distinct affine spans at every point on their boundaries. This allows us to define the weight function $w$ as follows. For any facet $F$ of $\Pi$, we let $w(F) = w_{\bids}(F')$ if $F \cap P\subseteq F' \cap P$ for some facet $F'$ of $\Pi_{\bids}$, and we let $w(F) = 0$ otherwise. Note that $(\Pi,w)$ is balanced within $P$, since $(\Pi_\bids,w_\bids)$ is balanced (see also the proof of Theorem~{\ref{thm:validity}}): dividing a facet of a balanced polyhedral complex into two equally-weighted facets, and adding zero-weighted facets, does not affect balancing. But every $(n-2)$-cell of $\Pi$, and every facet of $\Pi$ containing this $(n-2)$-cell, has non-zero intersection with $P$. So $(\Pi,w)$ is balanced.

The facets of $\Pi_{\bids}$ that intersect with $P$ have non-negative weight due to the local validity of $\bids$ at $\pb$ (by the same argument as in the proof of Theorem~{\ref{thm:validity}}). Hence, the weight function $w$ is non-negative by construction.
Now let $(\Pi^+, w^+)$ be obtained from $(\Pi, w)$ by removing all zero-weighted facets, and let $\mathcal{L}^+$ be the union of the facets of $\Pi^+$. Note that $\mathcal{L}_{\bids}\cap P = \mathcal{L}^+\cap P$; and if $F'$ is a facet of $\mathcal{L}_\bids$ and $F^+$ is a facet of $\mathcal{L}^+$ such that $\dim(F\cap F^+\cap P)=n-1$, then $w(F')=w^+(F^+)$. Fix some prices $\qb\in P$, $\qb\notin\mathcal{L}^+$, and let $\xb$ be the bundle uniquely demanded by $\bids$ at $\qb$.
By Theorem~{\ref{thm:valuation-complex}}, there exists a valuation function $u$ such that $D_u(\qb) = \{ \xb \}$, $\mathcal{L}_u = \mathcal{L^+}$ and $w_u = w^+$. Using the same argument as in the proof of Theorem~{\ref{thm:validity}}, we have that the normal vectors and weights of the facets of $\mathcal{L}_{\bids}$ fully specify how demand of $\bids$ changes in $P$. Hence, by Proposition~{\ref{prop:demand-change}}, it follows that $D_\bids(\qb')=D_u(\qb')$ for all $\qb'\in P$. Moreover, as the facets of $\mathcal{L}_u$ are normal to strong-substitutes vectors, the valuation~$u$ is strong-substitutes by Theorem \ref{thm:ss-demand-type}.
\end{proof}

\begin{lemma}
\label{lemma:correct-shift}
Suppose that $G = \begin{pmatrix}G_1\\G_2\end{pmatrix}$ is an invertible $n \times n$ matrix with strong-substitutes rows. Moreover, let $H_i$ denote the $i$-th column of the matrix $H \coloneqq \begin{pmatrix}0 \\ G_2 \end{pmatrix}$. Then, for any $i \in [n]$, we have either (i) $G^{-1}H_i=\eb^S$ for some $S\subseteq [n]$ with $i\in S$, or (ii) $G^{-1}H_i=-\eb^S$ for some $S\subseteq [n]$ with $i \not \in S$.
\end{lemma}
\begin{proof}
Suppose $G$ is a lower triangular matrix with diagonal values $g_{ii} = 1$ and $g_{ij} \in \{0, -1\}$ for all $i > j$. This is without loss of generality, as we can permute columns and multiply by a diagonal matrix~$U$ with entries $\pm 1$ on its diagonals otherwise; in this case, replace $G$ by $PUG$ (where $P$ denotes the permutation matrix), below and adjust the proof slightly. Note that the rows of $PUG$ remain strong-substitutes.

In order to prove our lemma, we associate $G$ with a directed graph $\mathcal{D}$ on vertices $[n]_0$ as follows. As $G$ is invertible and the rows are strong-substitutes, each row takes the form $\eb^j - \eb^i$ for some distinct $i, j \in [n]_0$, where we define $\eb^0 = \vec{0}$. Thus, for each row $\eb^j - \eb^i$ we add an arc from $i$ to $j$.

We establish some properties of $\mathcal{D}$. Firstly, $\mathcal{D}$ is an arborescence rooted at $0$; that is, it is a directed graph in which there is exactly one path from $0$ to any other vertex. This can be seen by induction on $k \in [n]$ if we define $G_{kk}$ as the sub-matrix of $G$ consisting of the first $k$ rows and columns and let $\mathcal{D}_{kk}$ denote its graph. It is immediate that $\mathcal{D}_{11}$ is an arborescence. Secondly, consider $\mathcal{D}_{kk}$ and assume the claim holds for $k-1$. The last row (and column) of $G_{kk}$ adds a single arc from a good $j < k$ to~$k$, so $\mathcal{D}_{kk}$ remains an arborescence.

Secondly, we can express $G$ as $G = I_n - L$, where $I_n$ is the identity matrix and $L$ is a strictly lower triangular matrix. Moreover, $L$ admits the intuitive interpretation of being the transpose of the adjacency matrix of $\mathcal{D}$. (To see this, note that $\eb^j - \eb^i$ is the $j$-th row in $G$, so $L_{ji} = 1$ if, and only if, there exists an arc from $i$ to $j$.) In the following, let $L^k$ be the $k$-th power of $L$ (with $L^0 = I_n$). We recall the well-known result that $L^k_{ji}$ denotes the number of directed walks of length~$k$ from vertex $i$ to $j$. In particular, as $\mathcal{D}$ is a tree, all walks are paths. This implies that, for all $i,j \in [n]$, we have $L^{k}_{ij} \in \{0,1\}$ and $L^k_{ij} = 1$ for at most one value of $k$. Moreover, $L^n = \bm{0}$, as no path can have length $n$. It follows that the inverse of $G$ is
\begin{equation}
\label{eq:G-inverse-L}
    G^{-1} = \sum_{k=0}^{n-1}L^k,
\end{equation}
and $G^{-1}$ has only entries $\{0,1\}$.

Fix $i \in [n]$ and let $J_i$ be any subset of the in-neighbours of $i$. (The \textit{in-neighbours} of vertex $i$ in $\mathcal{D}$ are the vertices for which there exists an arc from $i$ to $j$.) We claim that for $\vec{h} = -\sum_{j \in J_i}\eb^j$, there exists a subset $S$ such that $i \in S$ and $G^{-1} \vec{h} = -\eb^S$, and for $\vec{h} = \eb^i - \sum_{j \in J_i} \eb^j$, there exists a subset $S$ such that $i \in S$ and $G^1 \vec{h} = \eb^S$. This claim immediately implies the main statement of Lemma~{\ref{lemma:correct-shift}}, as $H_i$ takes the form $H_i = -\sum_{j \in J_i}\eb^j$ or $H_i = \eb^i - \sum_{j \in J_i} \eb^j$ for some $J_i$.

It remains to prove our claim. Note first that $J_i \subseteq \{i+1, \ldots, n\}$. For any $j \in [n]_0$, let $S(j)$ denote the vertices reachable from vertex $j$ in~$\mathcal{D}$. As $L^{k}_{jl} = 1$ if there exists a path of length $k$ from vertex $l$ to $j$, and $L^k_{jl} = 0$ otherwise, \eqref{eq:G-inverse-L} implies that $G^{-1} \eb^i = \eb^{S(j)}$ is the indicator vector denoting, for each $l \in [n]$, whether~$l$ is reachable from $j$ (along a path of arbitrary length).
Next, observe that $J_i$ contains endpoints of arcs from $i$, so the arborescence property of~$\mathcal{D}$ implies that the sets $S(j)$ for $j \in J_i$ are pairwise disjoint. Moreover, we see that $S(j) \subseteq S(i)$, as there is an arc from $i$ to every $j \in J_i$. This allows us to say the following.
If $\vec{h} = \sum_{j \in J_i} \eb^j$, then we have $G^{-1} \vec{h} = -\sum_{j \in J_i} \eb^{S(j)} = \eb^S$, where we define $S \coloneqq \bigcup_{j \in J_i} S(j)$, and we note that $i \not \in S$.
On the other hand, if $\vec{h} = \eb^i - \sum_{j \in J_i} \eb^j$, then we have $G^{-1} \vec{h} = \eb^{S(i)} - \sum_{j \in J_i} \eb^{S(j)} = \eb^{S}$, where we define $S \coloneqq S(i) \setminus \bigcup_{j \in J_i} S(j)$, and we note that $i \in S$. The last equality holds as $S(j) \subseteq S(i)$ for all $j \in J_i$.
\end{proof}

\begin{observation}
\label{obs:Hstep}
Suppose that $G$ is an $n'\times n$ matrix with strong-substitutes rows and that $S\subseteq [n]$. Then $G\eb^S\in\{-1,0,1\}^n$ and, in particular, $G\eb^S \geq -\bm{1}$.
\end{observation}

\begin{lemma}
\label{lemma:shift-subs:old}
Suppose we have two strong-substitutes integer-valued valuations, $u^1$ and~$u^2$, and let $\uagg$ denote the aggregate valuation. Fix $i \in [n]$ and $\varepsilon \in (0,1)$, and write $\uep$ for the valuation given by $\uep(\xb) = u^2(\xb) + \varepsilon x_i$. For any $\rb^1, \rb^2 \in \Z^n$, suppose there exists a price~$\pb^*$ such that $\rb^1 \in  D_{u^1}(\pb^*)$, and $\rb^2\in D_{\uep}(\pb^*)$. Then there exists a price~$\pb^0$ such that $\rb^1 \in D_{u^1}(\pb^0)$, and $\rb^2\in D_{u^2}(\pb^0)$. 
Moreover, for any price $\pb$ such that $\rb \coloneqq \rb^1 + \rb^2 \in D_{\uagg}(\pb)$, we have $\rb^1 \in D_{u^1}(\pb)$, and $\rb^2 \in D_{u^2}(\pb)$.
\end{lemma}
\begin{proof}
Fix bundles $\rb^1$ and $\rb^2$ that are, by assumption, demanded at $\pb^*$ under $u^1$ and~$\uep$. For $j \in \{1, 2\}$, we define $P^j \coloneqq \{\pb \geq 0 \mid \rb^j \in D_{u^j}(\pb)\}$, the set of prices at which $\rb^j$ is demanded under $u^j$. It follows from the definition of $\uep$ that the set $P^{2[\varepsilon, i]}$ of prices at which $\rb^2$ is demanded by $\pb$ under $\uep$ is equal to $P^2 + \varepsilon \eb^i$. Moreover, by assumption we have $\pb^* \in P^1 \cap P^{2[\varepsilon,i]}$, so this intersection is non-empty.

As $P^1$ and $P^2$ are polyhedra in $\R^n$ (cf.~\citep[Proposition 2.7]{BK}), we can describe each region $P^j$ by the system of linear inequalities
\begin{equation}\label{eqn:GpA}
	H^j \pb \geq A^j,
\end{equation}
where $H^j$ is an $a_j \times n$ matrix for some $a_j \in \Z_+$ and $A^j \in \R^{a_j}$.
Moreover, Proposition~3.10 in \citep{BK} tells us that we can choose $H^j$ such that each row of $H^j$ is a strong-substitutes vector. Additionally, as valuation $u^j$ is integer-valued, we have $A^j \in \Z^{a_j}$. It follows that $P^{2[\varepsilon,i]} = P^2 + \varepsilon \eb^i$ is defined by
\begin{equation}\label{eqn:HpB-shift}
H^2(\pb - \varepsilon \eb^i) \geq A^2 \quad \Leftrightarrow \quad H^2 \pb \geq A^2 + \varepsilon H^2_i,
\end{equation}
where $H^2_i$ is the $i$-th column of $H^2$.
This allows us to express the intersection $P^1 \cap P^{2[\varepsilon,i]}$ as
\begin{equation}\label{stacked}
H\pb\geq A+\varepsilon\hat{H}_i,
\end{equation}
where we define $H = \begin{pmatrix}H^1 \\ H^2 \end{pmatrix}$, $A = \begin{pmatrix} A^1 \\ A^2 \end{pmatrix}$, and $\hat{H} = \begin{pmatrix} 0 \\ H^2 \end{pmatrix}$.

Now we argue similarly to Theorem 19.1 in \citep{Schrijver-book}. We know that $P^1 \cap P^{2[\varepsilon,i]} \neq \emptyset$. Let $F = \{ \pb \geq 0 \mid H'\pb = A' + \varepsilon \hat{H}_i' \}$ be a minimal face of this intersection, where $H'\pb\geq A'+\varepsilon \hat{H}_i'$ is a subsystem of \eqref{stacked} with linearly independent rows. Without loss of generality, we can write $H' = \begin{pmatrix}G & \tilde{G}\end{pmatrix}$, where~$G$ is invertible (permuting the columns of $H'$ if necessary, which is equivalent to relabelling goods).
Moreover, we can choose this permutation such that the $i$-th coordinate remains within the first square matrix -- that is, if $G$ is $n'\times n'$ then $i \leq n'$. In addition, $G$ inherits from $H^1,H^2$ the property that its rows are strong-substitutes vectors. Now define the two price vectors
\[
\pb^\varepsilon\coloneqq
\begin{pmatrix}
	G^{-1}\left(A'+\varepsilon \hat{H}'_i\right) \\
	0
\end{pmatrix}
\text{ and }
\pb^0\coloneqq
\begin{pmatrix}
	G^{-1}A' \\
	0
\end{pmatrix}.
\]
Note that $\pb^0$ is an integral vector, since $A$ is integral and $G$ is unimodular. Moreover, by Lemma~\ref{lemma:correct-shift}, we know that either (i) $G^{-1}\hat{H}'_i=\eb^S$ for some $S\subseteq[n']$ with $i\in S$, or (ii) $G^{-1} \hat{H}'_i = -\eb^S$ for some $S\subseteq [n']$ with $i \not \in S$ (here $n'$ is the dimension of $G$). In either case, $\pb^\varepsilon - \varepsilon \eb^i \in \{ \pb^0 \pm \varepsilon \eb^{S'}\}$ for some $S'\subseteq[n]$.
We also see immediately that $\pb^\varepsilon$ is a vector in $F$, and hence lies in $P^1\cap P^{2[\varepsilon,i]}$, so $\pb^\varepsilon$ satisfies \eqref{stacked}. We now use this knowledge to show that $\pb^0 \in P^1 \cap P^2$.

As $\pb^\varepsilon \in P^1$, it satisfies \eqref{eqn:GpA} for $j=1$. Note also that the rows of $H^1$ are strong-substitutes and $(\pb^0 - \pb^\varepsilon) \in \{\pm \varepsilon \eb^S\}$ for some $S \subseteq[n]$, so \eqref{eqn:GpA} and Observation~\ref{obs:Hstep} imply
\[
	H^1\pb^0=H^1\pb^\varepsilon+H^1(\pb^0-\pb^\varepsilon)\geq
A^1-\varepsilon \mathbf{1}.
\]
The integrality of $H^1$, $\pb^0$ and $A^1$, together with $\varepsilon \in (0,1)$, imply $H^1\pb^0 \geq A^1$. Similarly, we argue that $H^2 \pb^0 \geq A^2$. Indeed, $\pb^varepsilon$ satisfies \eqref{eqn:HpB-shift} and we have $(\pb^0 - \pb^\varepsilon + \varepsilon\eb^i) \in \pm \varepsilon \eb^{S}$ for some $S \subseteq [n]$, so \eqref{eqn:HpB-shift} and Observation~\ref{obs:Hstep} imply
\[
H^2 \pb^0 = H^2 (\pb^\varepsilon - \varepsilon \eb^i)
+ H^2(\pb^0 - \pb^\varepsilon + \varepsilon\eb^i)\geq A^2
- \varepsilon \mathbf{1} \geq A^2.
\]
The last inequality again follows due to the integrality of $H^2, \pb^0$ and $A^2$. Thus we have established that $\pb^0 \in P^1 \cap P^2$, which is equivalent to $\rb^j \in D_{u^j}(\pb^0)$ for $j=1,2$, our first claim.

Now let $\rb := \rb^1 + \rb^2$ and let $\pb$ be a price at which $\rb$ is aggregately demanded, so $\rb \in D_{\uagg}(\pb)$. Suppose $\rb = \vec{s}^1 + \vec{s}^2$ is a potentially different equilibrium allocation of bundles across the two agents, so that $\vec{s}^j \in D_{u^j}$ holds. By Proposition 2.F.1 in \citep{MWG95}, we know that $(\bs^j-\rb^j)\cdot(\pb-\pb^0)\leq 0$ for $j=1,2$, with equality holding only if the agent is indifferent between both bundles at both prices. Hence, adding across $j=1,2$, we obtain $((\bs^1+\bs^2)-\rb)\cdot(\pb-\pb^0)\leq 0$, with equality holding only if both agents are indifferent between their two bundles at both prices. But as $\bs^1+\bs^2=\rb$, equality holds, so we arrive at $\rb^j\in D_{u^j}(\pb)$ for $i=1,2$.
\end{proof}

\begin{proof}[Proof of Proposition~\ref{prop:shift-valid}]
By Lemma~{\ref{lemm:local-valuation}}, every bid list that is $\eta$-valid at $\pb$ with $\eta<\frac12$ can be associated with a strong-substitutes integral-valued valuation such that their indirect utility function and demand correspondence coincide in $B(\pb,\eta)$. The bid lists $\bids^j$ and $\bids \setminus \bids^j$ are $1/2$-valid at $\pb$ by Lemma~{\ref{obs:validity-margin}} and the fact that integral bids have a surplus of at least 1 at integral prices. Moreover, the shifted bid list of bidder $j$ and the bid list $\bids'$ are $(1/2-|\varepsilon|)$-valid at $\pb$ (this follows from the definition of local validity). Recalling that $|\epsilon|<\frac14$, we fix $\eta$ with $2|\epsilon|<\eta<\frac12$ and let $u^2$ denote a strong-substitutes integral-valued valuation of bidder $j$ corresponding  in $B(\pb,\eta)$ to $\bids^j$; and let $u^1$ denote a (strong-substitutes integral-valued) valuation corresponding in $B(\pb,\eta)$ to the bids $\bids \setminus \bids^j$ of the other bidders. As above, $\uagg$ denotes the aggregate of these valuations, and it corresponds in $B(\pb,\eta)$ to $\bids$. By assumption, we have $\rb \in D_{\uagg}(\pb)$.

As in Lemma~{\ref{lemma:shift-subs:old}}, we write $\uep$ for the valuation given by $\uep(\xb) = u^2(\xb) + \varepsilon x_i$. Note that the polyhedral complex of $\uep$ is the polyhedral complex of $u^2$ shifted by $\varepsilon<\eta/2$ in direction $\eb^i$. Hence, in $B(\pb;\frac{\eta}2)$ the demand $D_{u^{2[\varepsilon,i]}}$ coincides with the demand of bidder $j$'s bids after shifting, and $D_{u^{1,2[\varepsilon,i]}}$ coincides with $D_{\bids'}$ in $B(\pb;\frac{\eta}2)$.
Since both $u^1$ and $\uep$ are strong-substitutes valuations, there exists a price $\pb^* \in \R^n$ such that $\rb \in D_{u^{1,2[\varepsilon,i]}}(\pb^*)$, and thus there exist $\rb^1,\rb^2\in\Z^n$ with $\rb^1+\rb^2=\rb$ such that $\rb^2\in D_{u^1}(\pb^*)$ and $\rb^2\in D_{\uep}(\pb^*)$. By Lemma~{\ref{lemma:shift-subs:old}}, it follows that $\rb^j \in D_{u^j}(\pb)$ for $j=1,2$.

Like in the proof of Lemma~{\ref{lemma:shift-subs:old}}, we let $P^j := \{ \pb' \geq 0 \mid \rb^j \in D_{u^j}(\pb') \}$ for $j \in \{1,2\}$ and observe that $P^{2[\varepsilon,i]} := \{ \pb' \geq 0 \mid \rb^2 \in D_{\uep}(\pb') \}  = P^2+\varepsilon\eb^i$. We know that $\pb \in P^1 \cap P^2$ (so this intersection is non-empty) and $\pb^* \in P^1 \cap P^{2[\varepsilon,i]}$ (so this intersection is also non-empty). We seek $\pb^\varepsilon\in P^1\cap P^{2[\varepsilon,i]}$ with $\pb^\varepsilon \in \pb \pm \varepsilon \eb^S$ for some $S \subseteq [n]$; this is sufficient to show $\rb \in D_{u^{1,2[\varepsilon,i]}}(\pb^\varepsilon)$.  Since $|\varepsilon|<\frac{\eta}2$, we know $\pb^\varepsilon\in B(\pb,\frac{\eta}2)$, the region in which demand from the valuations coincides with demand from the bids. Our first claim, that $\rb\in D_{\bids'}(\pb^\varepsilon)$ follows.

Analogous to line \eqref{eqn:GpA} above, we express the polyhedron $P^j$ as $H^j \pb' \geq A^j$, for $j \in \{1,2\}$, where the rows of $H^j$ are strong-substitutes vectors. Therefore $P^1 \cap P^{{2[\varepsilon, i]}} = P^1 \cap (P^2 + \varepsilon \eb^i)$ is defined by the set of inequalities
\begin{equation}\label{goal}
H^1 \pb' \geq A^1 \text{ and } H^2(\pb' - \varepsilon \eb^i) \geq A^2.
\end{equation}
Now, for $j=1,2$, let
\begin{equation}
\label{subsystem}
(H^j)'\pb'\geq(A^j)'
\end{equation}
be the subsystem of $H^j\pb'\geq A^j$ consisting of all rows which hold with equality at~$\pb$.  (If both of these subsystems are empty then subsequent arguments will simply set $\pb^\varepsilon=\pb$).  Write $H= \begin{pmatrix}(H^1)'\\(H^2)'\end{pmatrix}$, $A=\begin{pmatrix}(A^1)'\\(A^2)'\end{pmatrix}$ and $\hat{H}=\begin{pmatrix}0\\(H^2)'\end{pmatrix}$.  Thus when we stack the corresponding rows of \eqref{goal}, we obtain the new system
\begin{equation}\label{first shift}
H\pb'\geq A+\varepsilon\hat{H}_i.
\end{equation}
This defines a superset of $P^1\cap(P^2+\varepsilon)$, since we have removed some rows from $H^j\pb'\geq A^j$ for $j=1,2$.  But we have seen that $P^1\cap(P^2+\varepsilon)$ contains $\pb^*$, so \eqref{first shift} defines a non-empty polytope. So let $F=\{\pb' \geq 0 \mid H'\pb'= A' + \varepsilon \hat{H}'_i \}$ be a minimal face of this polytope, where
\begin{equation}\label{subsubsystem}
H'\pb'\geq A' +\varepsilon\hat{H}'_i
\end{equation}
is a subsystem of \eqref{first shift} with linearly independent rows.

Without loss of generality, we can write $H' = \begin{pmatrix}G & \tilde{G}\end{pmatrix}$, where~$G$ is invertible (permuting the columns of $H'$ if necessary, which is equivalent to relabelling goods).
Moreover we can choose this permutation such that the $i$-th coordinate remains within the first square matrix -- that is, if $G$ is $n'\times n'$ then assume when we re-order that $i\leq n'$.  Correspondingly write $\pb' = \begin{pmatrix} \pb'_1 \\ \pb'_2 \end{pmatrix}$. In summary, our system \eqref{first shift} is now written
\begin{equation}\label{broken system}
\begin{pmatrix}
    G & \tilde{G}
\end{pmatrix}
\begin{pmatrix}
    \pb'_1\\
    \pb'_2
\end{pmatrix}
	\geq A' +\varepsilon\hat{H}'_i.
\end{equation}
By definition of the rows identified at \eqref{subsystem}, we have
$
	\begin{pmatrix}G & \tilde{G}\end{pmatrix}
	\begin{pmatrix}\pb_1 \\ \pb_2\end{pmatrix}
	= A'
$
and so
\begin{equation}
\label{eq:pb1}
    \pb_1=G^{-1}A'-G^{-1}\tilde{G}\pb_2.
\end{equation}
We now define $\pb^\varepsilon$ as follows
\begin{equation}
\pb^\varepsilon\coloneqq
	\begin{pmatrix}
		G^{-1}(A' +\varepsilon\hat{H}'_i) - G^{-1}\tilde{G}\pb_2
			\\
		\pb_2.
	\end{pmatrix}
\end{equation}
in order to obtain
\begin{equation}\label{rows that work}
H'\pb^\varepsilon=\begin{pmatrix}G & \tilde{G}\end{pmatrix}\pb^\varepsilon =
		 A' +\varepsilon\hat{H}'_i.
\end{equation}
We wish to show that \eqref{goal} holds for $\pb'=\pb^\varepsilon$. From \eqref{eq:pb1} we observe that $\pb^\varepsilon-\pb=\begin{pmatrix}\varepsilon G^{-1}\hat{H}'_i\\0\end{pmatrix}$.  So, by Lemma~\ref{lemma:correct-shift}, either $\pb^\varepsilon=\pb+\varepsilon\eb^{S}$ for some $S\subseteq[n]$ with $i\in S$, or $\pb^\varepsilon=\pb-\varepsilon\eb^S$ where $S\subseteq[n]$ with $i\notin S$. In either case, $\pb^\varepsilon - \varepsilon \eb^i \in \{ \pb \pm \varepsilon \eb^{S'} \}$ for some $S'\subseteq[n]$.

We also immediately observe that $\pb^\varepsilon$ satisfies \eqref{subsubsystem} with equality, and so $\pb^\varepsilon\in F$. Thus, by definition of $F$, $\pb^\varepsilon$ satisfies \eqref{first shift}. It follows by definition of \eqref{subsystem} that, for every row in the original systems $H^j\pb'\geq A^j$ that holds with equality at $\pb$, the corresponding shifted equation in \eqref{goal} holds at $\pb^\varepsilon$. It remains to show that this is also true for rows that are slack at $\pb$. Let $(H^j)''\pb'\geq (A^j)''$ be the subsystems of such rows, for $j=1,2$, and observe that by integrality it follows that
\begin{equation}\label{slack}
    (H^j)''\pb\geq (A^j)''+\mathbf{1}.
\end{equation}
Now, applying Observation~\ref{obs:Hstep} to $\pb^\varepsilon - \pb$ and \eqref{slack} for $j=1$, we obtain
\[
(H^1)''\pb^\varepsilon=(H^1)''\pb+(H^1)''(\pb^\varepsilon-\pb)\geq
(A^1)''+(1-\varepsilon)\mathbf{1}\geq (A^1)''.
\]
And, applying Observation~\ref{obs:Hstep} to $\pb^\varepsilon-\pb-\varepsilon\eb^i$, and \eqref{slack} for $j=2$,
\[
(H^2)''(\pb^\varepsilon-\varepsilon\eb^i)
	=(H^2)''\pb+(H^2)''(\pb^\varepsilon-\pb-\varepsilon\eb^i)
	\geq (A^2)''+(1-\varepsilon)\mathbf{1}\geq (A^2)''.
\]
So, for every row in the original systems $H^j\pb'\geq A^j$ that is slack at $\pb$, the corresponding shifted equation in \eqref{goal} holds for $\pb^\varepsilon$.  This completes the proof that $\pb^\varepsilon\in P^1\cap (P^2+\{\varepsilon\eb^i\})$. It follows that $\rb \in D_{u^{1,2[\varepsilon,i]}}(\pb^\varepsilon)$, which concludes the proof of our first claim.

To see that we can compute $\pb^\varepsilon$ in polynomial time, recall that the prices at which $\rb$ is demanded are equivalent to the prices that minimise the Lyapunov function $g(\pb) = f_{\bids'}(\pb) + \rb \cdot \pb$.
It is known (see {\citep[Theorem 7.3]{ST15}}) that $f_{\bids'}$ (and thus $g$) satisfies the \textit{generalised submodularity} property
\[
f_{\bids'}(\pb) + f_{\bids'}(\qb) \geq f_{\bids'}(\pb \lor \qb) + f_{\bids'}(\pb \land \qb) \quad (\forall \pb, \qb \in \R^n),
\]
where $\lor$ and $\land$ denote the component-wise maximum and minimum, respectively. Hence
\begin{align*}
	h^+(S) \coloneqq g(\pb + \varepsilon \eb^S) - g(\pb), \\
	h^-(S) \coloneqq g(\pb + \varepsilon \eb^S) - g(\pb),
\end{align*}
are submodular set functions. Hence, in order to determine a price vector $\pb^\varepsilon \in \left \{ \pb + \varepsilon \eb^{S} \mid S \subseteq [n] \right \}$ at which $\rb$ is demanded, we find minimisers $S^+$ and $S^-$ of $h^+$ and $h^-$ using SFM, then let
\[
	\pb^\varepsilon = \argmin \left \{ g \left (\pb + \varepsilon \eb^{S^+}
\right ), g \left (\pb - \varepsilon \eb^{S^-} \right ) \right \}.
\]
Note that we do not require \emph{minimal} submodular minimisers to find $\pb^\varepsilon$, and so this step takes $2T(n)$ time, where $T(n)$ is the time is takes to perform submodular minimisation on $h^\pm$. We also note that finding $\pb^\varepsilon$ is analogous to finding a steepest descent direction in Appendix~\ref{appendix:price-finding}.
\end{proof}

\subsection{Proof of Lemma~\ref{lemma:procedure3-reduce}}
\label{appendix:procedure3-reduce}
Here we prove that applying \textsc{ShiftProjectUnshift} strictly reduces the
number of edges in the marginal bids graph $G_\mathcal{A}$.
\begin{proof}[Proof of Lemma~\ref{lemma:procedure3-reduce}]
	First we show that every edge in $G_{\mathcal{A}'}$ is present in
$G_\mathcal{A}$. To see this, fix some bid $\bid$ and note by
Observation~\ref{obs:no-new-marginals} and Lemma~\ref{lemma:project-invariants}
that Steps 1, 2 and 3 do not make $\bid$ marginal on any new goods.

Secondly we prove that for any multi-bidder cycle $C$ with cycle-link
good $i^*$ and incident label $j^*$, \textsc{ShiftProjectUnshift} removes at
least one of
the edges of $C$ from the marginal bids graph. Re-label the goods going around
cycle $C$ as $1,\ldots,k$, so that $1=i^*$ and so that bidder $j^*$ placed the
bid marginal between goods $1$ and $2$. Also, for convenience, index the
marginal bid that is marginal between goods $i$ and $i+1$ as `$i$', and index
the marginal bid between goods $k$ and $1$ as `$k$'. Thus
$\bid^1\in\bids^{j^*}$ and $\bid^k\notin\bids^{j^*}$. (In general the
differently labelled bids need not be from different bidders). The existence of
a marginal bid between goods $i$ and $i+1$ means that we have an equality

\begin{equation}\label{eqn:1}
b^i_i-p_i=b^i_{i+1}-p_{i+1}
\end{equation}
for $i=1,\ldots,k-1$, and also
\begin{equation}\label{eqn:2}
b^k_k-p_k=b^k_1-p_1
\end{equation}
But if we take the sum of the first $k-1$ equations, and cancel, we find
\[
\sum_{i=1}^{k-1} b^i_i - \sum_{i=1}^{k-1} b^i_{i+1} =
\sum_{i=1}^{k-1}(p_i-p_{i+1})=p_1-p_k
\]
So it must hold that
\begin{equation}\label{eqn:cycle}
\sum_{i=1}^{k-1} (b^i_i - b^i_{i+1}) = b^k_1-b^k_k.
\end{equation}

For any bid $\bid$ in the bid lists $(\bids^j)_{j \in J}$, fixed
$j^* \in J$ and $i^* \in [n]$, let
\begin{align*}
\tilde{\bid}\coloneqq\left\{
\begin{array}{ll}
\bid+\frac{1}{10}\eb^{i^*}	& \bid\in\bids^{j^*}\\
\bid	&\text{otherwise}
\end{array}
\right.
\end{align*}

Suppose we replace each bid $\bid$ by $\tilde{\bid}$ as above and recompute the
prices. Since prices shift by at most $\frac{1}{10}$, a bid that is {\em not}
marginal on a pair of goods cannot {\em become} marginal on them. If we assume
for a contradiction that all edges in $C$ are all still present in $G'$ then we
can write down similar expressions to \eqref{eqn:1} and \eqref{eqn:2} (w.r.t.~new prices), then eliminate those prices obtaining a version of
\eqref{eqn:cycle}, namely
$\sum_{i=1}^{k-1}
\left (\tilde{b}^i_i - \tilde{b}^i_{i+1} \right) =
\tilde{b}^k_1-\tilde{b}^k_k$.
But by definition of $\tilde{\bid}$, we know that
\begin{align*}
\sum_{i=1}^{k-1} (\tilde{b}^i_i - \tilde{b}^i_{i+1}) &=
\sum_{i=1}^{k-1} (b^i_i - b^i_{i+1}) +\frac{1}{10},\\
\text{ and } \tilde{b}^k_1-\tilde{b}^i_k &= b^k_1-b^k_k.
\end{align*}
This is inconsistent with \eqref{eqn:cycle}, so we have the required
contradiction.
\end{proof}

\section{Finding market-clearing prices}
\phantomsection
\label{appendix:price-finding}
\label{apx:price-finding}
We discuss two iterative steepest descent algorithms, \textsc{MinUp} and \textsc{LongStepMinUp}, from {\citep{Shioura17}} that determine a minimal minimiser of an $L^\natural$-convex function. Both algorithms use the SFM subroutine described in Section~\ref{sec:prelims} to find the component-wise minimal discrete steepest descent direction. For the second algorithm, we present two methods of computing step lengths and show that both methods yield a polynomial running time in our bidding-language setting.

In order to apply the steepest descent method to our price-finding problem, we define a Lyapunov function $g$ and note in Proposition~\ref{prop:L-natural-convex} that its restriction to $\Z^n_+$ is $L^\natural$-convex. Moreover, Lemma~\ref{lemma:lyapunov-equivalence} states that the lowest market-clearing price is integral and finding it reduces to determining the minimal minimiser of $g$. This approach generalises an algorithm used by Ausubel's ascending auction design~\citep{Aus06} and \citet{GS00} for the task of finding equilibrium prices in single-unit markets.

Let $\bids$ be a valid bid list. By Theorem~\ref{thm:validity}, the function $f_{\bids}$ defined in \eqref{eq:bid-utility-function} is the indirect utility function $f_u$ of some strong-substitutes valuation $u$. For any strong-substitutes valuation $u$, the \emph{Lyapunov function} with regard to indirect utility function $f_u$ and target bundle $\tb$ is defined as $g_{\tb}(\pb) \coloneqq f_u(\pb) + \tb \cdot \pb$. We suppress the subscript $\tb$ if it is clear from context. We can use our knowledge of the bids in $\bids$ and \eqref{eq:bid-utility-function} to express $g_{\tb}$ for the list $\bids$ as
\begin{equation}\label{eq:lyapunov}
	g_{\tb}(\pb) \coloneqq f_{\bids}(\pb) + \tb \cdot \pb = \sum_{\bid \in \bids} w(\bid) \max_{i \in [n]_0} (b_i -
p_i) + \tb \cdot \pb.
\end{equation}
From~\eqref{eq:lyapunov} it is clear that we can evaluate $g$ at any price $\pb$ in time $O(n|\bids|)$.

A function $f:\Z^n \to \mathbb{R}$ is $L^\natural$-convex if it satisfies the \emph{translation submodularity} property,
\begin{equation}\label{eq:translation-submodularity}
	f(\pb) + f(\qb) \geq f( (\pb - \alpha \mathbf{1}) \vee \qb) + f(\pb \wedge (\qb + \alpha \mathbf{1})) \quad (\forall \pb, \qb \in \Z^n_+, \forall \alpha \in \Z_+).
\end{equation}
Here, $\vee$ and $\wedge$ denote the component-wise maximum and minimum, respectively.

The following proposition and lemma demonstrate that we can use algorithms for minimising $L^\natural$-convex functions to find the minimal price $\pb^*$ at which $\tb$ is demanded.

\begin{proposition}[\citep{MSY2013, MSY2016}]\label{prop:L-natural-convex}
	The Lyapunov function $g_{\tb}$ restricted to $\Z^n_+$ is $L^\natural$-convex.
\end{proposition}
\begin{proof}
It is known that an indirect utility function $f_u$ restricted to $\Z^n$ is $L^\natural$-convex if and only if the valuation function $u$ is strong-substitutes (cf.~\citep[Theorem 7.1]{ST15}). Secondly, it is easy to verify that adding a linear term to an $L^\natural$-convex function preserves $L^\natural$-convexity. As the bid list $\bids$ is valid, we have $f_{\bids} = f_u$ for some strong-substitutes valuation $u$ and it follows that the function $f_{\bids}(\pb) + \tb \cdot \pb$ is $L^\natural$-convex.
\end{proof}

\begin{lemma}\label{lemma:lyapunov-equivalence}
	The Lyapunov function $g_{\tb}$ with regard to a valid (integral) bid list
and any target bundle $\tb$ is convex, and $\pb$ is a minimiser of $g$ if and
only if it is a market-clearing price for target bundle $\tb$. Moreover, the
minimal minimiser of $g$ is integral.
\end{lemma}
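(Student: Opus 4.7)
The plan is to handle the three claims of the lemma in sequence. Convexity and the market-clearing characterisation are essentially standard facts about indirect utility functions of strong substitutes valuations, obtained by combining Theorem~\ref{thm:validity} with convex duality. The hard part is integrality of the minimal minimiser, which requires reconciling the continuous optimisation problem on $\R^n$ with the discrete $L^\natural$-convex structure of Proposition~\ref{prop:L-natural-convex}.

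First I would dispatch convexity. Since $\bids$ is valid, Theorem~\ref{thm:validity}\eqref{eq:valid-SS} identifies $f_\bids$ with the indirect utility $f_u$ of some strong substitutes valuation $u$, and we have already noted after Definition~\ref{def:iuf} that $f_u$ is convex (in fact piecewise-linear and continuous). Adding the linear term $\tb \cdot \pb$ preserves convexity, so $g_\tb = f_\bids + \tb \cdot \pb$ is convex on $\R^n$.

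Next I would establish the minimiser characterisation by computing subdifferentials. Since $f_u(\pb) = \sup_\xb \bigl(u(\xb) - \pb \cdot \xb\bigr)$ is a supremum of affine functions of $\pb$ with slopes $-\xb$, the standard conjugate-function calculation gives $\partial f_u(\pb) = -\mathrm{conv}(D_u(\pb))$. Hence $\pb$ minimises $g_\tb$ iff $0 \in \partial g_\tb(\pb) = -\mathrm{conv}(D_u(\pb)) + \tb$, i.e.\ $\tb \in \mathrm{conv}(D_u(\pb))$. The demand set of an SS valuation is $M^\natural$-convex, which means $\mathrm{conv}(D_u(\pb)) \cap \Z^n = D_u(\pb)$; since $\tb$ is integer, this is equivalent to $\tb \in D_u(\pb)$, which is exactly the market-clearing condition.

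The main obstacle is the integrality of the minimal minimiser, where we must bridge the real and integer optima. I plan to use two observations in tandem. First, because all bid vectors and $\tb$ are integer, $g_\tb$ is piecewise-linear on a polyhedral complex with integer vertices, so $\argmin g_\tb$ is a polyhedron whose vertices lie in $\Z^n$, and every real minimiser is a convex combination of these integer vertices. Second, by Proposition~\ref{prop:L-natural-convex}, $g_\tb$ restricted to $\Z^n$ is $L^\natural$-convex, and the set of integer minimisers of an $L^\natural$-convex function is known to be closed under the component-wise minimum $\wedge$, yielding a unique component-wise minimal integer minimiser $\pb^*$. Combining the two, every real minimiser is a convex combination of integer minimisers and therefore lies component-wise above $\pb^*$, so $\pb^*$ is in fact the component-wise minimum over \emph{all} (real) minimisers, completing the proof.
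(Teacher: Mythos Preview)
Your argument is correct, but it diverges from the paper's proof in two of the three parts, and there is one soft spot in your integrality argument that you should tighten.

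For the minimiser characterisation, the paper argues directly from the definition of indirect utility: if $\tb\in D_u(\pb)$ then $f_u(\pb)=u(\tb)-\pb\cdot\tb$, so $g_{\tb}(\pb)=u(\tb)$; if $\tb\notin D_u(\pb)$ then for any $\xb\in D_u(\pb)$ we have $u(\xb)-\pb\cdot\xb>u(\tb)-\pb\cdot\tb$, whence $g_{\tb}(\pb)>u(\tb)$. Your route via $\partial f_u(\pb)=-\mathrm{conv}(D_u(\pb))$ together with the fact that $D_u(\pb)=\mathrm{conv}(D_u(\pb))\cap\Z^n$ for $M^\natural$-concave $u$ is equally valid and is the standard convex-analytic picture; it just imports more machinery than needed here.

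For integrality the difference is more substantial. The paper does not touch the polyhedral complex or $L^\natural$-convexity at all: it observes, bid by bid, that if $\bid$ demands good $i$ at $\pb$ then it still demands $i$ at $\lfloor\pb\rfloor$ (since $b_j-\lfloor p_j\rfloor<b_j-p_j+1\le b_i-p_i+1\le b_i-\lfloor p_i\rfloor+1$ and both sides are integers). Hence the set of market-clearing prices is closed under taking the componentwise floor, which forces the componentwise minimum to be integral. Your argument instead uses (i) that the domains of linearity of $g_{\tb}$ are cut out by hyperplanes $p_i-p_j=c$ and $p_i=c$ with integer $c$, whose associated constraint matrix is totally unimodular, so the vertices of $\argmin g_{\tb}$ are integral; and (ii) that the integer minimisers form a $\wedge$-closed set by Proposition~\ref{prop:L-natural-convex}. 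Both observations are fine, but your bridging step ``every real minimiser is a convex combination of these integer vertices'' assumes $\argmin g_{\tb}$ is bounded, which you have not argued. In this paper's setup that is true (prices lie in $[0,M]^n$ once the auctioneer's reserve bids are in place), so the gap is easily closed; alternatively, note that a componentwise minimal element of a polyhedron is automatically an extreme point and hence, by (i), integral. Either patch completes your proof. The paper's floor argument buys you the same conclusion with no appeal to total unimodularity or $L^\natural$-convexity, at the cost of being specific to the bid representation.
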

\begin{proof}
	The first statement is immediate from the fact that $f_\bids$ is convex
and $\tb \cdot \pb$ is a linear term. To see the second statement, note that for any
market-clearing price $\pb$ of target bundle~$\tb$, we have $g(\pb) = u(\tb)$,
whereas for any price $\pb$ at which $\tb$ is not demanded, we have
$g(\pb) = \max_{\xb \in D(\pb)} (u(\pb) - \xb \cdot \pb) + \tb \cdot \pb > u(\tb)$. Here
$u$ is the valuation function as defined in Section~\ref{sec:prelims} and we
use \eqref{eq:iuf}.
Integrality of the minimal minimiser of $g$ was shown in \citep[Corollary~4.4]{MSY2016}. Alternatively, we can see that this follows from the fact that if $\tb$ is demanded at $\pb$, then $\tb$ is also demanded at $\lfloor \pb \rfloor$. To see this, fix a bid $\bid$ and note that if $\bid$ demands good $i$ at~$\pb$, it still demands $i$ at $\lfloor \pb \rfloor$.
Indeed, as $\bid$ demands $i$ at $\pb$, we have $b_j - p_j \leq b_i - p_i$ for
all goods~$j$, which implies
	\[
		b_j - \lfloor p_j \rfloor < b_j - p_j + 1 \leq b_i - p_i + 1
\leq b_i - \lfloor p_i \rfloor + 1.
	\]
	Due to the integrality of the bids and prices in $\lfloor p \rfloor$,
this implies $b_j - \lfloor p_j \rfloor \leq b_i - \lfloor p_i \rfloor$.
\end{proof}

Let $\pb$ be a point that is dominated by some minimiser of $g$. \Citet{Shioura17} noted that $\pb$ minimises~$g$ if and only if $g(\pb) \leq g(\pb + \eb^S)$ for every $S \subseteq [n]$. For any integral point $\pb \in \Z^n_+$ and $S \subseteq [n]$, let the \emph{slope function} $g'(\pb;S) \coloneqq g(\pb + \eb^S) - g(\pb)$ denote the amount by which $\pb$ decreases when moving in the direction of $\eb^S$. If $S$ minimises $g'(\pb;S)$, we call $\eb^S$ a \emph{steepest descent direction}. For any integral vector $\pb$, the $L^\natural$-convexity of $g$ implies that $g'(\pb;S)$ is an integral submodular function \citep[Theorem 7.2]{ST15} and hence there exists a unique component-wise \emph{minimal steepest descent direction}~$\eb^{S_0}$.

Let $\pb^*$ be the minimal minimiser of $g$. If $\pb$ is dominated by $\pb^*$ and we move in the minimal steepest descent direction, the point $\pb + \eb^{S_0}$ is also dominated by $\pb^*$. This suggests Algorithm~\ref{alg:minup} (called \textsc{GreedyUpMinimal} in \citep{Shioura17}), which iterates a point $\pb$ by moving by some step $\eb^{S_0}$, all the while remaining dominated by $\pb^*$, until $\pb = \pb^*$.

\begin{algorithm}[hbt!]
\caption{\textsc{MinUp}}
\label{alg:minup}
\begin{algorithmic}[1]
    \State Pick a point $\pb \leq \pb^*$ (e.g.~$\pb = \bf{0}$).
    \State\label{step:sfm}Find the inclusion-wise minimal set $S_0 \subseteq [n]$ minimising $g'(\pb;S)$.
    \If{$S_0 = \emptyset$}
      \State \Return $\pb$.
    \Else
      \State Set $\pb = \pb + \eb^{S_0}$ and go to line \ref{step:sfm}.
    \EndIf
\end{algorithmic}
\end{algorithm}

By the existence of the auctioneer's reserve bids, we can initialise $\pb$ to
$\mathbf{0}$. For this starting point, running time analysis by \citet{MS14}
implies that Algorithm~\ref{alg:minup} iterates exactly $\| \pb^* \|_\infty$ times. As we
know that $\pb^*$ is bounded from above by the component-wise maximum over all
bids~$\bids$, the number of iterations is at most $M \coloneqq \max_{\bid \in \bids}
\| \bid \|_\infty$. In each iteration of Step 2, we can determine $S_0$ using
the SFM subroutine described in Section~\ref{sec:prelims} that finds a minimal
minimiser of a given function in time $T(n)$. This leads to the following
running time for \textsc{MinUp}.

\begin{theorem}[cf.~\citep{MS14}]\label{thm:minup-running-time}
	The algorithm \textsc{MinUp} finds the component-wise minimal
market-clearing price in time $O(MnT(n))$, where $T(n)$ denotes the time it
takes to find a minimiser of $g'$.
\end{theorem}

We note that the running times of the two practical SFM algorithms mentioned in Section~\ref{sec:prelims} are given with respect to an upper bound on the
absolute value of the objective function. In order to provide such an upper
bound for our slope function $g'$, observe that the two points $\pb$ and $\pb +
\eb^S$ share a demanded bundle $\xb$, for any $\pb$ and $S \subseteq [n]$. As
every bid contributes at most one item to $\tb$ and $\xb$, this implies
$g'(\pb;S) = (\tb-\xb) \cdot \eb^S \leq |\bids|$.

\subsection{Longer step sizes}
The analysis by \citet{MS14} shows that \textsc{MinUp} performs optimally for
an iterative algorithm that is constrained to steps with an $L_\infty$-size of
at most~1. As described by \citet{Shioura17}, we can, however, exploit
monotonicity properties of the function $g'(\pb; S_0)$ in order to increase
step sizes without changing the trajectory of $\pb$ as the algorithm runs. This
reduces the number of SFM subroutine calls, the most expensive part of the
algorithm. In particular, if $\eb^{S_0}$ denotes the minimal steepest descent
direction at $\pb$, we take a single long step $\lambda \eb^{S_0}$ for some
$\lambda \in \Z_+$ that is equivalent to several consecutive steps of
\textsc{MinUp} in the same direction~$\eb^{S_0}$.

We follow \citep{Shioura17} in choosing step length
\begin{equation}
\label{eq:steplength}
\lambda(\pb, S_0) = \max \{ \lambda \in \Z_+ \mid g'(\pb;S_0) =
g'(\pb+(\lambda-1)\eb^{S_0};S_0) \},
\end{equation}
that is, the farthest distance we can move before the slope
\begin{equation*}
g'(\pb+(\lambda-1)\eb^{S_0};S_0) = g(\pb + \lambda \eb^{S_0}) - g(\pb +
(\lambda-1)\eb^{S_0})
\end{equation*}
in the direction of $\eb^{S_0}$ changes. We give two methods to compute \eqref{eq:steplength} in Section~\ref{subsubsec:step-length}. This leads to the following algorithm
(referred to as \textsc{GreedyUp-LS} in~\citep{Shioura17}).

\begin{algorithm}[hbt!]
\caption{\textsc{LongStepMinUp}}
\label{alg:longstepminup}
\begin{algorithmic}[1]
    \State Pick a point $\pb \leq \pb^*$ (e.g.~$\pb = \bf{0}$).
    \State\label{step:longstep-sfm}Compute inclusion-wise minimal minimiser $S_0
  \subseteq [n]$ of $g'(\pb;S_0)$ using SFM.
	 	\State Determine $\lambda(\pb,S_0)$, as defined by
  \eqref{eq:steplength}, using a method from Section~\ref{subsubsec:step-length}.
    \If{$S_0 \neq \emptyset$}
      \State Set $\pb = \pb + \lambda(\pb, S_0) \eb^{S_0}$ and go to line \ref{step:longstep-sfm}.
    \EndIf
	\State \Return $\pb$.
\end{algorithmic}
\end{algorithm}

Note that the values of $\pb$ follow the same trajectory for $\textsc{MinUp}$
and $\textsc{LongStepMinUp}$. This is an immediate consequence of
Lemma~\ref{lemma:same-trajectory}, which rests on the monotonicity properties
of $g'(\pb, S_0)$ stated in Proposition~\ref{prop:monotonicity}.

\begin{proposition}[{\citep{Shioura17}, Theorem 4.16}]
\label{prop:monotonicity}
Let $S_0$ and $S_0'$ be \emph{minimal} steepest descent directions at $\pb$ and
$\pb + \eb^{S_0}$, respectively. Then we have
	\begin{enumerate}
		\item \label{prop:monotonicity-one} $g'(\pb + \eb^{S_0};S'_0) >
g'(\pb;S_0)$ or
		\item \label{prop:monotonicity-two} $g'(\pb + \eb^{S_0};S'_0) =
g'(\pb;S_0)$ and $S_0 \subseteq S'_0$.
	\end{enumerate}
\end{proposition}

\begin{lemma}\label{lemma:same-trajectory}
	Let $S_0$ denote the minimal steepest descent at $\pb$ and let
$\lambda(\pb, S_0)$ be defined by \eqref{eq:steplength}. Then $\eb^{S_0}$ is
the minimal steepest descent at $\pb + (\lambda-1) \eb^{S_0}$ for any $1 \leq
\lambda \leq \lambda(\pb, S_0)$.
\end{lemma}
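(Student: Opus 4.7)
The plan is to prove this by induction on $\lambda$, with Proposition~\ref{prop:monotonicity} doing all the real work. The base case $\lambda = 1$ is immediate by hypothesis. For the inductive step, I would assume that $\eb^{S_0}$ is the minimal steepest descent direction at $\pb + (k-1)\eb^{S_0}$ for some $1 \le k < \lambda(\pb,S_0)$, and prove the corresponding claim at $\pb + k\eb^{S_0}$.

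Let $S_0'$ denote the inclusion-wise minimal steepest descent direction at $\pb + k\eb^{S_0}$, viewed as the one-step successor of $\pb + (k-1)\eb^{S_0}$. Applying Proposition~\ref{prop:monotonicity} with base point $\pb + (k-1)\eb^{S_0}$, we are in one of two cases: either $g'(\pb + k\eb^{S_0};S_0') > g'(\pb + (k-1)\eb^{S_0};S_0)$, or equality holds and $S_0 \subseteq S_0'$. The key leverage comes from the definition \eqref{eq:steplength} of $\lambda(\pb,S_0)$: since $k+1 \le \lambda(\pb,S_0)$, we have
\[
g'(\pb + k\eb^{S_0};S_0) \;=\; g'(\pb;S_0) \;=\; g'(\pb + (k-1)\eb^{S_0};S_0).
\]
Combining this with the first case gives $g'(\pb + k\eb^{S_0};S_0') > g'(\pb + k\eb^{S_0};S_0)$, contradicting the fact that $S_0'$ minimises $g'(\pb + k\eb^{S_0};\cdot)$. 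So the second case must hold: $S_0$ itself is a steepest descent direction at $\pb + k\eb^{S_0}$, and $S_0 \subseteq S_0'$.

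It remains to upgrade $S_0 \subseteq S_0'$ to equality, and here is where I expect the only real subtlety. Since $S_0'$ is the \emph{inclusion-wise minimal} minimiser of $g'(\pb + k\eb^{S_0};\cdot)$ and $S_0$ is also a minimiser of the same function (as just established), we must have $S_0' \subseteq S_0$. Together with the reverse inclusion from Proposition~\ref{prop:monotonicity}, this yields $S_0' = S_0$, completing the induction. The main obstacle is simply tracking which of the two alternatives in Proposition~\ref{prop:monotonicity} is ruled out, and being careful that the minimality of $S_0'$ (not just its being a minimiser) is what forces the coincidence of directions; everything else reduces to rewriting slopes using the stationarity guaranteed by \eqref{eq:steplength}.
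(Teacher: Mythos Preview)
Your proof is correct and is precisely the argument the paper has in mind: the paper does not spell out a proof of this lemma, merely noting that it ``rests on the monotonicity properties of $g'(\pb,S_0)$ stated in Proposition~\ref{prop:monotonicity}'', and your induction is the natural way to unpack that. The only step you use without comment is that $g'(\pb + (\lambda-1)\eb^{S_0};S_0) = g'(\pb;S_0)$ for \emph{all} $1 \le \lambda \le \lambda(\pb,S_0)$, not just for $\lambda = \lambda(\pb,S_0)$; this follows because $g$ is convex along the ray $\pb + \lambda\eb^{S_0}$ (from $L^\natural$-convexity), so the one-dimensional forward difference is non-decreasing in $\lambda$ and hence constant on the interval where its endpoints agree.
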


\citet{Shioura17} bounds the number of iterations of $\textsc{LongStepMinUp}$
as follows.
\begin{theorem}[{\citep{Shioura17}, Theorem 4.17}]
	The number of iterations of \textsc{LongStepMinUp} is at most $n \max
\{-g'(\bm{0};S) \mid S \subseteq [n] \}$.
\end{theorem}

Note that $\bf{0}$ and $\eb^{S_0}$ share a demanded bundle $\xb$, so for any $S
\subseteq [n]$ we have
\[
	g'({\bm{0}}; S) = g(\eb^S) - g(\bm{0}) = f_u(\eb^S) + \tb \cdot \eb^S -
f_u(\bm{0})
	= (\tb-\xb) \cdot \eb^S \geq - \sum_{i \in [n]} x_i \geq -|\bids|,
\]
as each bid contributes at most one unit to the demanded bundle. This implies
that the $\textsc{LongStepMinUp}$ algorithm takes at most $n|\bids|$ iterations to
find component-wise minimal equilibrium prices.

\subsubsection{Computing the step length}
\label{subsubsec:step-length}
Fix a price $\pb$ and let $\eb^{S_0}$ be the component-wise minimal steepest
descent direction at $\pb$. We describe two methods to compute the step length
defined by \eqref{eq:steplength}. The first method uses binary search and is
also suggested in \citep{Shioura17}, while second method exploits our knowledge
of the bids to determine $\lambda(\pb; S_0)$. Note that we can evaluate
$g'(\pb;S_0)$ in time $O(n|\bids|)$, as $g'(\pb;S_0) = g(\pb+\eb^{S_0}) -
g(\pb)$.

\begin{theorem}
	The \textsc{LongStepMinUp} algorithm in combination with the
\emph{binary search} and \emph{demand change} methods has a respective running
time of $O(n^2|\bids|^2\log M + n|\bids|T(n))$ and $O(n^2|\bids|^3 +
n|\bids|T(n))$.
\end{theorem}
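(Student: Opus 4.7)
The plan is to combine the iteration bound for \textsc{LongStepMinUp} already established earlier with per-iteration running time bounds for each of the two step-length computation methods. From the preceding discussion we know that \textsc{LongStepMinUp} terminates after at most $n|\bids|$ iterations, and each iteration performs two subroutine calls: (i) an SFM call that finds the inclusion-wise minimal minimiser $S_0$ of $g'(\pb; \cdot)$, which costs $T(n)$ time by assumption, and (ii) a computation of the step length $\lambda(\pb, S_0)$, whose cost differs between the two methods. Multiplying by the iteration bound will then yield the stated running times once the per-iteration step-length cost is established.

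For the binary search method, I would first observe that $\lambda(\pb, S_0) \leq M$: the algorithm's price trajectory is dominated by $\pb^*$, which in turn is bounded component-wise by $M = \max_{\bid \in \bids}\|\bid\|_\infty$, so at most $M$ unit steps in direction $\eb^{S_0}$ can be taken from $\pb$. Therefore binary search over $\{1, \ldots, M\}$ suffices and terminates in $O(\log M)$ probes. Each probe evaluates $g'(\pb + (\lambda-1)\eb^{S_0}; S_0) = g(\pb+\lambda\eb^{S_0}) - g(\pb+(\lambda-1)\eb^{S_0})$ and compares it against $g'(\pb;S_0)$; using formula~\eqref{eq:lyapunov}, each evaluation of $g$ is computed in $O(n|\bids|)$ time as a sum over bids. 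Hence the step-length cost per iteration is $O(n|\bids|\log M)$, and multiplying by $n|\bids|$ iterations gives the first bound.

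For the demand change method, the plan is to exploit the explicit bid-list structure of $g$. Observe that the one-variable map $\lambda \mapsto g'(\pb + \lambda\eb^{S_0}; S_0)$ is piecewise constant in $\lambda \in \Z_+$ and its breakpoints correspond exactly to the integer values of $\lambda$ at which some bid $\bid \in \bids$ changes its set of demanded goods as the price moves in direction $\eb^{S_0}$. For each bid I would compute the smallest $\lambda \geq 1$ at which such a transition can occur by comparing $\max_{i \in S_0}(b_i - p_i - \lambda)$ with $\max_{i \notin S_0}(b_i - p_i)$, which takes $O(n)$ time. Iterating this process over bids and updating the step length until the slope genuinely changes yields the $O(n|\bids|^2)$ bound; multiplying by the iteration count gives the second stated bound.

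The main obstacle is the demand change case: one must show that tracking transition points across all bids and updating the slope as $\lambda$ grows can be implemented in $O(n|\bids|^2)$ per iteration without having to repeatedly recompute the full slope from scratch. In particular, the argument has to ensure that whenever a transition point turns out not to change the aggregate slope (because of cancellation between positive and negative bids), the algorithm makes progress toward the next candidate breakpoint in amortised time $O(n|\bids|)$, rather than restarting the entire search. Once this implementation detail is handled, the overall running time follows immediately by multiplying iteration count and per-iteration cost.
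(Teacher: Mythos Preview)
Your proposal is correct and follows the same structure as the paper: multiply the $n|\bids|$ iteration bound by the per-iteration cost (SFM plus step-length computation), where binary search costs $O(n|\bids|\log M)$ and demand-change costs $O(n|\bids|^2)$. For the obstacle you flag in the demand-change case, the paper's resolution is clean and slightly different from your cancellation framing: the set $\mathcal{C}$ of bids whose demanded goods lie entirely inside $S_0$ \emph{strictly shrinks} at every inner step (once a bid starts demanding some good outside $S_0$ it never reverts as $\lambda$ increases), so there are at most $|\bids|$ inner steps, each costing $O(n|\bids|)$ to compute the next transition and re-evaluate $g'$. One small omission in your binary-search paragraph: you should invoke the monotonicity of $\lambda\mapsto g'(\pb+\lambda\eb^{S_0};S_0)$ (from Proposition~\ref{prop:monotonicity}/$L^\natural$-convexity) to justify that binary search is applicable.
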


A description of the two methods, as well as a proof of this theorem, is
provided below. Note that as the two methods have different running time
guarantees, the best method in practice is context-specific.

\paragraph{Binary search.}
Note that $\lambda(\pb; S_0)$ can be bounded by the total number of unit steps
in $\textsc{MinUp}$, which in turn is bounded by $M$. By
Proposition~\ref{prop:monotonicity}, we have that $g'(\pb;S_0) < g'(\pb+\lambda
\eb^{S_0}; S_0)$ implies $g'(\pb;S_0) < g'(\pb+\lambda' \eb^{S_0}; S_0)$ for
all $\lambda' > \lambda$. Hence we can apply binary search to find
$\lambda(\pb;S_0)$ in time $O(n|\bids|\log M)$.

\paragraph{Demand change.}
Alternatively, we can exploit our knowledge of the individual bids to determine
$\lambda(\pb;S_0)$. We proceed by performing a demand-change procedure, which
repeatedly determines the highest value $\mu$ for which every bid $\bid \in
\bids$ demands the same goods (or a superset thereof) at prices
$\pb+\mu\eb^{S_0}$ that it demands at $\pb + \eb^{S_0}$ and updates $\pb$ to
$\pb + \mu \eb^{S_0}$.

Fix a bid $\bid$ and let $I$ denote the goods it demands at $\pb$. If $I \not
\subseteq S_0$, then $\bid$ demands the same set of goods $I_\bid = I \setminus
S_0$ at all prices $\pb + \mu'\eb^{S_0}$ with $\mu' \geq 1$. On the other hand,
suppose $I \subseteq S_0$. We define $\mu_\bid \coloneqq \min_{j \in [n]_0 \setminus {S_0}} ((b_i
- p_i) - (b_j - p_j))$, where $i$ is any good in $I$, and consider the set of
goods that $\bid$ demands at prices $\pb + \mu' \eb^{S_0}$ with $\mu' \geq 1$.
If $1 \leq \mu' < \mu_{\bid}$, then $\bid$ demands $I$, if $\mu' = \mu_{\bid}$
then $\bid$ demands a superset of $I$ and if $\mu' > \mu_{\bid}$, then $\bid$
demands none of the goods in $I$.
We define $\mathcal{C}$ to be the list of bids that demand a subset of $S_0$ at
$\pb$, and let $\mu(\pb, S_0) \coloneqq \min_{\bid \in \mathcal{C}} \mu_\bid$. Note
that we can determine the value of $\mu(\pb, S_0)$ in time $O(n|\bids|)$.

The demand-change procedure takes as input a price $\pb^0 \coloneqq \pb$ and direction
$\eb^{S_0}$, and consists of the following steps. Initially, we set $\lambda$
to $0$. Compute $\mu(\pb, S_0)$, and increment $\lambda$ by $\mu(\pb, S_0)$. If
we have $\lambda = \lambda(\pb^0, S_0)$, return $\lambda$. Otherwise, increment
$\pb$ by $\mu \eb^{S_0}$ and repeat the above with the same value for
$\eb^{S_0}$.

Note that in order to check whether $\lambda = \lambda(\pb^0;S_0)$, we can
compute $g'(\pb^0+\mu\eb^{S_0};S_0)$ and verify that
$g'(\pb^0+\mu\eb^{S_0};S_0) = g'(\pb^0;S_0)$, which takes time $O(n|\bids|)$.
Lemma~\ref{lemma:demand-change} proves that the demand-change procedure
correctly computes the value of $\lambda(\pb;S_0)$ in time $O(n|\bids|^2)$.

\begin{lemma}\label{lemma:demand-change}
	The demand-change procedure returns $\lambda(\pb;S_0)$ in at most
$|\bids|$ iterations.
\end{lemma}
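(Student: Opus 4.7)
My approach is to analyze the slope function $\varphi(k) \coloneqq g'(\pb^0 + k\eb^{S_0}; S_0)$ as a function of the integer $k \ge 0$. Using the integrality of bids and prices, a routine calculation shows that each bid $\bid \in \mathcal{C}$ contributes $-w(\bid)$ to $\varphi(k)$ while $0 \le k < \mu_\bid$ and $0$ once $k \ge \mu_\bid$; bids outside $\mathcal{C}$ contribute $0$ throughout, because once they demand a good in $[n]_0 \setminus S_0$ they continue to do so as $k$ grows. Hence $\varphi$ is piecewise constant, its potential breakpoints lie in $\{\mu_\bid : \bid \in \mathcal{C}\}$, and $\lambda(\pb^0, S_0)$ equals the smallest $k > 0$ with $\varphi(k) \ne \varphi(0)$.

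Next I will verify that the procedure marches through these breakpoints in increasing order. Writing $\pb_i \coloneqq \pb^0 + \lambda_i \eb^{S_0}$ for the price at the start of iteration $i+1$, I will show that the set $\mathcal{C}$ recomputed at $\pb_i$ is precisely the subset of the original $\mathcal{C}$ whose original $\mu_\bid$ strictly exceeds $\lambda_i$, and that each such bid's new $\mu_\bid$ is its original one minus $\lambda_i$. Hence the quantity $\mu(\pb_i, S_0)$ that the procedure adds to $\lambda$ always advances $\lambda$ from one distinct original breakpoint to the next, and every bid achieving the minimum strictly drops out of $\mathcal{C}$. This yields the bound of at most $|\mathcal{C}| \le |\bids|$ iterations.

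Correctness then follows from the termination test: the procedure returns $\lambda$ as soon as $\varphi(\lambda) \ne \varphi(0)$, and since $\varphi$ is constant between consecutive original breakpoints, this first succeeds exactly when $\lambda = \lambda(\pb^0, S_0)$. The main subtlety I will need to treat carefully is weight cancellation: several bids sharing a common $\mu_\bid$ whose weights $w(\bid)$ sum to zero can leave $\varphi$ unchanged at that breakpoint, forcing the procedure past it. The iteration bound is nevertheless preserved, because $\mathcal{C}$ strictly shrinks at every step regardless of whether $\varphi$ moves, so no more than $|\bids|$ iterations can occur before $\varphi$ is forced to differ from $\varphi(0)$.
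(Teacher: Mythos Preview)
Your proposal is correct and follows essentially the same route as the paper: both arguments show that the set $\mathcal{C}$ of bids demanding only goods in $S_0$ strictly shrinks at every iteration (yielding the $|\bids|$ bound) and that the slope $g'(\cdot;S_0)$ is constant between the breakpoints the procedure visits (yielding correctness), with the paper phrasing the latter via a shared demanded bundle and you via an explicit per-bid decomposition of~$\varphi$. One small point you should make explicit is that identifying $\lambda(\pb^0,S_0)$ with the \emph{first} $k$ at which $\varphi(k)\neq\varphi(0)$ requires $\varphi$ to be non-decreasing; your decomposition alone does not give this (the weights $w(\bid)$ of bids sharing a common $\mu_\bid$ could in principle sum negatively), but it follows immediately from convexity of the Lyapunov function~$g$.
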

\begin{proof}
	Let $K$ denote the number of iterations in the demand-change procedure
and let $\pb^k, \mu^k$ and $C^k$ be the values of $\pb, \mu(\pb, S_0)$ and $C$
after the $k$-th iteration. For notational convenience, let $\pb^0 = \pb$ and
$\mu^0 = 0$.
	Proposition~\ref{prop:monotonicity} implies that it suffices to show
	\begin{equation}\label{eq:lambda-conditions}
		g'(\pb;S_0) = g'(\pb^K - \eb^S; S_0) \text{ and } g'(\pb;S_0) <
g'(\pb^K; S_0)
	\end{equation}
	in order to prove the first claim that $\lambda = \lambda(\pb,S_0)$.

	Firstly, note that there is a bundle $\xb$ that is demanded at
$\pb^{k-1}$ and $\pb^{k}$, as well as all prices in between these two points.
Indeed, if a bid $\bid$ demands good $i$ at $\pb^{k-1}+\eb^{S_0}$, then $\bid$
also demands~$i$ at the two prices $\pb^{k-1}$ and $\pb^k - \eb^{S_0} =
\pb^{k-1} + (\mu^{k}-1)\eb^{S_0}$, by construction of $\mu^k$. Hence, making
use of \eqref{eq:iuf}, we get
	\[g'(\pb^{k-1};S_0) = (\tb-\xb) \cdot \eb^{S_0} = g'(\pb^k - \eb^{S_0};S_0).
	\]
	Secondly, note that for every $0 \leq k < K$, we have $g'(\pb^k -
\eb^{S_0};S_0) = g'(\pb^k;S_0)$, and for the last iteration $K$ we have
$g'(\pb^K - \eb^{S_0};S_0) < g'(\pb^K;S_0)$. This implies
\eqref{eq:lambda-conditions}.

	Now we turn to the second claim, that $K \leq |\bids|$. This follows
from the fact that $C^{k-1} \supsetneq C^k$ and $|C^0| \leq |\bids|$. Indeed,
if $\bid \not \in C^{k-1}$, then $\bid$ demands goods $I \not \subseteq S_0$ at
$\pb^{k-1}$ and $I_\bid = I \setminus S_0$ at any prices $\pb^{k-1} + \mu
\eb^{S_0}$ for any $\mu \geq 1$, so $b \not \in C^k$. Now suppose $\bid \in
C^{k-1}$ is a bid for which $\mu^{k-1} = \mu_{\bid}$ (at $\pb^{k-1}$). Then we
claim that $\bid \not \in C^k$. Indeed, the demanded goods $I'$ of $\bid$ at
$\pb + (\mu_{\bid}+1)\eb^{S_0}$ satisfy $I' \not \subseteq S_0$ by construction
of $\mu_\bid$. By the same argument as above, $\bid$ demands goods not in $S_0$
at all prices $\pb + \mu \eb^{S_0}$ with $\mu \geq \mu_\bid +1 $.
\end{proof}

\subsection{Some practical improvements}
The computation time of \textsc{MinUp} and \textsc{LongStepMinUp} is dominated
by the task of finding a minimal set $S_0$ minimising $g'(\pb;S)$ using SFM. In
practice, we can exploit our direct access to the bids to speed up the
computation of $S_0$ by decreasing the dimensionality of the submodular
function to be minimised. The following observations can be seen as a special
case of observations by~\citet{Aus06}. Fix $\pb \in \Z^n_+$ and let $\eb^{S_0}$
be the minimal steepest descent direction at $\pb$. Note that $\pb$ and $\pb +
\eb^{S_0}$ share a demanded bundle $\xb$ due to the structure of our price
space. Hence
	\[
		g'(\pb; S_0) = g(\pb+\eb^{S_0}) - g(\pb) = (\tb-\xb) \cdot \eb^{S_0},
	\]
and $S_0$ minimises this term if and only if $S_0$ contains all indices $i \in
[n]$ with $t_i < x_i$ and no indices $j \in [n]$ with $t_j > x_j$. In
particular, we have $S_0 \coloneqq \{ i \in [n] \mid t_i < x_i \}$ due to the minimal
minimiser property of $S_0$.

If $\pb$ has a unique demanded bundle $\xb$, which is easy to verify, it is
straightforward to compute $\xb$. Hence, in this case we can determine the
minimal steepest descent direction $\eb^{S_0}$ without performing SFM. In the
case that $\pb$ has at least two demanded bundles, the demanded bundle $\xb$
shared by $\pb$ and $\pb + \eb^{S_0}$ is unknown. However, if we define index
sets $I_{\pb}$ and $J_{\pb}$ as
	\begin{align*}\label{eq:Ip-Jp-def}
		I_{\pb} &\coloneqq \{ i \in [n] \mid x_i > t_i, \forall \xb \in D(\pb)
\}, \\
		J_{\pb} &\coloneqq \{ i \in [n] \mid x_i \leq t_i, \forall \xb \in
D(\pb) \},
	\end{align*}
we have $I_{\pb} \subseteq S_0$ and $J_{\pb} \cap S_0 = \emptyset$.
(Note that the inequality in the definition of $I_{\pb}$ and $J_{\pb}$ is strict and not strict, respectively.)
Hence we can restrict ourselves to minimising the submodular function $h \colon
[n]\setminus (I_{\pb} \cup J_{\pb}) \to \Z$ defined by $h(T) = g'(\pb;T \cup
I_{\pb})$ and reduce the dimensionality of the SFM problem from $n$ to $n -
|I_{\pb} \cup J_{\pb}|$. The following lemma shows that computing $I_{\pb}$ and
$J_{\pb}$ is cheap.

\begin{lemma}\label{lemma:Ip-Jp}
	$I_{\pb}$ and $J_{\pb}$ can be computed in time $O(n|\bids|)$.
\end{lemma}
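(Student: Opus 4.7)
The plan is to use the fact that $D(\pb)$ is the discrete convex hull of bundles demanded at non-marginal prices arbitrarily close to $\pb$, which lets us replace the quantifier ``$\forall \xb \in D(\pb)$'' in the definitions of $I_\pb$ and $J_\pb$ with formulas for $\min_{\xb \in D(\pb)} x_i$ and $\max_{\xb \in D(\pb)} x_i$ that we can read directly off the bid list. For each bid $\bid \in \bids$, let $I(\bid) = \argmax_{k \in [n]_0}(b_k-p_k)$ denote the set of goods it demands at $\pb$; this can be computed in $O(n)$ time per bid by scanning the surplus values, so computing all $I(\bid)$ takes $O(n|\bids|)$ in total.

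I will then argue that
\[
\min_{\xb \in D(\pb)} x_i \;=\; \sum_{\bid \in \bids:\, I(\bid) = \{i\}} w(\bid), \qquad
\max_{\xb \in D(\pb)} x_i \;=\; \sum_{\bid \in \bids:\, i \in I(\bid)} w(\bid).
\]
For the max formula, consider the perturbed price $\pb - \varepsilon \eb^i$ for $\varepsilon > 0$ small enough that no new marginalities are introduced: every bid $\bid$ with $i \in I(\bid)$ then uniquely demands $i$, contributing $w(\bid)$ to the $i$-th coordinate, while every bid with $i \notin I(\bid)$ continues to be allocated away from $i$. Any residual marginalities can be broken by a second-order perturbation of the remaining coordinates, producing a non-marginal price whose unique demanded bundle lies in $D(\pb)$ and attains the claimed value. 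An analogous argument with $\pb + \varepsilon \eb^i$ yields the min formula: bids with $I(\bid) = \{i\}$ still uniquely demand $i$, whereas bids with $i \in I(\bid)$ and $|I(\bid)| > 1$ shift off $i$. Since $D(\pb)$ is the discrete convex hull of such non-marginal demanded bundles, its projection onto the $i$-th coordinate is exactly the integer interval $[\min x_i, \max x_i]$, which matches the two formulas above.

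Given these formulas, we have $i \in I_\pb$ iff $\sum_{\bid:\, I(\bid) = \{i\}} w(\bid) > t_i$, and $i \in J_\pb$ iff $\sum_{\bid:\, i \in I(\bid)} w(\bid) \leq t_i$. After we have all the $I(\bid)$ sets in hand, these two sums can be tabulated for every $i \in [n]$ in a single sweep over the bids: for each bid $\bid$ with $I(\bid) = \{i\}$, add $w(\bid)$ to an array entry for the min; for each $i \in I(\bid)$ (at most $n$ goods), add $w(\bid)$ to an array entry for the max. Comparing each entry with $t_i$ in $O(n)$ time then determines membership in $I_\pb$ and $J_\pb$. The entire procedure therefore runs in $O(n|\bids|)$ time.

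The main obstacle is justifying the two projection formulas rigorously from the ``discrete convex hull of nearby non-marginal demanded bundles'' definition of $D(\pb)$, in particular showing that the claimed extremal values are attained and that no integer points outside the claimed interval lie in $D(\pb)$. Everything else is a linear-time bookkeeping argument.
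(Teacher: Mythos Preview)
Your approach is essentially the paper's: rewrite $I_\pb$ and $J_\pb$ via the coordinate-wise extrema $\min_{\xb\in D(\pb)} x_i$ and $\max_{\xb\in D(\pb)} x_i$, express those extrema by the two sums over bids, and then tabulate in $O(n|\bids|)$. Your perturbation $\pb-\varepsilon\eb^i$ for the max (and $\pb+\varepsilon\eb^i$ for the min) is a harmless variant of the paper's $\pb'$ with $p'_j=p_j+\varepsilon$ for $j\neq i$; yours in fact handles bids marginal between $i$ and the reject good $0$ more cleanly.

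The gap is precisely the ``main obstacle'' you yourself flag and then leave open: you establish that your two formulas are \emph{attained} in $D(\pb)$, but you never show they are \emph{extremal}. With negative bids this is not automatic. For instance, a negative bid $\bid$ with $i\in I(\bid)$ is sent to $i$ under your max perturbation and contributes $-1$ to $x_i$; a priori some other perturbation that sends it off $i$ could yield a larger $x_i$. The sentence ``its projection onto the $i$-th coordinate is exactly the integer interval $[\min x_i,\max x_i]$'' is true but does nothing to pin down the endpoints.

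The paper closes this gap with a one-line appeal to the strong-substitutes property (Definition~\ref{def:subs}): since the perturbed price satisfies $\pb\leq \pb'$ with $p_i=p'_i$, any bundle in $D(\pb')$ has $i$-th coordinate at least that of any bundle in $D(\pb)$, so the value realised at $\pb'$ is indeed $\max_{\xb\in D(\pb)} x_i$. The min case is symmetric. Adding this invocation of SS (or the equivalent law-of-demand monotonicity in $p_i$) completes your argument.
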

\begin{proof}
	Note that $I_{\pb} = \{i \in [n] \mid \min_{\xb \in D(\pb)} x_i > t_i
\}$ and $J_{\pb} = \{i \in [n] \mid \max_{\xb \in D(\pb)} x_i \leq t_i \}$,
where we compute the minimum and maximum component-wise. Fix $i \in [n]$. The
minimum $\min_{x \in D(\pb)} x_i$ is attained if no marginal bid selects good
$i$. Hence
	\[
	\min x_i = \sum \{ w(\bid) \mid \bid \in \bids \text{ is non-marginal
and demands }i \}.
	\]

	Similarly, we claim that $\max_{\xb \in D(\pb)} x_i$ is attained if
good $i$ is selected whenever possible and thus $\max_{\xb \in D(\pb)} x_i$ is
the sum of the weights of the bids for which $i$ is demanded. Consider the
price perturbation $\pb'$ defined by $p'_i = p_i$ and $p'_j = p_j +
\varepsilon$ for $j \not = i$. Here $\varepsilon > 0$ is chosen sufficiently
small so that every bid that is marginal on $i$ at $\pb$ is non-marginal and
demands $i$ at $\pb'$, while every non-marginal bid demanding good $i$ at $\pb$
is also non-marginal and demands $i$ at $\pb'$. This perturbation corresponds
to our proposed rule to compute $\max x_i$ by selecting good $i$ whenever
possible.

Note that all bundles $\xb'$ at price $\pb'$ have the same number of
items of $i$ and are also demanded at~$\pb$. As $\pb \leq \pb'$ and $p_i =
p'_i$, the strong-substitutes property implies that there exists $\xb' \in
D(\pb')$ such that $x_i \leq x'_i$ for all $\xb \in D(\pb)$. In other words,
any demanded bundle $\xb'$ at $\pb'$ maximises~$x_i$.
\end{proof}

\section{Hardness of testing validity of unrestricted bid lists}
\label{sec:testing}
Here we show that the question of whether a list of bids is valid is
\conp-complete, even when we restrict the problem to unit bids (bids with weights $\pm 1$). We also present a simple algorithm verifying the validity of a given list of positive and negative unit bids that runs in polynomial time if the number of goods, or the number of negative bids, is bounded by a constant. Let $\bids$ denote a list of unit bids, and let $\posbids$ and $\negbids$ denote the positive and negative bids.

\subsection{Checking validity is \conp-complete}
\label{subsec:coNP-complete}
\begin{definition}
Given a list of positive and negative bids $\bids$, the problem \validbids\ is
to decide whether $\bids$ is valid.
\end{definition}

\begin{theorem}[Theorem~\ref{thm:np}]\label{thm:conp}
\validbids\ is \conp-complete.
\end{theorem}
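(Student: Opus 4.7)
To establish that \validbids{} is \conp-complete, the plan is to handle the two directions separately: first membership in \conp, then \conp-hardness via a reduction from \threecnf.

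For \conp{} membership I would use characterisation~\eqref{eq:valid-cover} of Theorem~\ref{thm:validity}: invalidity is witnessed by a triple $(\pb, i, i')$ at which the weights of bids marginal on $\{i,i'\}$ sum to a negative integer. The first step is to bound the bit-complexity of a minimal such $\pb$. For each pair $i,i'$, the set of prices at which a chosen subset $\bids_{i,i'} \subseteq \bids$ is exactly the set of bids marginal on $\{i,i'\}$ is the relative interior of a polyhedron cut out by the linear equalities $b_i - p_i = b_{i'} - p_{i'}$ for bids in $\bids_{i,i'}$ and strict inequalities for the remaining good-bid pairs. Standard LP arguments then give a rational $\pb$ with polynomial encoding whenever this polyhedron is non-empty. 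Guessing $(\pb,i,i')$ and verifying marginality and the weight sum is then polynomial.

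For \conp-hardness, the plan is to reduce \threecnf{} to the complement of \validbids, i.e.~to construct, from a 3-CNF formula $\phi$ on variables $x_1,\dots,x_n$ and clauses $C_1,\dots,C_m$, a bid list $\bids_\phi$ that is invalid iff $\phi$ is satisfiable. The idea is to use one good $g_i$ per variable (plus a few auxiliary goods and the reject good), and to encode a truth assignment by an integer price vector $\pb \in \{0,1\}^n$, with $p_i = 0$ meaning $x_i$ true. For each clause $C_j$ I introduce a cluster of positive ``literal bids'' each of which becomes marginal between a fixed pair of distinguished goods $(g^*, g^{**})$ exactly at those prices $\pb$ which satisfy at least one literal of $C_j$, and then add a single negative ``aggregator'' bid on $(g^*,g^{**})$ whose weight is calibrated so that the sum of weights of bids marginal on $(g^*,g^{**})$ is non-negative at every $\pb$ unless all clauses are simultaneously satisfied at $\pb$, in which case it is $-1$. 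Additional ``frame'' bids with very large or irrational-looking valuation entries are used to force the witness price to lie in $\{0,1\}^n$ (or its immediate perturbations), so that invalidity-witnessing prices correspond to genuine truth assignments.

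The main obstacle will be the gadget design: condition~\eqref{eq:valid-cover} is pairwise in the goods and must be triggered \emph{globally} by a satisfying assignment, so the contributions of individual clauses must combine additively on the same good pair $(g^*, g^{**})$ without interfering with one another, and no ``spurious'' price $\pb$ outside the set of truth assignments must accidentally make the marginal weight on $(g^*, g^{**})$ (or on any other pair) negative. I expect this will require careful choice of bid coordinates with sufficiently separated scales (so that the marginality conditions between $g^*$ and $g^{**}$ imposed by different clause-gadgets are independent), and a verification that on every pair of goods other than $(g^*,g^{**})$ the marginal-weight sum is non-negative everywhere in price space; this latter soundness check is what I anticipate to be the most delicate part of the proof.
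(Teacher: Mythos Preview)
Your \conp{} membership argument is fine and matches the paper's in spirit. For hardness, however, your sketch has both a sign error and a genuine gap. The accounting is inverted: if positive clause-bids are marginal on $(g^*,g^{**})$ when the clause \emph{is} satisfied, then satisfying more clauses makes the marginal-weight sum \emph{more} positive, not $-1$; the gadget can only work if each positive clause-bid is marginal exactly when its clause is \emph{not} satisfied, so that a single negative aggregator of weight $-1$ dominates precisely when every clause is satisfied. More seriously, the soundness check you flag as ``most delicate'' is the entire difficulty, and ``frame bids'' that pin the witness price to $\{0,1\}^n$ do not address it: the negative aggregator is inevitably marginal on \emph{other} good pairs as well (for instance on $(g^*,0)$ and $(g^{**},0)$ at suitable prices), and each such occurrence must be covered by a co-located positive bid that does not itself re-appear on $(g^*,g^{**})$. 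Nothing in your sketch supplies such bids, and without them the constructed list may be invalid for reasons unrelated to satisfiability of~$\phi$.

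The paper's proof circumvents this by two moves you do not have. It first reduces \threecnf{} to an intermediate \np-complete problem \mnd{} (is there a point dominating more vectors from a ``negative'' list than from a ``positive'' list?). It then reduces \mnd{} to non-validity by prepending one extra coordinate and exploiting marginality with the \emph{reject} good: a bid $\bid$ is marginal on $\{0,i\}$ at $\pb$ exactly when $b_i=p_i$ and $b_j\le p_j$ for all $j$, which is a pure domination condition, so the \mnd{} instance maps directly onto the sign of a single $H$-type region. Crucially, for every negative bid $(1,\vb)$ the paper adds two auxiliary positive bids $(0,\vb)$ and $(1,\vb)+\eb^{[n+1]}$, chosen so that they cancel the negative bid's contribution to every other $H$-region and every $F$-region while leaving the intended region untouched. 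This explicit neutralisation is exactly the idea your proposal is missing.
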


The proof of Theorem~\ref{thm:conp} uses an equivalent definition of validity
for the list of bids. Define regions in $\mathbb{R}^n_+$ that are generated by
a point $\pb$ and coordinates $i,j \in [n]$ as follows.
\begin{align}
	{H^p_i} &\coloneqq \{ \xb \in \mathbb{R}^n_+ \mid \xb \leq \pb, x_i = p_i \}
\label{eq:hod}\\
	F^p_{ij} &\coloneqq \{ \xb + \beta \eb^{[n]} \mid \beta \in \mathbb{R}_+, \xb
\leq \pb, x_i = p_i \text{ and }x_j = p_j. \} \label{eq:flange}
\end{align}

A bid $\bid = (b_1, \ldots, b_n; b_{n+1})$ is \textit{contained} in a region $H^p_i$ (or $F^p_{ij}$) if the `valuation' vector $(b_1, \ldots, b_n)$ consisting of the first $n$ components lies in it. Moreover, we say that $H^p_i$ or $F^p_{ij}$ is \emph{negative} for a list of bids if it
contains more negative than positive bids, otherwise it is \emph{non-negative}.

\begin{observation}\label{obs:containment-condition}
	For any $\xb, \yb \in \mathbb{R}^n_+$, we have $\xb \in F_{ij}^y$ if and only if $x_i-y_i = x_j-y_j$ and $\xb - (x_i-y_i) \mathbf{1} \leq \yb$.
\end{observation}

\begin{observation}\label{obs:transitive-inclusion}
	Fix $i,j \in [n], i \not = j$. Then containment in the regions given in
\eqref{eq:hod} and \eqref{eq:flange} is transitive in the sense that, for any
$\xb,\yb,\zb \in \mathbb{R}^n_+$,
	\begin{itemize}
		\item $\xb \in H^y_{i}$ and $\yb \in H^z_{i}$, implies $\xb \in
H^z_{i}$.
		\item $\xb \in F^y_{ij}$ and $\yb \in F^z_{ij}$, implies $\xb
\in F^z_{ij}$.
	\end{itemize}
\end{observation}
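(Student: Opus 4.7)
The plan is to prove each of the two transitivity claims directly from the definitions, exploiting the fact that both $H^p_i$ and $F^p_{ij}$ encode chain-like conditions (a coordinate equality plus a dominance constraint) that compose additively.

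For the first claim, I would simply unpack the definition of $H^p_i$ in \eqref{eq:hod}. From $\xb \in H^y_i$ we obtain $\xb \leq \yb$ and $x_i = y_i$; from $\yb \in H^z_i$ we obtain $\yb \leq \zb$ and $y_i = z_i$. Chaining the two componentwise inequalities gives $\xb \leq \zb$, and chaining the two scalar equalities gives $x_i = z_i$, so $\xb \in H^z_i$ directly.

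For the second claim, my plan is to invoke Observation~\ref{obs:containment-condition} to convert each $F$-membership into two simple conditions. Thus $\xb \in F^y_{ij}$ means $(x-y)_i = (x-y)_j$ and $\xb - (x-y)_i \mathbf{1} \leq \yb$, while $\yb \in F^z_{ij}$ means $(y-z)_i = (y-z)_j$ and $\yb - (y-z)_i \mathbf{1} \leq \zb$. The equality part for the composed statement follows by addition: writing $(x-z)_k = (x-y)_k + (y-z)_k$ for $k \in \{i,j\}$ and using both given equalities gives $(x-z)_i = (x-z)_j$, which I will denote by $\gamma$. For the inequality, I would compute
\[
\xb - \gamma \mathbf{1} \;=\; \bigl(\xb - (x-y)_i \mathbf{1}\bigr) - (y-z)_i \mathbf{1} \;\leq\; \yb - (y-z)_i \mathbf{1} \;\leq\; \zb,
\]
where the first inequality uses the $F^y_{ij}$ hypothesis and the second uses the $F^z_{ij}$ hypothesis. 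Applying Observation~\ref{obs:containment-condition} in the reverse direction then yields $\xb \in F^z_{ij}$.

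There is no real obstacle here; the only point worth flagging is that the proof genuinely depends on using the algebraic reformulation from Observation~\ref{obs:containment-condition} rather than the parametric definition in \eqref{eq:flange}. If one tried to argue from the parametric form, one would have to extract the appropriate $\beta$-values and verify that they add up correctly, which amounts to the same linear arithmetic but is notationally heavier. The reformulation makes transitivity fall out immediately from the additivity of the coordinate differences.
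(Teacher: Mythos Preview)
Your argument is correct. The paper states this as an observation without proof, so there is no reference argument to compare against; your direct unpacking of the definitions for $H^p_i$ and your use of Observation~\ref{obs:containment-condition} for $F^p_{ij}$ is precisely the intended routine verification.
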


The following definition of valid bids restates the second part of Theorem~{\ref{thm:validity}} in terms of regions {\eqref{eq:hod}} and {\eqref{eq:flange}}.

\begin{definition}[Valid bids]\label{def:valid-bids}
A list of positive and negative bids is \emph{valid} if, for any point $\pb \in \mathbb{R}^n_+$ and two coordinates $i,j \in [n]$, $H^p_i$ and $F^p_{ij}$ are non-negative.
\end{definition}

On the basis of Definition~\ref{def:valid-bids}, the proof of Theorem
\ref{thm:np} works by reducing the well-known NP-complete problem \threecnf\ to
\validbids\ by means of an intermediate NP-complete decision problem
\mnd, which we define in Definition~\ref{def:mnd}. The NP-completeness of \mnd\
is established in Theorem~\ref{def:mnd-np} and the reduction from \mnd\ to
\validbids\ is given below as the proof of Theorem~\ref{thm:np}. For any two
points $\xb,\yb \in \mathbb{R}^n$, we say that $\xb$ dominates $\yb$ if $x_i
\geq y_i$ for all $i \in \{1,\ldots, n \}$.

\begin{definition}\label{def:mnd}
	Given two lists $P$ and $N$ of vectors in $\mathbb{R}^n_+$, \mnd\ is
the problem of deciding whether there exists a vector $\zb \in \mathbb{R}^n_+$
that dominates more vectors in $N$ than in $P$.
\end{definition}

\begin{theorem}\label{def:mnd-np}
	The problem \mnd\ is NP-complete.
\end{theorem}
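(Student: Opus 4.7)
My plan is to establish NP-completeness of \mnd{} in the standard two-step fashion. For NP membership I would use $\zb$ itself as the witness, first observing that without loss of generality we may restrict to vectors $\zb \in \mathbb{R}^n_+$ whose coordinates appear among those of vectors in $P \cup N$, since replacing $z_i$ by the largest value $v_i \leq z_i$ over $v \in P \cup N$ preserves the dominated set. Hence $\zb$ has polynomial bit-size, and the two counts can be computed in time $O(n(|P|+|N|))$ to check strict inequality.

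For NP-hardness I plan a polynomial-time reduction from \threecnf. Given $\phi = C_1 \wedge \cdots \wedge C_m$ over variables $x_1, \ldots, x_n$, I build an \mnd{} instance in $\mathbb{R}^{2n}_+$ with dimensions $t_1, f_1, \ldots, t_n, f_n$; intuitively, coordinate $t_i$ being $\geq 1$ encodes $x_i = \true$ and $f_i \geq 1$ encodes $x_i = \false$. Let $a_{j,k}$ denote the dimension corresponding to the $k$-th literal of $C_j$ (so $a_{j,k} = t_i$ if the literal is $x_i$, and $a_{j,k} = f_i$ if it is $\neg x_i$). The construction adds: for each clause $C_j$, three unit vectors $\eb^{a_{j,k}}$ ($k=1,2,3$) and the triple-sum $\eb^{a_{j,1}}+\eb^{a_{j,2}}+\eb^{a_{j,3}}$ into $N$, and the three pair-sums $\eb^{a_{j,k}}+\eb^{a_{j,k'}}$ (for $k < k'$) into $P$; for each variable $x_i$, one consistency vector $\eb^{t_i}+\eb^{f_i}$ into $P$; and finally $m-1$ copies of $\mathbf{0}$ into $P$.

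The key observation driving correctness is an inclusion-exclusion calculation. Since all constructed vectors are $0/1$-valued, we may restrict $\zb$ to $\{0,1\}^{2n}$ and identify it with the set $S = \{d : z_d = 1\}$. If $S$ contains exactly $k$ of $\{a_{j,1}, a_{j,2}, a_{j,3}\}$, the net negative-minus-positive contribution from the seven vectors attached to $C_j$ is $k - \binom{k}{2} + \binom{k}{3}$, which evaluates to $0$ for $k=0$ and $1$ for $k \in \{1,2,3\}$. Hence the clause vectors contribute exactly the number of clauses satisfied by $S$ (in the sense that some literal-dimension of the clause lies in $S$). The $m-1$ zero vectors in $P$ are always dominated, contributing a fixed $-(m-1)$, and each variable $i$ with $\{t_i,f_i\}\subseteq S$ contributes an extra $-1$ from its consistency vector.

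Combining these, for consistent $S$ the gain equals $|\{j : C_j \text{ satisfied by } S\}| - (m-1)$, which is positive iff every clause is satisfied; extending such an $S$ to any full assignment then yields a satisfying assignment of $\phi$, and conversely a satisfying assignment encodes such an $S$. For inconsistent $S$ the extra consistency penalty of at least $1$ caps the gain at $m - (m-1) - 1 = 0$. Thus the \mnd{} instance is a yes-instance iff $\phi$ is satisfiable. The main technical obstacle, which this construction is designed to overcome, is engineering a clause gadget whose net contribution is \emph{exactly} $\mathbb{1}[\text{clause satisfied}]$ regardless of how many literals in the clause happen to be selected; the inclusion-exclusion combination of pair-positives and triple-negative is precisely what delivers this uniform contribution, allowing the global threshold $m-1$ to separate ``all clauses satisfied'' from ``at least one unsatisfied''.
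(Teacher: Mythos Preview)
Your proof is correct. Both your argument and the paper's reduce \threecnf{} to \mnd{} via per-clause gadgets whose net $N$-minus-$P$ contribution is exactly $\mathbb{1}[\text{clause satisfied}]$, together with $m-1$ copies of $\mathbf{0}$ in $P$ to set the threshold. The executions differ: the paper works in $\mathbb{R}^n_+$ with one coordinate per variable, encodes $x_i=\true$ by $z_i<1$, and handles negation by four case-specific gadgets depending on the sign pattern of the clause. Your dual-rail encoding in $\mathbb{R}^{2n}_+$ (coordinates $t_i,f_i$) lets you use a single uniform inclusion--exclusion gadget for every clause, at the price of the extra consistency vectors $\eb^{t_i}+\eb^{f_i}$ in $P$ to penalise selecting both rails of a variable. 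Your route avoids case analysis and makes the ``contribution equals indicator'' property transparent; the paper's route is more economical in dimension and total vector count. One small point worth stating explicitly: your gadget relies on the three literal dimensions $a_{j,1},a_{j,2},a_{j,3}$ being distinct so that all seven clause vectors are $0/1$-valued and the inclusion--exclusion count is as claimed; this is the standard WLOG assumption for \threecnf{} (and the paper makes the analogous assumption that the three variables in each clause are distinct).
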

\begin{proof}
	Note that verifying whether a given point $\zb \in \mathbb{R}^n_+$
dominates more vectors in $N$ than in $P$ can be done in polynomial time. This
establishes membership of \mnd\ in the class NP. We show completeness by
reducing from \threecnf. Recall that a boolean formula $\phi$ defined on $n$
variables $x_1, \ldots, x_n$ is 3-CNF if
	\[
		\phi = \bigwedge_{i=1}^m C_i,
	\]
	where $C_i = L^i_1 \vee L^i_2 \vee L^i_3$ are its clauses and $L^i_j
\in \{x_1, \ldots, x_n, \overline{x_1}, \ldots, \overline{x_n}\}$ are its
literals. Here $\overline{x_i}$ denotes the negation of $x_i$. Without loss of
generality, we assume that the three variables appearing in each clause are
distinct. Given such a 3-CNF formula $\phi$, first construct a pair of lists
$(P_i, N_i)$ of vectors in $\mathbb{R}^n_+$ for each clause $C_i$ as follows.
For notational convenience, $\eb^{a \ldots z}$ denotes the characteristic
vector $\eb^{\{a,\ldots, z\}}$ of $\{a,\ldots, z\}$.

	\begin{itemize}
		\item If $C_i = x_a \vee x_b \vee x_c$, let $N_i = \{
\eb^\emptyset \}$ and $P_i = \{ \eb^{abc}\}$.
		\item If $C_i = x_a \vee x_b \vee \overline{x_c}$, let $N_i =
\{ \eb^\emptyset, \eb^{abc} \}$ and $P_i = \{ \eb^{ab} \}$.
		\item If $C_i = x_a \vee \overline{x_b} \vee \overline{x_c}$,
let $N_i = \{ \eb^\emptyset, \eb^{ab}, \eb^{ac} \}$ and $P_i = \{ \eb^{a},
\eb^{abc} \}$.
		\item If $C_i = \overline{x_a} \vee \overline{x_b} \vee
\overline{x_c}$, let $N_i = \{\eb^{a}, \eb^{b}, \eb^{c}, \eb^{abc} \}$ and $P_i
= \{ \eb^{ab}, \eb^{ac}, \eb^{bc} \}$.
	\end{itemize}

	Finally, let $N$ be the sum of the $N_i$ and let $P$ be the sum of the
$P_i$ together with $m-1$ copies of $\eb^\emptyset$. This completes our
reduction from \threecnf\ to \mnd. Clearly, $P$ and $N$ can be constructed from
$\phi$ in polynomial time. To show correctness of the reduction, associate with
every vector $\zb \in \mathbb{R}^n_+$ a truth assignment $\beta_z$ (on boolean
variables $x_1, \ldots, x_n$) that sets $x_i$ to \true\ if $z_i < 1$ and
\false\ otherwise.

	\begin{observation}\label{obs:z-x-correspondence}
		Let $\zb \in \mathbb{R}^n_+$ be a point and $\beta_z$ be its
associated truth assignment. For any $i \in \{1, \ldots, m\}$, $\zb$ dominates
one more point in $N_i$ than in $P_i$ if $\beta_z$ satisfies $C_i$ and an equal
number of points in $N_i$ and $P_i$ if $\beta_z$ does not satisfy $C_i$.
	\end{observation}

	Suppose $\phi$ is a satisfiable 3-CNF formula with satisfying truth
assignment $\beta$. Then define $\zb \in \mathbb{R}^n_+$ by
	\[
		z_i =
		\begin{cases}
		0 & \text{ if }\beta[x_i] = \true, \\
		1 & \text{ else.}
		\end{cases}
	\]
	Hence $\beta$ is the truth assignment associated with $z$. By
Observation~\ref{obs:z-x-correspondence}, $\zb$ dominates one more point in
$N_i$ than in $P_i$ for each $i$, so due to the additional $m-1$ origin points
$\eb^\emptyset$ added to $P$, $\zb$ dominates exactly one more point in $N$
than in $P$.
Conversely, suppose there exists a point $\zb \in \mathbb{R}^n_+$ that
dominates more points in $N$ than in $P$, and let $\beta_z$ be its
corresponding truth assignment. Then $\zb$ dominates at least $m$ more points
in $N$ than in $\sum_{i=1}^m P_i$. By Observation~\ref{obs:z-x-correspondence},
we know that $\zb$ dominates at most one more point in $N_i$ than in $P_i$,
which implies that this is the case for each $i$. Hence by the same
observation, $\beta_z$ satisfies all clauses of $\phi$ and thus $\phi$
itself.
\end{proof}

\begin{proof}[Proof of Theorem~\ref{thm:np}]
	In order to show that a list of bids $\bids$ is not valid, it suffices
to provide a certificate in the form of a point $\pb$ and coordinates $i,j \in
[n]$. We can verify in polynomial time whether ${H_i^p}$ or ${F_{ij}^p}$ is
negative. This establishes membership of \validbids\ in \conp.

	Let $(N,P)$ be an instance of \mnd\ and $n$ be the dimension of its
vectors. We construct an $(n+1)$-dimensional instance $\bids$ of \validbids\
such that $(N,P) \in$ \mnd\ if and only if $\bids \not \in$ \validbids\ as
follows. Let the negative and positive bids of $\bids$ be given by prepending a component to the vectors as follows.
	\begin{align*}
		\negbids & \coloneqq \{ (1,\vb) \mid \vb \in N \}\\
		\text{ and } \posbids & \coloneqq \{ (1,\wb) \mid \wb \in P \} + \{ (0,\vb) \mid \vb
\in N \} + \{ (1,\vb) + \eb^{[n+1]} \mid \vb \in N \}.
	\end{align*}

Clearly, $\bids = \posbids + \negbids$ can be constructed efficiently.
Note that all $F^x_{ij}$ and all $H^x_i$ with $i \not = 0$ are non-negative
by construction. Indeed, fix ${F^x_{ij}}$ and suppose it contains $r$ negative
bids. Then for each such bid $(1,\vb)$, where $\vb \in N$, there exists a
positive bid $(1,\vb)+\eb^{[n+1]}$ that is contained in $F^x_{ij}$, so that the region is non-negative. Analogously, for any $H^x_i$ with $i \not = 0$, each negative
bid $(1,\vb)$ has a corresponding positive bid $(0,\vb)$. Further, as $H_{0}^x$ contain no negative points unless $x_{0} = 1$,
the correctness of our reduction thus follows from the fact that $\zb$
dominates $r$ points from $N$ and $s$ points from $P$ if and only if
$H^x_{0}$ with $\xb = (1,\zb)$ contains $r$ negative and $s$ positive
bids.
\end{proof}

\subsection{A simple algorithm for testing validity}
In Section~\ref{subsec:coNP-complete}, we show that checking the validity of a bid list in the general case in \conp-complete. Hence we cannot expect an efficient algorithm for the task of checking validity of a list of bids in the general case. Here we present a simple algorithm that tests validity of a given list of bids in polynomial time if the number of goods or the number of negative bids is bounded by a constant. Such constraints may be reasonable in certain economic settings.

\subsubsection{The algorithm}
Our procedure rests on Theorem~\ref{thm:finite-valid-bids}, which reduces the validity condition from Definition~\ref{def:valid-bids} to a finite number of checks. Lemma~\ref{lemma:small-subsets} reduces the number of checks further. Together, the two results immediately yield Algorithm~\ref{alg:check-validity}, whose running time is given in Theorem~\ref{thm:list-validity-running-time}.
For any list of bids $U$, define the minimal dominating vector $md(U)$ of $U$ as the component-wise maximum over the valuation vectors of the bids in $U$, so that $md(U)_i = \max_{\bid \in U} b_i$ for all $i \in [n]$.
We note that if all bids of $U$ agree on some coordinate $i$ (that is, if $b_i = b'_i$ for all $\bid,\bid' \in U$), they lie in $H_i^x$ for large enough $\xb$, and $md(U)$ is the minimal such $\xb$ so that $H_i^x$ contains all points in $U$.
Similarly, for any list of bids $U$ and $i \in [n]$, define $mdF(i,U)$ as
$ mdF(i,U) \coloneqq \min_{\bid \in U} b_i \mathbf{1} + md(\{ \bid -b_i \mathbf{1} \mid \bid \in U \})$. We note that if, for some $i \not = j$, the set $U$ satisfies $b_i-b_j = b'_i-b'_j$ for all $\bid,\bid' \in U$, then the bids in $U$ lie in $F^x_{ij}$ for small enough $\xb$, and it follows from Observation~\ref{obs:containment-condition} that $mdF(i,U) = mdF(j,U)$ is the maximal such $\xb$.

\begin{theorem}\label{thm:finite-valid-bids}
	A list of bids $\bids$ is valid if and only if the following two conditions hold.
	\begin{enumerate}
		\item For every set $U \subseteq \negbids$ of negative bids that agree on the $i$-th coordinate, that is $b_i = b'_i\ \forall b,b' \in U$, the region $H_i^{md(U)}$ is non-negative.
		\item For every set $U \subseteq \negbids$ of negative bids that satisfies $b_i-b'_i = b_j-b'_j$ for all $b,b' \in U$ and some $i \not = j$, the region $F_{ij}^{mdF(i,U)}$ is non-negative.
	\end{enumerate}
\end{theorem}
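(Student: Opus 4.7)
The forward direction is immediate: validity of $\bids$ by Definition~\ref{def:valid-bids} means no region $H_i^p$ or $F_{ij}^p$ is negative, so in particular the specific regions arising in conditions 1 and 2 are non-negative.

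For the backward direction, the plan is a reduction from arbitrary regions to ``extremal'' ones spanned by their negative bids. Fix any $\pb \in \mathbb{R}^n_+$ and index $i$ (resp.\ pair $i \neq j$), and let $U$ be the set of negative bids lying in $H_i^{\pb}$ (resp.\ $F_{ij}^{\pb}$). If $U = \emptyset$ the region is trivially non-negative. Otherwise, all bids in $U$ agree on coordinate $i$ at the value $p_i$ (resp.\ share a common difference $b_i - b_j = p_i - p_j$), so $U$ satisfies the hypothesis of condition 1 (resp.\ condition 2).

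The main step is to show that the extremal region $H_i^{md(U)}$ (resp.\ $F_{ij}^{mdF(i,U)}$) is contained in $H_i^{\pb}$ (resp.\ $F_{ij}^{\pb}$) while still containing $U$. For $H_i$ this is immediate: componentwise $md(U) \leq \pb$ since every $b \in U$ satisfies $b \leq \pb$, and $md(U)_i = p_i$ since all $b \in U$ have $b_i = p_i$. For $F_{ij}$ the verification will invoke Observation~\ref{obs:containment-condition} together with the identities $mdF(i,U)_i = \min_{b \in U} b_i$, $mdF(i,U)_k - mdF(i,U)_i = \max_{b \in U}(b_k - b_i)$, and $mdF(i,U)_i - mdF(i,U)_j = p_i - p_j$ (the latter using that all $b \in U$ share the same value of $b_i - b_j$). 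Combined with the bound $b_k - b_i \leq p_k - p_i$, which holds for every $b \in U$ because $b = \xb + \beta\mathbf{1}$ with $\xb \leq \pb$ and $x_i = p_i$, these yield $U \subseteq F_{ij}^{mdF(i,U)} \subseteq F_{ij}^{\pb}$.

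Once the inclusion is established, the sets of negative bids in the extremal and original regions coincide: any negative bid in the smaller region lies in $H_i^{\pb}$ (resp.\ $F_{ij}^{\pb}$), hence in $U$, and conversely $U$ is contained in the extremal region. Invoking condition 1 or 2 then yields at least $|U|$ positive bids in the extremal region, all of which also lie in the original region. Hence every $H_i^{\pb}$ and $F_{ij}^{\pb}$ is non-negative, establishing validity. I expect the main obstacle to be the careful arithmetic verifying $F_{ij}^{mdF(i,U)} \subseteq F_{ij}^{\pb}$ via Observation~\ref{obs:containment-condition}; the $H_i$ case and the final counting argument are routine.
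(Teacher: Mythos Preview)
Your proposal is correct and follows essentially the same approach as the paper. Both arguments let $U$ be the negative bids in an arbitrary region $H_i^{\pb}$ or $F_{ij}^{\pb}$, form the extremal region $H_i^{md(U)}$ or $F_{ij}^{mdF(i,U)}$, verify the nesting $U \subseteq \text{extremal region} \subseteq \text{original region}$, and then invoke the hypothesis; the paper phrases the second inclusion via Observation~\ref{obs:transitive-inclusion} (showing $md(U)\in H_i^{\pb}$ and $mdF(i,U)\in F_{ij}^{\pb}$) rather than as a set inclusion, but this is the same computation you outline.
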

\begin{proof}
The implication is immediate by Definition~\ref{def:valid-bids}.
Conversely, suppose $\bids$ satisfies the second condition. First we show that
condition 1 of Definition~\ref{def:valid-bids} is satisfied. Fix $H^x_i$, let
$U = \negbids \cap H^x_i$ be the set of negative bids in $H^x_i$ and $\yb =
md(U)$. Then by construction, we have $\bid \in H^y_i$ for all $\bid \in U$ and
$\yb \in H^x_i$. As $H^y_i$ has $|U|$ negative bids, it also contains at least
$|U|$ positive bids by assumption and by Observation
\ref{obs:transitive-inclusion}, these positive bids are also in ${H^x_i}$.
Condition 2 is shown analogously. Fix ${F^x_{ij}}$, let $U$ be the negative
bids in this region and $\zb = mdF(U)$. Suppose $\bid \in {F^z_{ij}}$ for
every $\bid \in U$ and $\zb \in {F^x_{ij}}$. Then
Observation~\ref{obs:transitive-inclusion} implies that ${F^x_{ij}}$ is non-
negative. To see that $\bid \in {F^z_{ij}}$ for every $\bid \in U$, we verify
the conditions of Observation~\ref{obs:containment-condition}. Firstly,
\[
	z_i - z_j = \min_{\bid \in U} b_i - (\min_{\bid \in U} b_i + \max_{\bid \in U}(b_j - b_i)) = \max_{\bid \in U}(b_j - b_i) = b_j - b_i
\]
for all $\bid \in U$, as the difference $b_j - b_i$ is the same for all $\bid \in {F_{ij}^z}$. Secondly, for any $k \in [n]$ and $\bid \in U$, we have
\[
	(\bid - (b_i-z_i) \mathbf{1})_k = z_i + b_k - b_i \leq z_i + \max_{\bid \in U} (b_k - b_i) = z_k.
\]
We can verify $\zb \in F_{ij}^x$ similarly.
\end{proof}

\begin{lemma}\label{lemma:small-subsets}
	For any list of negative bids $U \subseteq \negbids$ there exists a list $U' \subseteq U$ with $|U'| \leq n$ so that $md(U') = md(U)$. Moreover, for each coordinate $i \in [n]$, there exists a list $U'' \subseteq U$ with $|U''| \leq n+1$ so that $mdF(i, U'') = mdF(i, U)$.
\end{lemma}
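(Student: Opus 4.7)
The plan is to prove both claims by choosing a small witness subset that attains each of the (at most $n$ or $n+1$) coordinate-wise extrema defining $md$ and $mdF$.

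For the first claim, recall that $md(U)$ is defined coordinate-wise by $md(U)_k = \max_{\bid \in U} b_k$ for $k \in [n]$. For each coordinate $k$, I would fix some bid $\bid^{(k)} \in U$ that attains this maximum, i.e.~$b^{(k)}_k = md(U)_k$. Setting $U' \coloneqq \{\bid^{(1)}, \ldots, \bid^{(n)}\}$ gives $|U'| \leq n$ (the bids need not be distinct). Since $U' \subseteq U$, we clearly have $md(U') \leq md(U)$ coordinate-wise; the reverse inequality holds because $\bid^{(k)} \in U'$ witnesses the $k$-th coordinate of $md(U)$.

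For the second claim, I would unpack the definition $mdF(i, U) = \min_{\bid \in U} b_i \cdot \mathbf{1} + md\bigl(\{\bid - b_i \mathbf{1} \mid \bid \in U\}\bigr)$ into its two components. First, fix $\bid^* \in U$ attaining $b^*_i = \min_{\bid \in U} b_i$. Second, for each $k \in [n]$, fix a bid $\bid^{(k)} \in U$ that attains $\max_{\bid \in U}(b_k - b_i)$ (this is the $k$-th coordinate of the maximum-dominating vector of the translations). Let $U'' \coloneqq \{\bid^*\} \cup \{\bid^{(1)}, \ldots, \bid^{(n)}\}$, so $|U''| \leq n+1$. Then $\min_{\bid \in U''} b_i = b^*_i = \min_{\bid \in U} b_i$, and for every $k \in [n]$ one has $\max_{\bid \in U''}(b_k - b_i) = \max_{\bid \in U}(b_k - b_i)$ by the same $U'' \subseteq U$ argument as above, together with the fact that $\bid^{(k)} \in U''$ attains the maximum over $U$. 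Adding these equalities together coordinate-wise yields $mdF(i, U'') = mdF(i, U)$.

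There is no serious obstacle; the only care required is to notice that the translated-max expression in the second part depends on $n$ coordinate-wise maxima plus a single scalar minimum, explaining the $n+1$ rather than $n$ bound, and to verify that selecting witnesses independently for each of these extrema still yields the correct joint value (which holds because $md$ and $\min$ are purely coordinate-wise operations).
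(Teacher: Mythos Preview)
Your proof is correct and follows essentially the same approach as the paper: for each coordinate-wise extremum appearing in the definition of $md$ (respectively $mdF$), pick one witnessing bid, and collect these into $U'$ (respectively $U''$). Your write-up is in fact more explicit than the paper's, which simply names the witnesses and concludes ``by construction.''
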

\begin{proof}
Let $U'$ be a list of bids $\vec{u}^1, \ldots, \vec{u}^n$, where $\vec{u}^k$ is a bid from $U$ that maximises the $k$-th component, i.e.~$\vec{u}^k \in \argmax_{\vec{v} \in U} v_k$. Then $md(U') = md(U)$ by construction. Secondly, fix $i \in [n]$ and let $U''$ be a list of bids
$\vec{u}^1, \ldots, \vec{u}^{n+1}$, where $\vec{u}^1 \in \argmin_{\bid \in U} b_i$ and $\vec{u}^k \in \argmax_{\bid \in U} (b_k-b_i)$ for $k \in \{2, \ldots, n+1\}$. Then $mdF(U'') = mdF(U)$ by construction and we are done.
\end{proof}

\begin{algorithm}[htb]
	\caption{Checking the validity of a bid list $\bids$}
	\label{alg:check-validity}
	\begin{algorithmic}[1]
		\ForAll {subsets $U \subseteq \bids^-$ of negative bids with $|U| \leq n+1$}
			\ForAll {$i,j \in [n]$}
				\State Verify the two conditions given in Theorem~\ref{thm:finite-valid-bids}.
			\EndFor
		\EndFor
	\end{algorithmic}
\end{algorithm}

\begin{theorem}\label{thm:list-validity-running-time}
	Algorithm~\ref{alg:check-validity} runs in time $O \left ( {\binom{|\negbids|}{n}} n^3 |\bids| \right)$. Moreover, if $|\negbids| \leq k$ or $n \leq k$, the algorithm's running time is polynomial in the input size $n|\bids|$.
\end{theorem}
\begin{proof}
There are $\binom{|\negbids|}{n+1}$ subsets $U$ of $\negbids$ for the algorithm to iterate over, while there are $n^2$ possibilities for the index pair $i,j$. For each $U$ and $i,j$, checking the conditions in Theorem~\ref{thm:finite-valid-bids} takes $n|\bids|$ time. This implies the running time.
Suppose the number of negative bids $|\negbids| \leq k$ is bounded by some $k \in \mathbb{N}$. Then ${\binom{|\negbids|}{n+1}}$ is constant and the running time of our algorithm is $O(n^3 (|V| + |W|))$. Secondly, if the number of goods $n \leq k$ is bounded, we have ${\binom{|\negbids|}{n+1}} \leq |\negbids|^{k+1}$ and hence a running time of $O(|\negbids|^{k+1}|\bids|)$.
\end{proof}

\end{document}